\journal{Theoretical Computer Science}
\newcommand{\CNA}{\textsf{CNA}}
\newcommand{\link}[2]{\ensuremath{{}^{#1} \backslash_{#2}}}
\newcommand{\chainedlink}[2]{\ensuremath{\backslash_{#1}^{#2}}}
\newcommand{\startchain}[1]{\ensuremath{{}^{#1}}}
\newcommand{\chainend}[1]{\ensuremath{\backslash_{#1}}}
\newcommand{\silent}{\tau}
\newcommand{\noact}{\scriptstyle {\square}}
\newcommand{\noacts}{\scriptstyle \square}
\newcommand{\merges}[2]{\ensuremath{#1 \bullet   #2}}
\newcommand{\restrict}[1]{\ensuremath{(\nu\, #1)}}
\newcommand{\nil}{\mathbf{0}}
\newcommand{\subst}[2]{#1/#2}
\newcommand{\reduce}[1]{\ensuremath{\mathsf{e}(#1)}}
\newcommand{\stretcheq}{\scriptsize \mathrel \vartriangleright \joinrel \mathrel \vartriangleleft}
\newcommand{\blackstretcheq}{\mathrel{\scriptsize {\mathrel \blacktriangleright \joinrel \mathrel \blacktriangleleft}}}
\newcommand{\simnet}{\ensuremath{\stackrel{\stretcheq}{\sim}}}
\newcommand{\simbnet}{\ensuremath{\stackrel{\blackstretcheq}{\sim}}}
\newcommand{\prodsep}{\;\mid\;}
\newcommand{\defeq}{\triangleq}
\newcommand{\setof}[1]{\{\, #1 \,\}}
\newtheorem{theorem}{Theorem}
\newtheorem{lemma}[theorem]{Lemma}
\newtheorem{proposition}[theorem]{Proposition}
\newtheorem{corollary}[theorem]{Corollary}
\newdefinition{definition}[theorem]{Definition}
\newdefinition{remark}[theorem]{Remark}
\newdefinition{example}[theorem]{Example}
\begin{document}

\begin{frontmatter}

\title{A Formal Approach to Open Multiparty Interactions
\tnoteref{mytitlenote}}
\tnotetext[mytitlenote]{Research partially
    supported by  
    the Italian MIUR Project CINA (PRIN 2010LHT4KM), and
    by Universit\`a di Pisa PRA\_2016\_64 Project \emph{Through the fog}
    and PRA\_2018\_66  \emph{DECLWARE: Metodologie dichiarative per la progettazione e il deployment di applicazioni}.}

\author[unipi]{Chiara Bodei}
\ead{chiara@di.unipi.it}

\author[uniss]{Linda Brodo}
\ead{brodo@uniss.it}

\author[unipi]{Roberto Bruni\corref{cor}}
\ead{bruni@di.unipi.it}

\address[unipi]{Dipartimento di Informatica, Universit\`{a} di Pisa, Italy}
\address[uniss]{Dipartimento Pol.Com.Ing.,   Universit\`a di Sassari, Italy}

\cortext[cor]{Corresponding Author}


\begin{abstract}
We present a process algebra aimed at describing interactions that are
\emph{multiparty}, i.e.~that may involve more than two processes and that are \emph{open}, i.e.\ the number of the processes they involve is not fixed or known a priori.
Here we focus on the theory of a core version of a process calculus, without message passing, called Core Network Algebra (\CNA).
In \CNA\ communication actions are given not in terms of channels but in terms of chains of links that record the source and the target ends of each hop of interactions. 
The operational semantics of our calculus mildly extends the one of CCS.
The abstract semantics is given in the style of bisimulation but requires some ingenuity.
Remarkably, the abstract semantics is a congruence for all operators of \CNA\ and also with respect to substitutions, which is not the case for strong bisimilarity in CCS.
As a motivating and running example, we illustrate the model of a simple software defined network infrastructure.
\end{abstract}

\begin{keyword}
CCS\sep CNA\sep open interaction\sep multi-party interaction 
\end{keyword}

\end{frontmatter}



\section{Introduction}

An \emph{interaction} 
is a way in which communicating processes can influence one another.
Interactions in the time of the World Wide Web and of the Internet of Things (IoT) are something more than input and output between two entities. 
Actually, the word itself can be misleading, by suggesting a reciprocal or mutual kind of actions.
Instead, interactions more and more often involve many parties, and actions are difficult to classify under output and input primitives.
This is a common situation when, e.g.\ a client interacts with a website that in turn invokes some services
from other websites. At a certain level of abstraction it is important to know which are the involved
 services, while it is not important how they are contacted. This practice follows
the ``separation of concern'' modelling style, where the modeller is not interested
in the details of ``how'' (with how many synchronisations, for example) the interaction takes place
as long as a specific phase of the overall procedure is concluded with success.
Intuitively, we can imagine an interaction as the
composition of a jigsaw puzzle: all partners provide different pieces that fit together to complete the picture.

Networks have become part of the critical infrastructure of our daily activities (for business, home, social, health, government, etc.) and a large variety of loosely coupled processes have been offered over global networks, as services.
As a consequence, more sophisticated forms of interactions have become common, for which convenient formal abstractions are under investigation. 
In this regard, one important trend in networking is moving towards architectures where the infrastructure itself can be manipulated by the software, as in the Software Defined Networking (SDN) approach~\cite{DBLP:journals/pieee/KreutzRVRAU15}. Software clients can remotely access and modify the control plane, by using standard open protocols such as OpenFlow.\footnote{See, e.g.\ the Open Networking Foundation website \url{http://www.opennetworking.org}.}
In this case, it is therefore possible to decouple the network control from data-flow and from the network topology and to provide Infrastructure as a Service (IaaS) over data-centers, cloud systems and IoT. 

Another example, coming from a completely different research field, is that of complex biological interactions as the ones emerging in bio-computing and membrane systems, where interactions typically involve several compounds and catalysts.

As a consequence, from a foundational point of view, it is strategic to provide the convenient formal abstractions and models to naturally capture these new communication patterns, by going beyond the ordinary binary form of communication, here called dyadic.
These models should be sufficiently expressive to faithfully describe the complex phenomena,
but they have also to provide a basis for the formal analysis of such systems, by offering sufficient mathematical structure and suitable abstraction mechanisms for
tractability. 

We present here a process algebra, called \CNA, which takes interaction as its basic ingredient.
The described interactions are
\emph{multiparty}, i.e.\ they may involve more than two processes and are \emph{open}, i.e.\ the number of the processes they involve is not fixed or known {a priori}.
This is not to be confused with multiparty interactions represented as a global choreography~\cite{DBLP:conf/popl/HondaYC08,DBLP:journals/csur/HuttelLVCCDMPRT16}, whose realisation is still based on dyadic interactions.
Traditionally in process algebras, communication is based on synchronisation send/receive on specific channels.
In \CNA, instead,
communication actions are given not in terms of channels but in terms of links that record the source and the target ends of each hop of interactions. 
Links can be indeed combined in link chains in order to describe how information can be routed across processes before arriving at destination.
Note that links can be combined if they are to some extent ``complementary", i.e.\ if each process contributes with links that are compatible, if not necessary, with the chain of links provided by the other processes.
According to the
puzzle analogy, different parts of a chain can be composed separately, and,
afterwards, assembled by superposition without overlays.

\begin{figure}[t]
\begin{minipage}[b]{.5\linewidth}
\centering\includegraphics[scale=0.3]{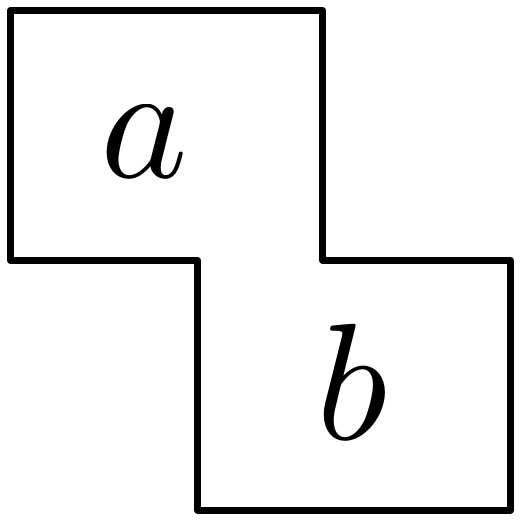}
\subcaption{A link}\label{fig:zlinka}
\end{minipage}%
\begin{minipage}[b]{.5\linewidth}
\centering\includegraphics[scale=0.3]{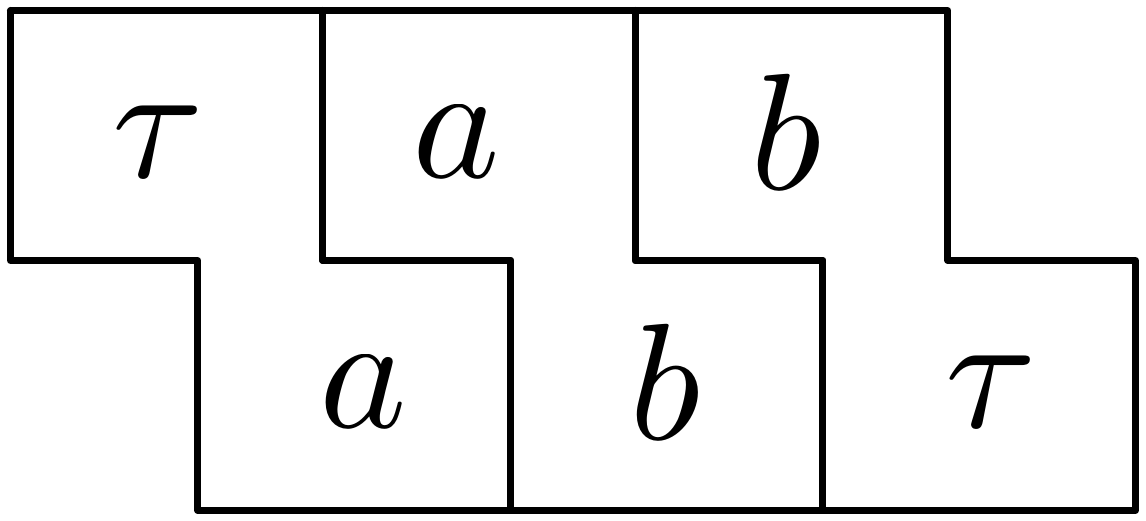}
\subcaption{Matching tetrominos}\label{fig:zlinkb}
\end{minipage}
\caption{Links as tetrominos}\label{fig:zlink}
\end{figure}

To help the intuition, we can see links as Z-shaped tetrominos that can be joined together along a line when the labels of the edges match (see Figure~\ref{fig:zlink}). 

Despite the inherent complexity of representing more sophisticated forms of interaction, we show that the underlying synchronisation algebra and name handling primitives are quite simple, being a straight generalisation of dyadic ones.
This is witnessed by the operational semantic rules of our calculus that, in their simpler version (i.e.\ without message passing), resemble the rules of CCS~\cite{DBLP:books/daglib/0067019}, while in the full one, not considered here (see~\cite{BodeiBB12}), they resemble the ones of $\pi$-calculus~\cite{pi}.
In this sense, \CNA\ processes can be seen as running over a dedicated \emph{middleware} that guarantees a proper handling of links, in the same way as CCS and $\pi$-calculus processes can be seen as running over a middleware that guarantees a proper handling of point-to-point messaging (e.g.\ messages are not lost).

Finally, we address a more technical issue, by providing a convenient abstract semantics, called \emph{network bisimilarity} for \CNA\ processes, which is the analogous of strong bisimilarity for CCS processes. Remarkably, network bisimilarity is a congruence w.r.t.\ all useful composition operators and also w.r.t. \ substitutions, a feature mostly missed in other frameworks.

\paragraph{Synopsis}
In Section~\ref{sec:CCS}, we recall the basics of CCS, although we assume the reader has some familiarity with process algebras. Furthermore, we illustrate a simple scenario of a modular network infrastructure that will serve as a running example to demonstrate that the level of abstraction provided by \CNA\ is much more convenient w.r.t.\ the one provided by processes with dyadic interactions.

In Section~\ref{sec:chains}, we present the theory of link chains, to be used as labels in the operational semantics of \CNA. The theory is quite rich, as it consists of several key operators for building and manipulating link chains. Some nice properties of the introduced operators are also proved, which later will turn out useful to assess semantics properties of \CNA\ processes.

In Section~\ref{sec:network}, we introduce the syntax and operational semantics of \CNA, together with some simple usage examples that should help the reader in understanding the driving principles behind our design choices. 
The key technical contribution in this section is the Accordion Lemma~\ref{lem:stretchlts}.

In Section~\ref{abs-sem}, we close the loop by introducing network bisimilarity, the abstract semantics of \CNA\ processes. 
We prove the main congruence results (see Theorem~\ref{theo:CNAcong} and Proposition~\ref{prop:subs}) and show how network bisimilarity fits well in the running example. Some variations are discussed by the end of Section~\ref{abs-sem}.

Discussion of related work and some concluding remarks are in Section~\ref{sec:conclusion}.

Some auxiliary results and the proofs of technical lemmata can be found in~\ref{app}.

\paragraph{Previous work}
This article is the full version of the extended abstract in~\cite{BodeiBB12}, where also the message passing version of \CNA\ was presented, called \emph{link-calculus}.
Here we focus on the core version of the framework and spell out its theory in full detail. 
It is worth mentioning that we have revised the definition of the equivalence $\stretcheq$ on link chains that is the basis of network bisimilarity and introduced a finer equivalence $\blackstretcheq$ that considerably simplifies the proofs of the main properties.
The motivating and running example is completely original to this contribution.
Finally, we give here all proofs at a good level of detail.

The main contribution in~\cite{BodeiBB12} was to show that the link-calculus can be used to encode Mobile Ambients (MA)~\cite{GC03} in such a way that there is a bijective correspondence between the reduction steps of MA processes and silent transitions of their encodings in the link-calculus. This was a much stronger operational correspondence than any available in the literature, such as the ones in~\cite{Brodo11,B16}.

In~\cite{BodeiBBC14}, following a similar line, 
we have provided an encoding of Brane Calculi~\cite{C05} in the \emph{link-calculus}.
In particular, we have shown that
biologically interactions that
usually involve several compounds 
can be naturally rendered by multiparty interactions.
Furthermore, 
locality can be easily handled, without introducing any specific operator, just
encoding any membrane compartment as a separate process.


\section{Background on CCS and a Running Example}\label{sec:CCS}

\subsection{CCS: the Calculus of Communicating Systems}

The Calculus of Communicating Systems (CCS)~\cite{DBLP:books/daglib/0067019} was introduced by Turing Award winner Robin Milner in the early 1980s. It is based on the notion of processes that communicate on shared channels by executing actions and co-actions over them. Without loss of generality, we can imagine them as input and output actions with synchronous dyadic interaction. 
Let $\mathcal{C} = \{a, b,...\}$ be the set of channels and, by coercion, input actions. 
We denote by $\overline{\mathcal{C}} = \{\overline{a}, \overline{b},...\}$ the set of co-actions (i.e.\ output actions), with $\mathcal{C} \cap \overline{\mathcal{C}} = \emptyset$ and let $\mathcal{O} =  \mathcal{C} \cup \overline{\mathcal{C}}$ denote the set of observable actions
ranged over by $\lambda$. 
We extend the bar-notation to observable actions, by letting $\overline{\overline{\lambda}} = \lambda$.
We also fix a distinguished silent action $\tau \not\in \mathcal{O}$ and let $\mu \in \mathcal{O} \cup\{\tau\}$ denote a generic action.
A \emph{channel relabelling} is a function $\phi:\mathcal{C} \to \mathcal{C}$. It is extended to generic actions by letting $\phi(\tau) = \tau$ and $\phi(\overline{\lambda}) = \overline{\phi(\lambda)}$ for any observable action $\lambda$.  It is called a \emph{renaming} when it is bijective.

A CCS process is then a term generated by the following grammar:
\[
\begin{array}{rcl}
p,q & ::= &
\nil \prodsep
\mu. p  \prodsep
p+q \prodsep
p|q \prodsep
\restrict{a} p \prodsep
p[\phi] \prodsep
A
\end{array}
\]
where $\phi$ is a channel relabelling function and $A$ is any constant drawn from a set $\Delta$ of possibly recursive definitions of the form $A \defeq p$.

Roughly the process $\nil$ is the inactive process that cannot perform any action.
The action prefixed process $\mu.p$ can execute the action $\mu$ and then behaves as $p$.
The operator $+$ introduces nondeterminism: the composed process $p + q$ can behave as $p$ or as $q$, but once it performs an action as $p$ the option $q$ is discarded, and vice versa.
The parallel composition of two processes, written $p | q$,  allows $p$ and $q$ to interleave their actions or to interact by performing complementary actions $a$ and $\overline{a}$: if this is the case, the synchronisation is represented as a silent action $\tau$ and the channel where it takes places is not recorded.
The restricted process $\restrict{a} p$ can perform all actions that $p$ can perform, except for actions $a$ and $\overline{a}$, which are blocked.
The relabelled process $p[\phi]$ can perform all actions that $p$ can perform, but they are relabelled according to $\phi$, i.e. \ if $p$ can do an action $\mu$ then $p[\phi]$ can do $\phi(\mu)$.
Relabelling is very useful for reusing process components in different parts of the system just by changing the set of channels on which they operate.
Finally, the constant $A$ behaves as $p$ if $(A \defeq p) \in \Delta$.

In some cases we shall use constants $A(x_1,...,x_n)$ that are parametric on a set of channel names $x_1,...,x_n$, written more concisely as $A(\widetilde{x})$, and that can be instantiated with actual names, as in $A(a_1,...,a_n)$ or just $A(\widetilde{a})$ for short. Similarly, we write $\restrict{\widetilde{a}}p$ for $\restrict{a_1}\cdots\restrict{a_n}p$.

\begin{figure}[t]
\begin{center} 
\begin{prooftree} 
\AxiomC{$\phantom{\mu.p \xrightarrow{\mu} p}$} 
\UnaryInfC{$\mu.p \xrightarrow{\mu} p$} 
\DisplayProof
\quad
\AxiomC{$p \xrightarrow{\mu} p'$} 
\UnaryInfC{$p+q \xrightarrow{\mu} p'$} 
\DisplayProof
\quad
\AxiomC{$q \xrightarrow{\mu} q'$} 
\UnaryInfC{$p+q \xrightarrow{\mu} q'$} 
\DisplayProof
\quad
\AxiomC{$p \xrightarrow{\mu} p'$} 
\AxiomC{$\mu\not\in\{a,\overline{a}\}$} 
\BinaryInfC{$\restrict{a}p \xrightarrow{\mu} \restrict{a}p'$} 
\end{prooftree} 
\end{center}

\begin{center} 
\begin{prooftree} 
\AxiomC{$p \xrightarrow{\mu} p'$} 
\UnaryInfC{$p[\phi] \xrightarrow{\phi(\mu)} p'[\phi]$} 
\DisplayProof
\quad
\AxiomC{$p \xrightarrow{\mu} p'$} 
\UnaryInfC{$p|q \xrightarrow{\mu} p'|q$} 
\DisplayProof
\quad 
\AxiomC{$q \xrightarrow{\mu} q'$} 
\UnaryInfC{$p|q \xrightarrow{\mu} p|q'$} 
\DisplayProof
\quad 
\AxiomC{$p \xrightarrow{\lambda} p'$} 
\AxiomC{$q \xrightarrow{\overline{\lambda}} q'$} 
\BinaryInfC{$p|q \xrightarrow{\tau} p'|q'$} 
\DisplayProof
\end{center}
\begin{center}
\AxiomC{$p \xrightarrow{\mu} q$}
\AxiomC{$(A \defeq p)\in\Delta$}
\BinaryInfC{$A \xrightarrow{\mu} q$} 
\end{prooftree} 
\end{center}
\caption{SOS semantics of CCS.}
\label{fig:ccssos}
\end{figure}

The operational semantics of CCS is given in the form of a Labelled Transition System (LTS), where the states are CCS processes and the transitions are labelled by actions. We write $p \xrightarrow{\mu} q$ if $p$ can perform the action $\mu$ and behave as $q$ afterwards. The inference rules that generate the LTS are defined in the style of Structural Operational Semantics (SOS) as they are driven by the syntax of processes (see Figure~\ref{fig:ccssos}).

For example, we have transitions such as $a.b.\nil \xrightarrow{a} b.\nil \xrightarrow{b} \nil$, $a.b.\nil + c.\nil \xrightarrow{a} b.\nil$ and $a.b.\nil + c.\nil \xrightarrow{c} \nil$. The interplay between restriction and parallel composition is interesting as it can be used to impose synchronisation on some channel. In fact, while for $A \defeq (a.b.\nil + c.\nil) | (\overline{a}.\nil + d.\nil)$ we have transitions such as 
$A\xrightarrow{a} b.\nil | (\overline{a}.\nil + d.\nil)$, 
$A\xrightarrow{\overline{a}} (a.b.\nil + c.\nil) | \nil$, 
and 
$A\xrightarrow{\tau} b.\nil | \nil$, 
among others, the process $\restrict{a} A$ cannot perform any action labelled by $a$ and $\overline{a}$ but can still perform the synchronisation 
$\restrict{a} A\xrightarrow{\tau} \restrict{a}(b.\nil | \nil)$, because $\tau$ actions cannot be restricted. 

To keep the notation compact, we write $p \xrightarrow{\mu_1} \xrightarrow{\mu_2} \cdots \xrightarrow{\mu_n}q$ when there exist some processes $p_1,...p_{n+1}$ that we do not need to mention such that $p_1 = p$, $p_{n+1} = q$ and $p_i \xrightarrow{\mu_i} p_{i+1}$ for $i\in [1,n]$.

Recursive definitions can be used to account for infinite behaviour.
For example, if $(A \defeq a.A + b.\nil) \in \Delta$, then the process $A$ can do any finite sequence of actions $a$ terminated by an action $b$, as in 
$A \xrightarrow{a} A \xrightarrow{a} \cdots \xrightarrow{a} A \xrightarrow{b} \nil$ but it can also perform an infinite sequence of actions $a$, as in
$A \xrightarrow{a} A \xrightarrow{a} \cdots \xrightarrow{a}  A \xrightarrow{a} \cdots.$

The main notion of equivalence for CCS processes is called \emph{strong bisimilarity} and is denoted by $\sim$.
It is defined as the largest strong bisimulation relation,  i.e.\  as the largest binary relation $\mathbf{R}$ on CCS processes such that whenever $p\mathop{\mathbf{R}} q$ we have that:
\begin{enumerate}
\item for any $\mu,p'$ such that $p \xrightarrow{\mu} p'$ there exists $q'$ such that $q \xrightarrow{\mu} q'$ and $p'\mathop{\mathbf{R}} q'$;
\item for any $\mu,q'$ such that $q \xrightarrow{\mu} q'$ there exists $p'$ such that $p \xrightarrow{\mu} p'$ and $p'\mathop{\mathbf{R}} q'$.
\end{enumerate}

Notably, strong bisimilarity is a congruence w.r.t. all the operators of CCS.
Here we point out that it is not a congruence w.r.t. action substitution.
For example, it is well-known that strong bisimilarity reduces concurrency to non-determinism, as
$a.\nil \ |\ \overline{b}.\nil \sim a.\overline{b}.\nil + \overline{b}.a.\nil$. However, if we apply the (non-injective) substitution $\{\subst{b}{a}\}$ that replaces all the (free) occurrences of $a$ with $b$ to both processes we get $b.\nil \ |\ \overline{b}.\nil \not\sim b.\overline{b}.\nil + \overline{b}.b.\nil$, because the former process can do the silent step $b.\nil \ |\ \overline{b}.\nil \xrightarrow{\tau} \nil\ |\ \nil$, while the latter process $b.\overline{b}.\nil + \overline{b}.b.\nil$ cannot.

Sometimes one wants to abstract away from silent transitions $\tau$. 
Correspondingly \emph{weak bisimilarity} can be considered instead of strong bisimilarity, where in the bisimulation game a single transition can be simulated by exploiting any number of silent transitions.
Unfortunately, weak bisimilarity is not a congruence w.r.t. the choice operator and substitutions.

\subsection{Software Defined Infrastructures}

In this sub-section we sketch four scenarios of increasing complexity together with their possible modelling in CCS.
Once introduced \CNA, in Sections~\ref{sec:network} and~\ref{abs-sem}, we will revisit these examples to show that they can be more conveniently accounted for in \CNA.

The reference case study consists of a 
network infrastructure with $n$ requestor agents $A_1,...,A_n$, $m$ servers $S_1,...,S_m$ and a routing infrastructure $R$ that regulates which requestors are connected to which servers in a way that is out of the control of agents and requestors.
For the sake of simplicity, in the following we let $n=m=2$.

\begin{example}[Blind routing]\label{blind}
Initially, we keep the scenario as simple as possible: the idea is that a requestor can repeatedly request a service if there is a non-busy server connected to it via $R$. Let us suppose that $R$ connects $A_1$ with $S_1$ and $S_2$, while $A_2$ only with $S_2$.
In CCS, the system can be readily modelled by the following recursive processes that run in parallel.
\begin{eqnarray*}
A_i & \defeq & \overline{\mathit{req}_i}.\overline{\mathit{think}}.A_i\quad \mbox{ for $i\in[1,2]$} \\
S_j & \defeq & \mathit{srv}_j.\overline{\mathit{exec}}.S_j + \overline{\mathit{busy}}.\tau.S_j\quad \mbox{ for $j\in[1,2]$}\\
R & \defeq & \mathit{req}_1.(\overline{\mathit{srv}_1}.R + \overline{\mathit{srv}_2}.R) + \mathit{req}_2.\overline{\mathit{srv}_2}.R
\end{eqnarray*}

\noindent
For example we can let the system be defined as

\[
N \defeq 
\restrict{\widetilde{\mathit{req}}}
\restrict{\widetilde{\mathit{srv}}} 
(A_1\ |\ A_2\ |\ R\ |\ S_1\ |\ S_2)
\]
so that synchronisation is enforced for all the interactions between requestors and the infrastructure 
(on channels $\mathit{req}_1$ and $\mathit{req}_2$)
 and between the infrastructure and servers 
(on channels $\mathit{srv}_1$ and $\mathit{srv}_2$).

The routing depends on the state of the system. Suppose, for instance,
that $S_2$ becomes busy and that $A_2$ sends a request to $R$, 
according to the transition sequence 
$N \xrightarrow{\overline{\mathit{busy}}} \xrightarrow{\tau} N'$ with
\[N' = 
\restrict{\widetilde{\mathit{req}}}
\restrict{\widetilde{\mathit{srv}}} 
(A_1\ |\ \overline{think}.A_2\ |\ \overline{\mathit{srv}_2}.R\ |\ S_1\ |\ \tau.S_2) .
\]

This is perfectly admissible, but leaves $A_2$ thinking its request has been served, because the interaction with $R$ has taken place, while the server $S_2$ has not even received it.
\end{example}

\begin{example}[Acknowledged routing]\label{ack}
To remedy the problem raised by the previous model, one can introduce some acknowledgement protocol, to ensure each agent that its request has been assigned to some server. The CCS model can thus be improved by redesigning the processes as follows:
\begin{eqnarray*}
A_i & \defeq & \overline{\mathit{req}_i}.\mathit{ack}_i.\overline{\mathit{think}}.A_i\quad \mbox{ for $i\in[1,2]$} \\
S_j & \defeq & \mathit{srv}_j.\overline{\mathit{exec}}.S_j + \overline{\mathit{busy}}.\tau.S_j\quad \mbox{ for $j\in[1,2]$}\\
R & \defeq & \mathit{req}_1.(\overline{\mathit{srv}_1}.\overline{\mathit{ack}_1}.R + \overline{\mathit{srv}_2}.\overline{\mathit{ack}_1}.R) + \mathit{req}_2.\overline{\mathit{srv}_2}.\overline{\mathit{ack}_2}.R
\end{eqnarray*}
For example we can let the system be defined as
\[
M \defeq \restrict{\widetilde{\mathit{ack}}}  N
\]
where $N$ is defined as before.

At a very abstract level, we can view an infrastructure as an oriented graph with $n$ nodes on the left boundary and $m$ nodes on the right boundary: the assignment of a request from $A_i$ to the server $S_j$ is possible if $S_j$ is available and if there is a connection between the $i$th node on the left boundary and the $j$th node on the right boundary. Graphically, $M$ can be depicted as below:
\[
\xymatrix{
*++[F]{A_1} \ar@{.>}@/^/[r]^{\mathit{req}_1} &
{_1\bullet} \ar[r] \ar[rd] \ar@{.>}@/^/[l]^{\mathit{ack}_1} 
\POS[]+<1.5pc,1pc>*{R}
\POS[]+<1.65pc,-1.8pc>*=<2.9pc,7pc>[F-]{\;}
& 
{\bullet_1} \ar@{.>}@/^/[r]^{\mathit{srv}_1} &
*++[F]{S_1} \\
*++[F]{A_2} \ar@{.>}@/^/[r]^{\mathit{req}_2} &
{_2\bullet} \ar[r] \ar@{.>}@/^/[l]^{\mathit{ack}_2} & 
{\bullet_2} \ar@{.>}@/^/[r]^{\mathit{srv}_2} & 
*++[F]{S_2} 
}
\]

This time, when $S_2$ is busy and $A_2$ interacts with the infrastructure $R$ on channel $\mathit{req}_2$, the requestor $A_2$ blocks until the server $S_2$ becomes available and can accept the request by interacting on channel $\mathit{srv}_2$ with $R$. In fact, when this is the case, $R$ sends the acknowledgment on channel $\mathit{ack}_2$ to $A_2$.
\end{example}

\begin{example}[Composite, acknowledged routing]\label{composite}
Now suppose that the infrastructure $R$ is not monolithic, and that it is instead  obtained by composing some network infrastructures together, which is a necessity for complex systems. 

To make the infrastructure compositional, we must make interaction symmetric on the two, left and right, boundaries,  i.e.\  we must assume that servers also send some acknowledgement. 
Correspondingly, we set
\begin{eqnarray*}
S_j & = & \mathit{srv}_j.\overline{\mathit{sack}_j}.\overline{\mathit{exec}}.S_j + \overline{\mathit{busy}}.\tau.S_j\quad \mbox{ for $j\in[1,2]$}\\
\end{eqnarray*}

Now the system can be depicted as below:

\[
\xymatrix{
*++[F]{A_1} \ar@{.>}@/^/[r]^{\mathit{req}_1} &
{_1\bullet} \ar[r] \ar[rd] \ar@{.>}@/^/[l]^{\mathit{ack}_1} 
\POS[]+<1.5pc,1pc>*{R}
\POS[]+<1.65pc,-1.8pc>*=<2.9pc,7pc>[F-]{\;}
& 
{\bullet_1} \ar@{.>}@/^/[r]^{\mathit{srv}_1} &
*++[F]{S_1} \ar@{.>}@/^/[l]^{\mathit{sack}_1} \\
*++[F]{A_2} \ar@{.>}@/^/[r]^{\mathit{req}_2} &
{_2\bullet} \ar[r] \ar@{.>}@/^/[l]^{\mathit{ack}_2} & 
{\bullet_2} \ar@{.>}@/^/[r]^{\mathit{srv}_2} & 
*++[F]{S_2} \ar@{.>}@/^/[l]^{\mathit{sack}_2}
}
\]

Now consider the case where $R$ is obtained by juxtaposing three other infrastructures $R'$, $R''$ and $R'''$ defined as follows:
\begin{eqnarray*}
R' & \defeq & \mathit{req}_1.(\overline{s_1}.a_1.\overline{\mathit{ack}_1}.R' + \overline{s_2}.a_2.\overline{\mathit{ack}_1}.R') + \mathit{req}_2.\overline{s_2}.a_2.\overline{\mathit{ack}_2}.R' \\
R'' & \defeq & s_1.\overline{s'_1}.a'_1.\overline{a_1}.R'' + s_2.\overline{s'_2}.a'_2.\overline{a_2}.R' \\
R''' & \defeq & s'_2.\overline{\mathit{srv}_2}.\mathit{sack}_2.\overline{a'_2}.R''' \\
R & \defeq & \restrict{\widetilde{a'}}\restrict{\widetilde{s'}}\restrict{\widetilde{a}}\restrict{\widetilde{s}} (R' \ |\ R''\ |\ R''')
\end{eqnarray*}
Note that $R'''$ does not forward any request coming from its first port.
The resulting infrastructure is illustrated in the figure below:
\[
\xymatrix{
*++[F]{A_1} \ar@{.>}@/^/[r]^{\mathit{req}_1} &
{_1\bullet} \ar[r] \ar[rd] \ar@{.>}@/^/[l]^{\mathit{ack}_1} 
\POS[]+<1.5pc,1pc>*{R'}
\POS[]+<1.65pc,-1.8pc>*=<2.9pc,7pc>[F-]{\;}
& 
{\bullet_1} \ar@{.>}@/^/[r]^{s_1} 
&
{_1\bullet} \ar[r] \ar@{.>}@/^/[l]^{a_1} 
\POS[]+<1.5pc,1pc>*{R''}
\POS[]+<1.65pc,-1.8pc>*=<2.9pc,7pc>[F-]{\;}
\POS[]+<1.5pc,2.5pc>*{R}
\POS[]+<1.65pc,-1.8pc>*=<16.3pc,9.6pc>[F--]{\;}
&
{\bullet_1} \ar@{.>}@/^/[r]^{s'_1} &
{_1\bullet} \ar@{.>}@/^/[l]^{a'_1} 
\POS[]+<1.5pc,1pc>*{R'''}
\POS[]+<1.65pc,-1.8pc>*=<2.9pc,7pc>[F-]{\;}
&
{\bullet_1} \ar@{.>}@/^/[r]^{\mathit{srv}_1} &
*++[F]{S_1} \ar@{.>}@/^/[l]^{\mathit{sack}_1} \\
*++[F]{A_2} \ar@{.>}@/^/[r]^{\mathit{req}_2} &
{_2\bullet} \ar[r] \ar@{.>}@/^/[l]^{\mathit{ack}_2} & 
{\bullet_2} \ar@{.>}@/^/[r]^{s_2} &
{_2\bullet} \ar[r] \ar@{.>}@/^/[l]^{a_2} &
{\bullet_2} \ar@{.>}@/^/[r]^{s'_2} & 
{_2\bullet} \ar[r] \ar@{.>}@/^/[l]^{a'_2} &
{\bullet_2} \ar@{.>}@/^/[r]^{\mathit{srv}_2} & 
*++[F]{S_2} \ar@{.>}@/^/[l]^{\mathit{sack}_2}
}
\]

In general, an assignment of a request to a server is possible only if there is a path of connections in the graph associated with the infrastructure. In the example, the requests coming from $A_1$ and $A_2$ can only be assigned to $S_2$, as there is no path towards $S_1$.

Unfortunately, it may happen that the routing of the infrastructure comes to a dead point. In the example, $R'$ can forward the request from $A_1$ to $R''$ on $s_1$, but then $R''$ is blocked because it will not be able to pass the request to $R'''$ on $s'_1$.

To remedy this, either all dead paths must be removed before the infrastructure is deployed or some deadlock-detection and backtracking mechanism should be put in place, which requires some additional efforts.
\end{example}

\begin{example}[Dynamic routing]
Finally, imagine the situation where the infrastructure $R$ is \emph{software defined}, in the sense that connections can be added and removed dynamically. This time static-time dead-path analysis is not possible at all, and the integration of this additional feature with the previous acknowledgement, deadlock-detection and backtracking mechanisms looks overly complicated.
\end{example}



\section{A Theory of Link Chains}
\label{sec:chains}

To address the challenges posed by the scenarios in Section~\ref{sec:CCS}, the idea is to move from dyadic interaction to multiparty one.
Correspondingly, communication actions are given in terms of {\em links} and a single atomic interaction is   
possibly composed by more than one link.
A link is a pair $\link{\alpha}{\beta}$
that records the source and the target sites of a communication, 
meaning that the input available at the source site $\alpha$ can be forwarded to the target one $\beta$.  
Links are suitably combined in link chains to describe how information can be routed across processes before arriving at their destination.
Therefore, links are combined like pieces in a jigsaw puzzle, where each party contributes with its link. As explained in the introduction, we can think about links as Z-shaped tetrominos that are joined together in a line when the labels of the edges match so to form a link chain.
Standard 
I/O communication is made more accurate, by recording the route of information across several sites.
Furthermore, link chains allow seamless realisation of multiparty synchronisations.

To achieve compositionality, we allow processes to provide link chains that are assembled just in part.
Intuitively they correspond to puzzles where some, but not all, the pieces are present.
As an example, Figure~\ref{fig:chain} shows a chain with a missing link from $a$ to $b$.

\begin{figure}[t]
\begin{center}
\includegraphics[scale=0.3]{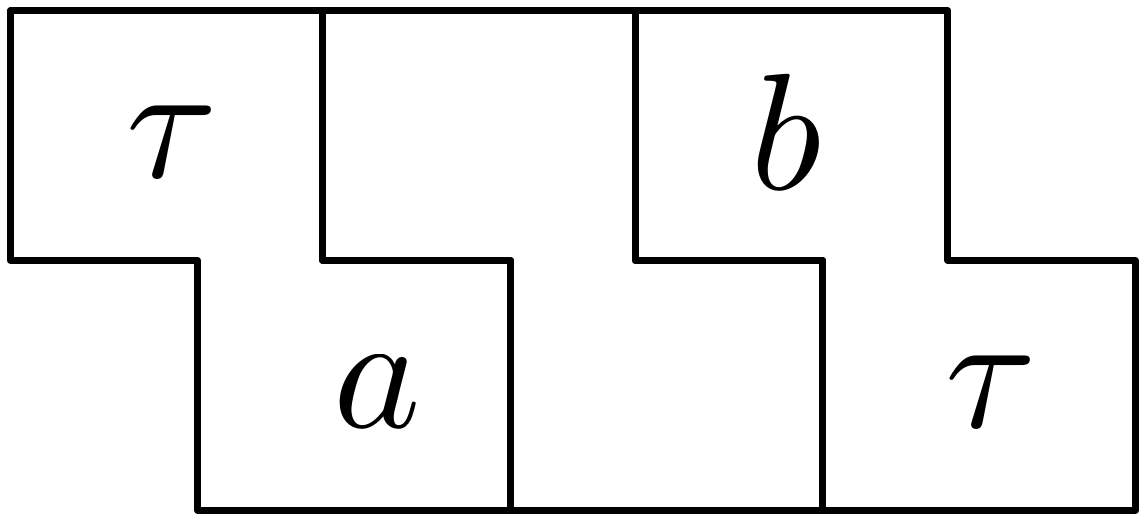}
\end{center}
\caption{A chain with a missing link.}
\label{fig:chain}
\end{figure}

In this section we present the underlying theory of links and link chains, posing the emphasis on the operations for combining them and on some relevant properties they satisfy.


\subsection{Links}
Let $\mathcal{C}$ be the set of channels, ranged over by $a,b,c,...$, and 
let $\mathcal{A} = \mathcal{C} \cup \setof{\silent} \cup \setof{{\noacts}}$ be the set of actions, ranged over by $\alpha,\beta,\gamma,...$,
where the symbol $\silent$ denotes a \emph{silent} action, while the symbol $\noact$ denotes a \emph{virtual} (non-specified) action (i.e. \ a missing piece of the puzzle according to the analogy proposed above).

\begin{definition}[Links: solid, virtual, valid]
A \emph{link} is a pair $\ell=\link{\alpha}{\beta}$;
 it can be read as forwarding the input available on $\alpha$ to $\beta$, and we call $\alpha$ the \emph{source site} of $\ell$ and $\beta$ the \emph{target site} of $\ell$.
A link $\link{\alpha}{\beta}$ is \emph{solid} if $\alpha,\beta \neq \ \noacts$; 
the link $\link{\noact}{\noact}$ is called \emph{virtual}.
A link is \emph{valid} if it is solid or virtual.
We let $\mathcal{L}$ be the set of valid links.
\end{definition}

Examples of non valid links are $\link{\tau}{\noact}$ and $\link{\noact}{a}$, while both $\link{\tau}{a}$ and $\link{a}{b}$ are solid valid links.
From now on, we only consider valid links.

As it will be  shortly explained, the virtual link $\link{\noact}{\noact}$ is a sort of ``missing link'' inside a link chain; it represents a needed link that can be supplied, as a solid link, by another link chain, via a suitable composition operation called \emph{merge} (see below).


\subsection{Link Chains}
Links can be combined in link chains that record the source and the target 
sites of each hop of the interaction.

\begin{definition}[Link Chain]
A \emph{link chain} is a 
finite sequence $s = \ell_{1}...\ell_{n}$ of (valid) links  $\ell_{i} = \link{\alpha_{i}}{\beta_{i}}$ such that:
\begin{enumerate}

\item for any $i\in [1,n-1]$, 
$\left\{\begin{array}{ll}
\beta_{i},\alpha_{i+1}\in \mathcal{C} & \mbox{ implies } \beta_{i} = \alpha_{i+1}\\
\beta_{i}=\silent & \mbox{ iff } \alpha_{i+1}=\silent
\end{array}\right.
$
\item  $\exists i \in [1,n]. \ \ell_i \neq \link{\noact}{\noact}$.

\end{enumerate}
\end{definition}

The first condition says that any two adjacent solid links must match on their adjacent sites;
it also imposes that, in particular, $\silent$ cannot be matched by $\noact$.
The second condition disallows chains made of virtual links only.
A non-empty link chain is \emph{solid} if all its links are so.
For example, $\startchain{\silent}\chainedlink{a}{a}\chainend{b}$ is a solid link chain, 
while $\startchain{\silent}\chainedlink{a}{\noact}\chainedlink{\noact}{b}\chainend{\silent}$ is not solid.

%

In counting links in a chain we may decide to ignore or not virtual links. 

\begin{definition}[Length and size]
The \emph{length} of a chain $s$, written $|s|$, is the number of valid (virtual and solid) links that are in $s$.
The \emph{size} of $s$, written $||s||$, is the number of solid links that are in $s$.
\end{definition}

For example, $|\startchain{\silent}\chainedlink{a}{\noact}\chainedlink{\noact}{b}\chainend{\silent}| = 3$, while
$||\startchain{\silent}\chainedlink{a}{\noact}\chainedlink{\noact}{b}\chainend{\silent}||= 2$.

The following definition introduces an equivalence relation over link chains that equates two valid link chains if they only differ for the presence of virtual links only.

\begin{definition}[Equivalence $\blackstretcheq$]\label{def:black}
We let $\blackstretcheq$ be the least equivalence relation 
over link chains closed under the axioms (whenever both sides are well defined link chains):
\[
\begin{array}{rclcrcl}
s\link{\noact}{\noact} & \blackstretcheq &  s & \qquad &
s_1 \startchain{\noact}\chainedlink{\noact}{\noact}\chainend{\noact}s_2 & \blackstretcheq & s_1 \link{\noact}{\noact} s_2\\
\link{\noact}{\noact}s & \blackstretcheq & s & &
s_1 \startchain{\alpha}\chainedlink{a}{a}\chainend{\beta} s_2
& \blackstretcheq & 
s_1 \startchain{\alpha}\chainedlink{a}{\noact}\chainedlink{\noact}{a}\chainend{\beta}s_2 
\end{array}
\]
\end{definition}

From the above definition, it follows that the chain size is invariant w.r.t.~$\blackstretcheq$,  i.e.\  $s \blackstretcheq s'$ implies that $||s|| = ||s'||$ (although $s$ and $s'$ can have different lengths).
Furthermore, for $\ell$ a solid link and $s$ a link chain, we write $s \blackstretcheq  \ell$ if and only if $\ell$ is the {\em only} solid link that occurs in $s$.

The following basic operations over links and link chains are partial and strict, i.e.\ they may issue $\bot$ (undefined) and the result is $\bot$ if any argument is $\bot$. 
To keep the notation short, we tacitly assume that
the result is $\bot$ if either one of the sub-expressions in the righthand side of any defining equation is undefined, or
if none of the conditions in the righthand side of any defining equation is met.

\paragraph{Merge} 
We remind that the virtual links in a chain can be seen as the part in the chain not yet specified, and possibly provided by another link chain when merged. 

Two link chains can be merged if they are to some extent ``complementary'', in the sense that: (i)~they have the same length; (ii)~each of them provides solid links that are missing in the other, and (iii)~superimposed together they still form a link chain. 

In particular, if there is a position where both link chains carry solid links, then there is a clash and the merge is not possible (undefined).
Also if the merge would result in a non valid sequence, then the merge is not possible.

\begin{definition}[Merge]
For $s = \ell_{1}...\ell_{n}$ and $s' = \ell'_{1}...\ell'_{n}$, with $\ell_{i} = \link{\alpha_{i}}{\beta_{i}}$ and $\ell'_{i} = \link{\alpha'_{i}}{\beta'_{i}}$ for any $i\in [1,n]$, we define their merge $\merges{s}{s'}$ by defining the merge of two actions as follows:
\[
\begin{array}{lll}
\merges{\alpha}{\beta} & \defeq &
\left\{\begin{array}{ll}
\alpha & \mbox{ if } \beta=\ \noacts\\
\beta & \mbox{ if } \alpha=\ \noacts
\end{array}\right.
\end{array}
\]
and then taking its homomorphic extension to links and link chains:\footnote{As anticipated, we remark that in the defining equations for merge it is implicitly understood that:
if $\merges{\ell_{i}}{\ell'_{i}} = \bot$ for some $i$, then $\merges{s}{s'} = \bot$;
if the sequence $(\merges{\ell_{1}}{\ell'_{1}})...(\merges{\ell_{n}}{\ell'_{n}})$ is not a link chain, then $\merges{s}{s'} = \bot$;
if $\merges{\alpha}{\alpha'}=\bot$ or $\merges{\beta}{\beta'}=\bot$, then $\merges{\link{\alpha}{\beta}}{\link{\alpha'}{\beta'}} = \bot$;
if $\alpha,\beta\neq \ \noact$, then $\merges{\alpha}{\beta} = \bot$.}
\[
\begin{array}{rll@{\hspace{3em}} rll}
\merges{s}{s'} & \defeq &
(\merges{\ell_{1}}{\ell'_{1}})\cdots(\merges{\ell_{n}}{\ell'_{n}}) 
& 
\merges{\link{\alpha}{\beta}}{\link{\alpha'}{\beta'}} & \defeq &
\link{(\merges{\alpha}{\alpha'})}{(\merges{\beta}{\beta'})}
\end{array}
\]
\end{definition}


Roughly, the merge is defined element-wise on the actions of a link chain, by ensuring that whenever two actions are merged, (at least) one of them is $\noacts$ and that the result of the merge is still a link chain.
Note that the merge is undefined if the link chains have different lengths.

Intuitively,
we can imagine that $s$ and $s'$ are two parts of the same puzzle separately assembled, where solid links are the pieces of the puzzle and virtual links are the holes in the puzzle and their merge $\merges{s}{s'}$ puts the two matched parts together, without
piece overlaps (see Figure~\ref{tetrom2}).

\begin{figure}[t]
\begin{center}
\includegraphics[scale=0.3]{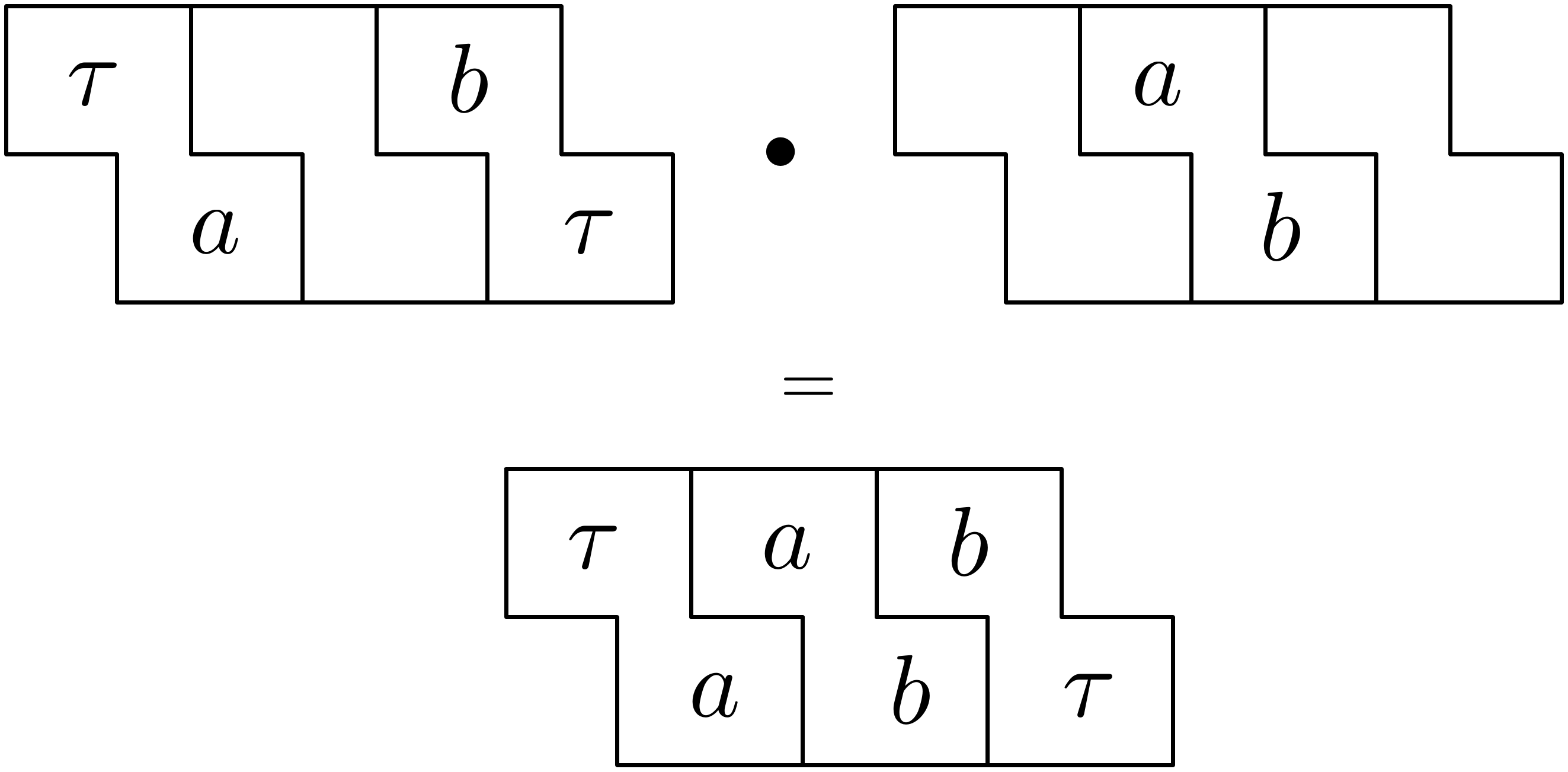}
\end{center}
\caption{Merge as assembling tetrominos.}
\label{tetrom2}
\end{figure}
 
\begin{example}
Let $s_{1}=\startchain{\silent}\chainedlink{a}{\noact}\chainedlink{\noact}{\noact}\chainend{\noact}$, 
$s_{2}=\startchain{\noact}\chainedlink{\noact}{a}\chainedlink{b}{\noact}\chainend{\noact}$, and
$s_{3}=\startchain{\noact}\chainedlink{\noact}{\noact}\chainedlink{\noact}{b}\chainend{\silent}$ 
be three link chains of the same length $|s_1|=|s_2|=|s_3|=3$.
Then $s_1$ and $s_2$ can be merged to obtain 
$s = \merges{s_{1}}{s_{2}} = (\merges{\link{\silent}{a}}{\link{\noact}{\noact}}) (\merges{\link{\noact}{\noact}}{\link{a}{b}})
(\merges{\link{\noact}{\noact}}{\link{\noact}{\noact}}) = $
$(\link{\merges{\silent}{\noact}}{\merges{a}{\noact}}) (\link{\merges{\noact}{a}}{\merges{\noact}{b}})
(\link{\merges{\noact}{\noact}}{\merges{\noact}{\noact}}) =$
$\startchain{\silent}\chainedlink{a}{a}\chainedlink{b}{\noact}\chainend{\noact}$. 
Similarly,
$s$ and $s_3$ can then be merged to obtain: 
$\merges{s}{s_{3}} = 
\startchain{\silent}\chainedlink{a}{a}\chainedlink{b}{b}\chainend{\silent}$.
\end{example}

\noindent
The merge operation enjoys some simple algebraic properties.

\begin{restatable}[]{lemma}{lemmaundici}\label{lemma:solid} 
For any $\ell,\ell',s,s'$:
\begin{enumerate}
\item[(i)]
The merge of links and link chains is commutative and associative.
\item[(ii)]
$\merges{\ell}{\ell'}= \link{\noact}{\noact}$ if and only if $\ell = \ell' = \link{\noact}{\noact}$.
\item[(iii)]
If $s$ is solid, then for any $s'$ we have $\merges{s}{s'}=\bot$.
\end{enumerate}
\end{restatable}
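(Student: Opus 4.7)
Proof plan for Lemma~\ref{lemma:solid}:

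For part~(i), the plan is to peel the definition of merge apart layer by layer. First I would show commutativity and associativity on actions: by inspection, $\merges{\alpha}{\beta}$ is defined precisely when at least one argument is $\noact$, and in that case returns the other argument (both if they coincide at $\noact$). This immediately gives commutativity, and associativity follows by noting that $\merges{\alpha}{(\merges{\beta}{\gamma})}$ is defined iff at most one of $\alpha,\beta,\gamma$ differs from $\noact$, with the result being that unique non-$\noact$ action if it exists and $\noact$ otherwise; the same holds symmetrically for $\merges{(\merges{\alpha}{\beta})}{\gamma}$. Both properties then lift componentwise to links via the defining equation $\merges{\link{\alpha}{\beta}}{\link{\alpha'}{\beta'}} = \link{(\merges{\alpha}{\alpha'})}{(\merges{\beta}{\beta'})}$, and from links to chains via the positionwise definition. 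The one point that requires a little care is that the chain validity constraints (adjacent-site matching and the "not-all-virtual" clause) are symmetric in the arguments and depend only on the resulting sequence of merged actions, so undefinedness propagates symmetrically and compatibly with regrouping.

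For part~(ii), I would argue directly from the action-level merge. Writing $\ell = \link{\alpha}{\beta}$ and $\ell' = \link{\alpha'}{\beta'}$, the merge $\merges{\ell}{\ell'} = \link{(\merges{\alpha}{\alpha'})}{(\merges{\beta}{\beta'})}$ equals $\link{\noact}{\noact}$ iff both $\merges{\alpha}{\alpha'} = \noact$ and $\merges{\beta}{\beta'} = \noact$; by the defining equation for $\merges{\cdot}{\cdot}$ on actions this forces $\alpha = \alpha' = \noact$ and $\beta = \beta' = \noact$, i.e., $\ell = \ell' = \link{\noact}{\noact}$. The converse is immediate from the same equation.

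For part~(iii), I would reason by contradiction. Suppose $s$ is solid and that $\merges{s}{s'}$ is defined for some link chain $s'$. Then necessarily $|s'|=|s|=n$, and at each position $i$ the merge of links $\merges{\ell_i}{\ell'_i}$ must be defined. Since $\ell_i = \link{\alpha_i}{\beta_i}$ has $\alpha_i,\beta_i \in \mathcal{C} \cup \{\silent\}$, the only way $\merges{\alpha_i}{\alpha'_i}$ and $\merges{\beta_i}{\beta'_i}$ can be defined is if $\alpha'_i = \beta'_i = \noact$, so $\ell'_i = \link{\noact}{\noact}$ for every $i$. But this means $s'$ consists entirely of virtual links, violating condition~(2) of the definition of link chain. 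Contradiction.

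The whole argument is mostly a careful unfolding of definitions; the only real obstacle is bookkeeping the partiality, in particular making sure that associativity in~(i) handles the case where intermediate merges are undefined consistently on both sides, and that in~(iii) we do invoke the non-triviality clause of the chain definition (otherwise an "all-virtual" chain of the right length would be a legal witness to the merge).
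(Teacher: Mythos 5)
Your proposal is correct and follows essentially the same route as the paper: part~(i) by lifting commutativity and associativity of the action-level merge componentwise to links and chains, part~(ii) by direct unfolding of the definition, and part~(iii) by combining the non-triviality clause of the link-chain definition with the fact that two solid links cannot merge. The only cosmetic differences are that you phrase~(iii) contrapositively (all links of $s'$ would have to be virtual) where the paper argues directly from the solid link that $s'$ must contain, and that you spell out the partiality bookkeeping in~(i) that the paper dismisses as trivial.
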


Finally, the following lemma about the composition of link chains will be exploited in Lemma~\ref{lem:stretchlts} to prove that the operational semantics of \CNA\ is insensitive w.r.t. the equivalence $\blackstretcheq$.

\begin{restatable}[]{lemma}{lemmadodici}\label{lem:mergestretch} 
Let $s$, $s'$, and $s''$ be three link chains such that $(\merges{s'}{s''})\  \blackstretcheq\  s$, then there must exist $s_1$ and $s_2$ such that $s_1  \blackstretcheq s'$ and $s_2 \blackstretcheq s''$
with $\merges{s_1}{s_2} = s$.
\label{lemma:eqlc}
\end{restatable}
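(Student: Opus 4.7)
The plan is to establish a one-step invertibility property and then induct on the length of the $\blackstretcheq$-derivation from $\merges{s'}{s''}$ to $s$.

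The auxiliary claim I would prove is: if $r \blackstretcheq r'$ holds by a single application of one of the four axioms in Definition~\ref{def:black} (read left-to-right or right-to-left) and $\merges{u}{v} = r$, then there exist link chains $u^\star, v^\star$ with $u \blackstretcheq u^\star$, $v \blackstretcheq v^\star$, and $\merges{u^\star}{v^\star} = r'$. Granted this claim, the lemma follows by induction on the number of rewrite steps in any derivation $\merges{s'}{s''} = r_0 \to r_1 \to \cdots \to r_n = s$: the base case $n = 0$ is witnessed by taking $s_1 \defeq s'$ and $s_2 \defeq s''$; for the inductive step, the hypothesis applied to the first $n-1$ rewrites yields $\hat{u} \blackstretcheq s'$ and $\hat{v} \blackstretcheq s''$ with $\merges{\hat{u}}{\hat{v}} = r_{n-1}$, and one more invocation of the claim together with transitivity of $\blackstretcheq$ produces the desired $s_1, s_2$.

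For the claim itself I case split on the axiom used. In the three axioms that only involve virtual links (axioms 1, 2, and 3), Lemma~\ref{lemma:solid}(ii) forces both $u$ and $v$ to carry a virtual link at each position where the merged chain $r$ does; hence the same axiom can be applied independently inside $u$ and inside $v$, yielding $u^\star, v^\star$ whose merge equals $r'$. The delicate case is axiom 4, which inserts or removes a $\link{\noact}{\noact}$ between two consecutive solid links $\link{\alpha}{a}$ and $\link{a}{\beta}$ sharing the channel action $a$. Since $\alpha, a, \beta$ are all distinct from $\noact$, the definition of merge combined with the fact that only solid or virtual links are admissible forces, at each of the two solid-link positions, exactly one of $u, v$ to carry the solid link while the other carries $\link{\noact}{\noact}$. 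The case analysis then branches into four sub-cases depending on which of $u, v$ contributes each of $\link{\alpha}{a}$ and $\link{a}{\beta}$: when both solid links lie in the same chain, axiom 4 is applied inside that chain while axiom 3 is applied (in the opposite direction) in the other chain so as to introduce or remove a matching virtual link; when the two solid links are split across $u$ and $v$, a single application of axiom 3 inside each chain suffices. In every sub-case, the merge of the constructed chains coincides with $r'$ by direct computation on the modified positions.

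The main obstacle is the case analysis for axiom 4, in particular verifying that every constructed $u^\star, v^\star$ remains a valid link chain; however, this is routine since all insertions or deletions concern virtual links, and $\noact$ imposes no channel-matching or $\silent$-matching constraint on its neighbours, so the only nontrivial adjacency conditions in $u^\star$ and $v^\star$ are those already present in $u$ and $v$.
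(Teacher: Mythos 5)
Your proposal is correct and takes essentially the same route as the paper: the paper's proof likewise establishes the invertibility property for a single application of each axiom of $\blackstretcheq$ in either direction and extends it to the whole equivalence class by transitivity, handling the virtual-link axioms by distributing the insertion or removal to both components of the merge, and splitting the matched-action axiom into the same four sub-cases according to which component carries each of the two solid links. One harmless imprecision: your remark that $\noact$ imposes no $\silent$-matching constraint on its neighbours is not quite right (by the definition of link chain, $\beta_i = \silent$ iff $\alpha_{i+1} = \silent$, so a virtual link can never be adjacent to a $\silent$ site), but this never bites in your constructions since every inserted virtual link ends up adjacent only to channel sites or to other virtual links.
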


\paragraph{Restriction}
Certain actions of the link chain can be hidden by restricting the channel where they take place. Of course, restriction of $a$ is possible only if no pending communication on $a$ (like
 $\startchain{\silent}\chainedlink{a}{\noact}\chainend{\noact}$) is present,  i.e.\  only matched communication pairs, 
 in intermediate positions,
 can be restricted
 (as in $\startchain{\silent}\chainedlink{a}{a}\chainend{\silent}$). 
 
 \begin{definition}[Matched Action]
Let $s = \ell_{1}...\ell_{n}$, with $\ell_{i} = \link{\alpha_{i}}{\beta_{i}}$ for $i\in [1,n]$.
We say that an action $a$ is \emph{matched} in $s$ if: 
\begin{enumerate}
\item $a \neq \alpha_{1},\beta_{n}$, and
\item for any $i\in [1,n-1]$, either $\beta_{i}=\alpha_{i+1}=a$ or $\beta_{i},\alpha_{i+1}\neq a$.
\end{enumerate}
Otherwise, we say that $a$ is \emph{unmatched} (or \emph{pending}) in $s$.
\end{definition}

It follows from the definition that we say that $a$ is matched in $s$ also when $a$ does not appear at all in $s$.

For instance, 
$a$ is matched in the sequence $\startchain{\silent}\chainedlink{a}{a}\chainend{\silent}$, while it is 
 pending in the sequences $\startchain{\silent}\chainedlink{a}{\noact}\chainend{\noact}$ and in $\startchain{a}\chainedlink{a}{a}\chainedlink{a}{a}\chainend{a}$.

\begin{definition}[Restriction]
Let $s = \ell_{1}...\ell_{n}$, with $\ell_{i} = \link{\alpha_{i}}{\beta_{i}}$ with $i\in [1,n]$. 
We define the restriction operation $\restrict{a} s$ by letting
\[
\begin{array}{lcllcl lcl}
\restrict{a} s & \defeq &
\left\{\begin{array}{ll}
 (\restrict{a}\ell_1)\dots (\restrict{a}\ell_n)&  \mbox{ if } a \mbox{ is \emph{matched} in } s \\
 \bot &  \mbox{ otherwise}
\end{array}\right.
\end{array}
\]
where:
\[
\begin{array}{lcl@{\hspace{4em}} lcl}
\restrict{a}\link{\alpha}{\beta} &\defeq&\link{(\restrict{a}\alpha)}{(\restrict{a}\beta)}
&
\restrict{a}\alpha&\defeq& \left\{\begin{array}{ll} \silent & \mbox{if } \alpha =a\\
\alpha & \mbox{otherwise}
\end{array}\right.
\end{array}
\]
\end{definition}

%

Restriction on links enjoy properties similar to the usual structural congruence laws for processes.

\begin{restatable}[]{lemma}{lemmaquindici}\label{lemma:res_noact} 
%
%
For any $a,b,\ell,s,s'$
\begin{enumerate}
\item[(i)]
$\restrict{a} {\ell} = \link{\noact}{\noact}$ if and only if $\ell = \link{\noact}{\noact}$.
\item[(ii)]
$\restrict{a} (\merges{s}{s'}) = \merges{s}{\restrict{a} s'}$ if $a$ does not occur in $s$.
\item[(iii)]
$\restrict{a}\restrict{b} s = \restrict{b}\restrict{a} s$.
\end{enumerate}
\end{restatable}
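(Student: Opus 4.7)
My plan is to handle the three parts separately, as each one reduces to a straightforward case analysis once unfolded through the definitions. None of the three require deep combinatorial reasoning; the only real care needed is to keep track of when the partial operations yield $\bot$.

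\medskip
\noindent\textbf{Part (i).} This is immediate from the definitions. By the clause $\restrict{a}\link{\alpha}{\beta}=\link{(\restrict{a}\alpha)}{(\restrict{a}\beta)}$ and the fact that $\restrict{a}\alpha=\noact$ iff $\alpha=\noact$ (since the only alternative value produced is $\silent$ when $\alpha=a$, and otherwise $\alpha$ is returned unchanged), we get $\restrict{a}\ell=\link{\noact}{\noact}$ iff both coordinates of $\ell$ are $\noact$. I would dispatch this in one short paragraph.

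\medskip
\noindent\textbf{Part (ii).} Let $s=\ell_1\cdots\ell_n$ and $s'=\ell'_1\cdots\ell'_n$ have the same length (otherwise both sides are $\bot$ and we are done), and assume $a$ does not occur in $s$. I would first argue that the occurrences of $a$ in $\merges{s}{s'}$ are exactly the occurrences of $a$ in $s'$: indeed, at each position the merge of an action in $s$ with the corresponding action in $s'$ selects the non-$\noact$ one, and since nothing in $s$ equals $a$, the positions where $a$ appears in $\merges{s}{s'}$ are precisely the positions where $a$ appears in $s'$. As a consequence, $a$ is matched in $\merges{s}{s'}$ iff $a$ is matched in $s'$, so the two sides of the equation are simultaneously defined or undefined. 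For the equality of actions position by position, I would argue in the two relevant subcases (the source action is $\noact$ in $s$, or the source action is $\noact$ in $s'$), checking e.g.\ that when $\alpha_i=\noact$ and $\alpha'_i=a$ we have $\restrict{a}(\merges{\alpha_i}{\alpha'_i})=\restrict{a}(a)=\silent=\merges{\noact}{\silent}=\merges{\alpha_i}{(\restrict{a}\alpha'_i)}$, and when $\alpha'_i=\noact$ we have $\alpha_i\neq a$, so $\restrict{a}(\merges{\alpha_i}{\alpha'_i})=\restrict{a}\alpha_i=\alpha_i=\merges{\alpha_i}{\noact}=\merges{\alpha_i}{(\restrict{a}\alpha'_i)}$. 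The remaining cases are symmetric. This is the step I expect to be the most technical, because keeping track of the side-conditions for merge and restriction being defined is where it is easiest to slip.

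\medskip
\noindent\textbf{Part (iii).} Without loss of generality assume $a\neq b$ (the case $a=b$ is trivial). I would observe that $\restrict{b}$ never converts any action into $a$: it only replaces occurrences of $b$ by $\silent$ and leaves every other action fixed. Hence $a$ appears in $\restrict{b} s$ at exactly the positions where it appears in $s$, and by the same argument $b$ appears in $\restrict{a} s$ at exactly the positions where it appears in $s$. It follows that $a$ is matched in $\restrict{b} s$ iff $a$ is matched in $s$, and symmetrically for $b$, so $\restrict{a}\restrict{b} s$ and $\restrict{b}\restrict{a} s$ are defined under the same condition (namely that both $a$ and $b$ be matched in $s$). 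Finally, applied to actions, $\restrict{a}\circ\restrict{b}$ maps $a$ and $b$ to $\silent$ and fixes everything else, which equals $\restrict{b}\circ\restrict{a}$, so pointwise the two chains agree. I would end by noting that extending these observations componentwise to each $\ell_i$ in the chain is immediate.
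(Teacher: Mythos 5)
Your proposal is correct and follows essentially the same route as the paper's own proof: unfold the pointwise definitions of $\restrict{a}$ and $\merges{}{}$, observe for (ii) that since $a$ does not occur in $s$ the matchedness of $a$ in $\merges{s}{s'}$ coincides with its matchedness in $s'$ (so both sides are simultaneously defined), and for (i) and (iii) reduce to the action-level facts $\restrict{a}\alpha=\noact$ iff $\alpha=\noact$ and $\restrict{a}\restrict{b}\alpha=\restrict{b}\restrict{a}\alpha$. If anything, your treatment of the $\bot$ cases in (ii) and (iii) is more explicit than the paper's rather terse argument, which is a virtue rather than a deviation.
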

%
%
\begin{example}
Let $s = \startchain{\silent}\chainedlink{a}{a}\chainedlink{b}{\noact}\chainend{\noact}$ and 
$s'=\startchain{\noact}\chainedlink{\noact}{\noact}\chainedlink{\noact}{b}\chainend{\silent}$. Then, we have that
$\restrict{a} s =(\restrict{a} \link{\silent}{a}) ( \restrict{a}\link{a}{b})(\restrict{a}\link{\noact}{\noact})$ $=$
$\startchain{\silent}\chainedlink{\silent}{\silent}\chainedlink{b}{\noact}\chainend{\noact}$, while
$\restrict{a} (\merges{s}{s'}) = \startchain{\silent}\chainedlink{\silent}{\silent}\chainedlink{b}{b}\chainend{\silent} = \merges{(\restrict{a}s)}{s'}$, because $a$ does not occur in $s'$.
\end{example}

Finally, we prove a technical lemma, similar to Lemma~\ref{lem:mergestretch} for the merge, that will be exploited in the proof of Lemma~\ref{lem:stretchlts}.

\begin{restatable}[]{lemma}{lemmadiciassette}\label{lemma:res} 
Let  $s$ and $s'$ be two link chains such that  $\restrict{a}s$ is defined and $\restrict{a}s \blackstretcheq s'$, then there exists $s''$ such that $s' = \restrict{a}s''$ and  $s \blackstretcheq s''$. 
\end{restatable}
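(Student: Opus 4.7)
The plan is to argue by induction on the length $n$ of the derivation of $\restrict{a}s \blackstretcheq s'$, viewed as a sequence of one-step axiom applications of Definition~\ref{def:black} in context (with each axiom applied in either direction). For $n = 0$ I would simply take $s''\defeq s$. For the inductive step, I would factor the derivation as $\restrict{a}s \blackstretcheq t \blackstretcheq s'$ where the last step is a single axiom application in some context $w_1[\cdot]\, w_2$. The inductive hypothesis yields a chain $t''$ with $s \blackstretcheq t''$ and $\restrict{a}t'' = t$. Since restriction is defined pointwise on links, I can split $t''$ as $t'' = u_1\, v\, u_2$ with $\restrict{a}u_i = w_i$ and $v$ restricting to the sub-chain of $t$ on which the axiom acts. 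Applying the same axiom to $t''$ in the context $u_1[\cdot]\, u_2$ then produces the required $s''$, giving $t'' \blackstretcheq s''$ (hence $s \blackstretcheq s''$ by transitivity) and $\restrict{a}s'' = s'$.

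Two observations will make the case analysis routine. First, by Lemma~\ref{lemma:res_noact}(i), a virtual link in $t$ can arise only from a virtual link in $t''$, so the axioms inserting or removing occurrences of $\link{\noact}{\noact}$ lift verbatim from $t$ to $t''$. Second, the axioms of $\blackstretcheq$ never introduce nor eliminate occurrences of any given action of $\mathcal{A}$; since $\restrict{a}s$ contains no $a$, no intermediate chain of the derivation does, and in particular $t$ does not. Therefore, whenever the stretching axiom $\startchain{\alpha}\chainedlink{c}{c}\chainend{\beta}\blackstretcheq \startchain{\alpha}\chainedlink{c}{\noact}\chainedlink{\noact}{c}\chainend{\beta}$ is used, one must have $c\neq a$, so the matched pair of $c$'s in $t$ comes from a matched pair of $c$'s at the same positions of $t''$, and the same axiom applies to $t''$.

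The main obstacle will be to check that the reconstructed $s''$ is a well-formed link chain at each step. This should be automatic: $s''$ differs from $s'$ only at positions where some $a\in\mathcal{C}$ in $t''$ is the preimage of a $\tau$ in $t$; since $\restrict{a}s$ is defined, these $a$'s occur in matched adjacent pairs in $t''$, and since the stretching axiom never involves $a$, the corresponding pair in $s''$ remains adjacent. Consequently, both chain-formation conditions of a link chain that hold in $s'$ carry over to $s''$, completing the inductive step.
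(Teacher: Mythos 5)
Your proposal is correct and follows essentially the same route as the paper's proof: the paper likewise reduces to single applications of the axioms of $\blackstretcheq$ (in either direction) and closes by transitivity, which is exactly your induction on derivation length, and its case analysis rests on the same two facts you isolate --- that virtual links in $\restrict{a}s$ can only come from virtual links in $s$ (Lemma~\ref{lemma:res_noact}(i)), and that the channel in the stretching axiom must differ from $a$ because $a$ cannot occur in $\restrict{a}s$. The only difference is presentational: you state the lifting argument uniformly, while the paper spells it out axiom by axiom.
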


\paragraph{Renaming}

The last operation that we present is called \emph{renaming} and allows us to change, in a uniform way, the channel names appearing in a link chain. A \emph{channel renaming function} is a bijection $\phi: \mathcal{A} \to \mathcal{A}$ such that $\phi(\silent) = \silent$ and $\phi({\noact}) = \  \noact$.

\begin{definition}[Renaming]\label{def:chainren}
Let $s = \ell_{1}...\ell_{n}$, with $\ell_{i} = \link{\alpha_{i}}{\beta_{i}}$ with $i\in [1,n]$, and $\phi$ be a channel renaming function.  We define the renaming operation $s[\phi]$ by letting
\[
s[\phi] \defeq (\ell_1[\phi])\dots (\ell_n[\phi])
\qquad
(\link{\alpha}{\beta})[\phi] \defeq \link{\phi(\alpha)}{\phi(\beta)}
\]
\end{definition}

It can be readily checked that channel renaming functions enjoy the following properties.

\begin{restatable}[]{lemma}{lemmadiciannove}\label{lemma:rename} 
For any $a,\phi,\psi,\ell,s,s'$
\begin{enumerate}
\item[(i)]
$\ell[\phi] = \link{\noact}{\noact}$ if and only if $\ell = \link{\noact}{\noact}$.
\item[(ii)]
$(\merges{s}{s'})[\phi] = \merges{s[\phi]}{(s'[\phi])}$.
\item[(iii)]
$(\restrict{a} s)[\phi] = \restrict{\phi(a)} (s[\phi])$.
\item[(iv)]
$s[\phi][\psi] = s[\psi\circ \phi]$.
\item[(v)]
If $s \blackstretcheq s'$ then $s[\phi] \blackstretcheq s'[\phi]$.
\end{enumerate}
\end{restatable}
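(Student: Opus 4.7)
The plan is to handle the five items separately, exploiting the pointwise nature of renaming together with the two defining properties of $\phi$: it is a bijection, and it fixes both $\silent$ and $\noact$. These properties guarantee that for any action $\alpha$ we have $\phi(\alpha) = \noact$ iff $\alpha = \noact$, and $\phi(\alpha) = \silent$ iff $\alpha = \silent$, which is the key ingredient throughout.

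For (i), I would simply compute $\ell[\phi] = \link{\phi(\alpha)}{\phi(\beta)}$ and observe that this equals $\link{\noact}{\noact}$ iff $\phi(\alpha) = \phi(\beta) = \noact$, which by the property above is equivalent to $\alpha = \beta = \noact$. For (ii), I would first show the pointwise claim $\phi(\merges{\alpha}{\beta}) = \merges{\phi(\alpha)}{\phi(\beta)}$: the righthand side is defined exactly when one of $\phi(\alpha), \phi(\beta)$ is $\noact$, i.e.\ exactly when $\merges{\alpha}{\beta}$ is, and in that case both reduce to $\phi$ of the non-$\noact$ argument. Lifting to link chains is then straightforward, but I must also verify that $\merges{s}{s'}$ is well defined iff $\merges{s[\phi]}{s'[\phi]}$ is; this reduces to checking that the two adjacency clauses in the link chain definition are preserved by $\phi$, which follows from bijectivity and the fact that $\phi$ fixes $\silent$.

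For (iii), the delicate step is showing that $a$ is matched in $s$ iff $\phi(a)$ is matched in $s[\phi]$. Both clauses of the definition of matched action are formulated purely in terms of equalities (and non-equalities) between channel names, and these are preserved by any bijection; since $\phi(\alpha_1) \neq \phi(a)$ iff $\alpha_1 \neq a$ (similarly for $\beta_n$ and for the interior positions), the equivalence follows. Once definedness is aligned, the pointwise equality $\restrict{\phi(a)}\phi(\alpha) = \phi(\restrict{a}\alpha)$ finishes the proof: both sides yield $\silent$ precisely when $\alpha = a$, and equal $\phi(\alpha)$ otherwise. Item (iv) is immediate by unfolding: $(\link{\alpha}{\beta})[\phi][\psi] = \link{\psi(\phi(\alpha))}{\psi(\phi(\beta))} = (\link{\alpha}{\beta})[\psi \circ \phi]$, and this extends to chains pointwise.

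The main work lies in (v), which I would prove by induction on the derivation of $s \blackstretcheq s'$. Reflexivity, symmetry, transitivity, and congruence under concatenation are routine, so it suffices to check each generating axiom of Definition~\ref{def:black} is preserved by $[\phi]$. The three axioms involving insertion/deletion of $\link{\noact}{\noact}$ are preserved because $\phi$ fixes $\noact$, hence $\link{\noact}{\noact}[\phi] = \link{\noact}{\noact}$. The remaining axiom $s_1 \startchain{\alpha}\chainedlink{a}{a}\chainend{\beta} s_2 \blackstretcheq s_1 \startchain{\alpha}\chainedlink{a}{\noact}\chainedlink{\noact}{a}\chainend{\beta} s_2$, when renamed, becomes exactly the corresponding instance with $\phi(\alpha), \phi(\beta), \phi(a)$ in place of $\alpha, \beta, a$, which is again an axiom. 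The main obstacle I expect is the bookkeeping for (iii), where one must carefully align the partiality conditions on both sides, and (v), where one must ensure the induction ranges over the full equivalence closure rather than just the axioms.
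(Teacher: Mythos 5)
Your proposal is correct and follows essentially the same route as the paper's proof: pointwise computations for items (i)--(iv) exploiting that $\phi$ is a bijection fixing $\silent$ and $\noact$, and a case analysis on the generating axioms of $\blackstretcheq$ (closed under the equivalence rules) for item (v). Your explicit treatment of the definedness side conditions in (ii) and (iii) --- that merges and matched actions are preserved under $\phi$ --- is in fact slightly more careful than the paper, which glosses over these points, but it does not constitute a different approach.
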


We conclude this section by proving a last technical lemma that will be exploited in the proof of Lemma~\ref{lem:stretchlts}.

\begin{restatable}[]{lemma}{lemmaventi}\label{lemma:ren} 
Let  $\phi$ be a channel renaming function and $s,s'$ be two link chains such that $s[\phi] \blackstretcheq s'$, then there exists $s''$ such that $s' = s''[\phi]$ and  
$s \blackstretcheq s''$. 
\end{restatable}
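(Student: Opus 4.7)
My plan is to exploit the fact that a channel renaming function $\phi$ is, by definition, a bijection on $\mathcal{A}$ that fixes both $\silent$ and $\noacts$. Hence its inverse $\phi^{-1}$ is again a channel renaming function in the sense of Definition~\ref{def:chainren}: it is a bijection on $\mathcal{A}$, and since $\phi(\silent)=\silent$ and $\phi(\noacts)=\noacts$, we also have $\phi^{-1}(\silent)=\silent$ and $\phi^{-1}(\noacts)=\noacts$. In particular, all items of Lemma~\ref{lemma:rename} apply to $\phi^{-1}$ as well.

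The witness is then the obvious one: set $s'' \defeq s'[\phi^{-1}]$. Using Lemma~\ref{lemma:rename}(iv) I compute
\[
s''[\phi] \;=\; s'[\phi^{-1}][\phi] \;=\; s'[\phi \circ \phi^{-1}] \;=\; s'[\mathit{id}] \;=\; s',
\]
which gives the first required conclusion. For the second, I start from the hypothesis $s[\phi] \blackstretcheq s'$ and apply the renaming $\phi^{-1}$ to both sides; by Lemma~\ref{lemma:rename}(v) the equivalence $\blackstretcheq$ is preserved, so that
\[
s[\phi][\phi^{-1}] \;\blackstretcheq\; s'[\phi^{-1}] \;=\; s''.
\]
Another appeal to Lemma~\ref{lemma:rename}(iv) collapses the left-hand side to $s[\phi^{-1}\circ \phi] = s[\mathit{id}] = s$, yielding $s \blackstretcheq s''$ as required.

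There is essentially no obstacle: the whole argument reduces to observing that renamings form a group on which $(-)[\phi]$ acts functorially (Lemma~\ref{lemma:rename}(iv)) and compatibly with $\blackstretcheq$ (Lemma~\ref{lemma:rename}(v)). The only point worth stating explicitly is why $\phi^{-1}$ itself qualifies as a channel renaming function, which is where the bijectivity clause and the fixing of $\silent,\noacts$ in the definition of renaming pay off; without either assumption the construction of $s''$ would fail.
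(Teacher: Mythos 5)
Your proof is correct and takes essentially the same route as the paper's: both define $s'' \defeq s'[\phi^{-1}]$ and conclude by Lemma~\ref{lemma:rename}(iv--v), using functoriality of renaming and preservation of $\blackstretcheq$. Your explicit observation that $\phi^{-1}$ is itself a channel renaming function (so that those lemmata apply to it) is a small point the paper leaves implicit, but otherwise the two arguments coincide.
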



\section{A Core Network Algebra}
\label{sec:network}

In this section we build on the theory of links and link chains to present the syntax and operational semantics of \CNA.

\subsection{Syntax}

\begin{definition}
The \CNA\ processes are generated by the following grammar:
\[
\begin{array}{rcl}
P,Q & ::= &
\nil \prodsep
\ell. P  \prodsep
P+Q \prodsep
P|Q \prodsep
\restrict{a} P \prodsep
P[\phi] \prodsep
A
\end{array}
\]
\noindent
where $\ell$ is a solid link (i.e.\  $\ell = \link{\alpha}{\beta}$ with $\alpha,\beta\neq \ \noacts$),
$\phi$ is a channel renaming function,
and $A$ is a process identifier for which we assume a definition $A \defeq P$ is available in a given set $\Delta$ of (possibly recursive) process definitions.
\end{definition}

As usual, we write $\tilde{a}$ for tuples of channels and we allow parametric process definitions of the form $A(\tilde{a}) \defeq P$,
where $\tilde{a}$ is the set of free channels of $P$.
For brevity, in the examples, we sometimes write $A \defeq P$ leaving implicit that the free channles of $P$ are the parameters of $A$.

\begin{remark}
The extension in which generic link chains are allowed as action prefixes instead of solid links is discussed in Section~\ref{subsec:altdef}.
\end{remark}

It is evident that processes are built over a CCS-like syntax,
with inactive process $\nil$, action prefix $\ell.P$, choice $P+Q$, parallel $P|Q$, restriction $\restrict{a} P$, renaming $P[\phi]$ and constant definition
 $A$, but where the underlying synchronisation algebra~\cite{DBLP:journals/tcs/Winskel84} is based on link chains.
This is made evident by the operational semantics that we present next.

As usual, $\restrict{a}P$ binds the occurrences of $a$ in $P$, the sets of free and of bound names of a process $P$ are defined in the obvious way and 
processes are taken up to alpha-conversion of bound names.
We shall sometimes omit trailing $\nil$, e.g.~by writing $\link{a}{b}$ instead of $\link{a}{b}.\nil$.

\begin{example}
\CNA\ provides us with a natural way to rephrase the communication primitives of usual process calculi, such as CCS and CSP~\cite{Hoare85}, in terms of links.
\begin{itemize}
\item
Intuitively, the output action $\overline{a}$ (resp. the input action $a$) of CCS can be seen as the link $\link{\silent}{a}$
(resp. $\link{a}{\silent}$) and the solid link chain $\startchain{\silent}\chainedlink{a}{a}\chainend{\silent}$  as a dyadic communication, analogous to the silent action $\tau$ of CCS.
\item
The action $a$ of CSP can be seen as the link $\link{a}{a}$ and the solid link chain $\startchain{a}\chainedlink{a}{a}\chainedlink{a}{a}\chainend{a}$ as a CSP-like communication among three peers over $a$.
\end{itemize}
\end{example}

\subsection{Operational Semantics}

The idea is that communication can be routed across several processes by combining the links they make available to form a link chain. Since the length of the link chain is not fixed {\em a priori}, an open multi-party synchronisation is realised.



\begin{figure}[t]
\begin{center} 
\begin{prooftree} 
\AxiomC{$s \blackstretcheq \ell $} 
\RightLabel{\scriptsize(\textit{Act})}
\UnaryInfC{$\ell.P \xrightarrow{s} P$} 
\DisplayProof
\
\AxiomC{$P \xrightarrow{s} P'$} 
\RightLabel{\scriptsize(\textit{Lsum})}
\UnaryInfC{$P+Q \xrightarrow{s} P'$} 
\DisplayProof
\
\AxiomC{$P \xrightarrow{s} P'$} 
\RightLabel{\scriptsize(\textit{Res})}
\UnaryInfC{$\restrict{a}P \xrightarrow{\restrict{a}s} \restrict{a}P'$} 
\end{prooftree} 
\end{center}

\begin{center} 
\begin{prooftree} 
\AxiomC{$P \xrightarrow{s} P'$} 
\RightLabel{\scriptsize(\textit{Ren})}
\UnaryInfC{$P[\phi] \xrightarrow{s[\phi]} P'[\phi]$} 
\DisplayProof
\
\AxiomC{$P \xrightarrow{s} P'$} 
\RightLabel{\scriptsize(\textit{Lpar})}
\UnaryInfC{$P|Q \xrightarrow{s} P'|Q$} 
\DisplayProof
\ 
\AxiomC{$P \xrightarrow{s} P'$} 
\AxiomC{$Q \xrightarrow{s'} Q'$} 
\RightLabel{\scriptsize(\textit{Com})}
\BinaryInfC{$P|Q \xrightarrow{\merges{s}{s'}} P'|Q'$} 
\DisplayProof
\end{center}
\begin{center}
\AxiomC{$P \xrightarrow{s} P'$}
 \AxiomC{$(A \defeq P)\in\Delta$}
\RightLabel{\scriptsize(\textit{Ide})}
\BinaryInfC{$A \xrightarrow{s} P'$} 
\end{prooftree} 
\end{center}
\caption{SOS semantics of the \CNA\  (rules (\textit{Rsum}) and (\textit{Rpar}) are omitted).}
\label{fig:cnasos}
\end{figure}

The operational semantics is defined in terms of a Labelled Transition System, in which states are \CNA\ processes, labels are link chains, and transitions are generated by the SOS rules in Figure~\ref{fig:cnasos}. 
%
Notice that the rules are very similar to the ones of CCS, 
apart from the labels that record the link chains involved in the transitions:
moving from dyadic to \emph{linked} interaction does not introduce any complexity burden from the formal point of view.

We comment in details the rules (\textit{Act}), (\textit{Res}), and (\textit{Com}). 
In rules (\textit{Res}) and (\textit{Com}) we leave implicit the side conditions $\restrict{a}s \neq \bot$ and $\merges{s}{s'} \neq \bot$, respectively (they can be easily recovered by noting that otherwise the label of the transition in the conclusion would be undefined).

The rule (\textit{Act}) states that $\ell.P \xrightarrow{s} P$ for any link chain $s$, whose unique solid link is $\ell$,  i.e.\  any $s$ such that $s \blackstretcheq \ell$
(we recall that $s \blackstretcheq \ell$ if $s$ and $\ell$  differ only for the presence of virtual links).
Intuitively, $\ell.P$ can take part in any interaction, in any (admissible) position.
To join in a communication, $\ell.P$ should exhibit the capability to enlarge its link $\ell$ to a link chain $s \blackstretcheq \ell$, whose length is the same as the length of the chains offered by all the other participants, so to proceed with the merge operation. Following the early style, the suitable length is 
inferred at the time of deducing the input transition.
Note that, by definition of link chain, if one site of $\ell$ is $\silent$, then $\ell$ can only appear at one of the extremes of $s$.

The rule (\textit{Res}) can serve different aims: 
(i) \emph{floating}, if $a$ does not occur in $s$, then $\restrict{a}s = s$ and  $\restrict{a}P \xrightarrow{s} \restrict{a}P'$;
(ii) \emph{hiding}, if $a$ is matched in $s$ ( i.e.\  $a$ appears as sites already matched by adjacent links), then all occurrences of $a$ in $s$ are transformed to $\silent$ in $\restrict{a}s$;
(iii) \emph{blocking}, if $a$ is pending in $s$ ( i.e.\  there are some
unmatched occurrences of $a$ in $s$), then $\restrict{a}s = \bot$ and the rule cannot be applied.

In the (\textit{Com}) rule the link chains recorded on both the premises' transitions are merged in the conclusion's transition.
This is possible only if $s$ and $s'$ are to some extent ``complementary''.
Contrary to CCS, the rule (\textit{Com}) can appear several times in the proof tree of a transition, because $\merges{s}{s'}$ can still contain virtual links (if $s$ and $s'$ had a virtual link in the same position) and can possibly be merged with other link chains.
However, when $\merges{s}{s'}$ is solid, no further synchronisation is possible (by Lemma~\ref{lemma:solid} (ii)).

%
%
%
\begin{example}
Let
$P = \link{\silent}{a}.P_{1}\ |\ \restrict{b} Q$ and $Q = \link{b}{\silent}.P_{2}\ |\ \link{a}{b}.\nil$, for some processes $P_1$ and $P_2$.
The process $\link{\silent}{a}.P_{1}$ can perform an output on $a$, the process $ \link{b}{\silent}.P_{2}$ can perform an input from $b$; the process $\link{a}{b}$ provides a one-shot link forwarder from $a$ to $b$. 
Their links match along the sequence and a three-party interaction can take place.
Together, these processes can indeed synchronise by agreeing to form the solid link chain
${\startchain{\silent}\chainedlink{a}{a}\chainedlink{\silent}{\silent}\chainend{\silent}}$ of length three, as follows.\footnote{Note that $b$ is restricted and therefore the matched communication on $b$ is replaced by $\tau$ in the observed chain.}

{\small
\begin{center} 
\begin{prooftree} 
\AxiomC{} 
\RightLabel{\scriptsize(\textit{Act})}
\UnaryInfC{$\link{\silent}{a}.P_{1} \xrightarrow{\startchain{\silent}\chainedlink{a}{\noact}\chainedlink{\noact}{\noact}\chainend{\noact}} P_{1}$} 
\! \!\!\! \!\!
      \AxiomC{} 
      \RightLabel{\scriptsize(\textit{Act})}
      \UnaryInfC{$\link{b}{\silent}.P_{2} \xrightarrow{\startchain{\noact}\chainedlink{\noact}{\noact}\chainedlink{\noact}{b}\chainend{\silent}} P_{2}$} 
      \AxiomC{} 
      \RightLabel{\scriptsize(\textit{Act})}
      \UnaryInfC{$\link{a}{b}.\nil \xrightarrow{\startchain{\noact}\chainedlink{\noact}{a}\chainedlink{b}{\noact}\chainend{\noact}} \nil$} 
   \RightLabel{\scriptsize(\textit{Com})}
\BinaryInfC{$Q \xrightarrow{\startchain{\noact}\chainedlink{\noact}{a}\chainedlink{b}{b}\chainend{\silent}} P_{2}|\nil$} 
%
   \RightLabel{\scriptsize(\textit{Res})}
   \UnaryInfC{$\restrict{b}Q \xrightarrow{\startchain{\noact}\chainedlink{\noact}{a}\chainedlink{\silent}{\silent}\chainend{\silent}} \restrict{b} (P_{2}|\nil)$} 
\RightLabel{\scriptsize(\textit{Com})}
\BinaryInfC{$P \xrightarrow{\startchain{\silent}\chainedlink{a}{a}\chainedlink{\silent}{\silent}\chainend{\silent}} P_{1}|\restrict{b} (P_{2}|\nil)$} 
\end{prooftree} 
\end{center}
}


\end{example}

The following lemma, whose proof goes by rule induction, shows that labels behave like an accordion.
Concretely, any label $s$ in a transition is replaceable with any other chain having a different number of virtual links $\link{\noact}{\noact}$ added to $s$ according to the axioms of $\blackstretcheq$. 
The result builds on the previous technical Lemmata~\ref{lem:mergestretch} and~\ref{lemma:res}.
This fact will be later exploited in Section~\ref{abs-sem}, where the abstract semantics is given.

\begin{lemma}[Accordion Lemma]\label{lem:stretchlts}
If $P\xrightarrow{s} P'$ and $s \blackstretcheq s'$,
then $P\xrightarrow{s'} P'$.
\end{lemma}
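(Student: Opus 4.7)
The plan is to proceed by rule induction on the derivation of $P \xrightarrow{s} P'$, using the three technical lemmas about how $\blackstretcheq$ interacts with the chain operations (merge, restriction, renaming) as ``pull-back'' tools to massage the label of each premise before invoking the induction hypothesis.

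The easy cases are (\textit{Act}), (\textit{Lsum}), (\textit{Rsum}), (\textit{Lpar}), (\textit{Rpar}), and (\textit{Ide}). For (\textit{Act}), a transition $\ell.P \xrightarrow{s} P$ rests on $s \blackstretcheq \ell$, so from $s \blackstretcheq s'$ we get $s' \blackstretcheq \ell$ by transitivity and re-apply (\textit{Act}). The remaining easy cases apply the induction hypothesis to the premise (with the same label $s$) and then re-apply the rule, since none of these operators touches the label.

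The three genuine cases are (\textit{Res}), (\textit{Ren}) and (\textit{Com}). In each, the label of the conclusion is obtained by transforming the label of the premise, so the equivalence $\blackstretcheq$ has to be ``pulled back'' through that transformation before I can apply the induction hypothesis. Concretely:
\begin{itemize}
\item (\textit{Res}): from $P \xrightarrow{s} P'$ I have $\restrict{a}P \xrightarrow{\restrict{a}s} \restrict{a}P'$; given $\restrict{a}s \blackstretcheq s'$, Lemma~\ref{lemma:res} yields some $s''$ with $s' = \restrict{a}s''$ and $s \blackstretcheq s''$; the IH gives $P \xrightarrow{s''} P'$, and (\textit{Res}) concludes.
\item (\textit{Ren}): similarly, from $s[\phi] \blackstretcheq s'$ Lemma~\ref{lemma:ren} provides $s''$ with $s' = s''[\phi]$ and $s \blackstretcheq s''$, and (\textit{Ren}) on the IH-derived $P \xrightarrow{s''} P'$ concludes.
\item (\textit{Com}): from premises $P \xrightarrow{s_1} P'$ and $Q \xrightarrow{s_2} Q'$ the conclusion is labelled $\merges{s_1}{s_2}$, and given $\merges{s_1}{s_2} \blackstretcheq s'$, Lemma~\ref{lem:mergestretch} yields $s'_1,s'_2$ with $s_i \blackstretcheq s'_i$ and $\merges{s'_1}{s'_2} = s'$; two applications of the IH and (\textit{Com}) conclude.
\end{itemize}

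I expect the main obstacle to be nothing in the proof itself (which is a clean rule induction once the three pull-back lemmas are in hand), but rather checking that each invocation is well-typed: one has to verify that in (\textit{Res}) the restriction $\restrict{a}s''$ is defined (it is, since $s \blackstretcheq s''$ and the merge/restriction lemmas preserve matchedness), and in (\textit{Com}) that the merge $\merges{s'_1}{s'_2}$ is defined (it is, by the very conclusion of Lemma~\ref{lem:mergestretch}). These are precisely the side conditions already baked into the statements of Lemmas~\ref{lem:mergestretch}, \ref{lemma:res} and~\ref{lemma:ren}, which is exactly why the authors isolated them as separate technical results. Once these are invoked, the induction goes through uniformly.
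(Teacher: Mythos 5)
Your proposal is correct and follows essentially the same route as the paper's own proof: a rule induction where (\textit{Act}) is handled by transitivity of $\blackstretcheq$, the sum/parallel/identifier cases pass the unchanged label through the induction hypothesis, and the three nontrivial cases (\textit{Res}), (\textit{Ren}), (\textit{Com}) invoke exactly Lemmas~\ref{lemma:res}, \ref{lemma:ren} and~\ref{lem:mergestretch} as pull-back results before applying the induction hypothesis. Your remark about the definedness side conditions being discharged by the statements of those lemmas matches how the paper handles them as well.
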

\begin{proof}
The proof is by rule induction.
\begin{description}
\item[{\bf rule (Act)}]
Assume $\ell.P\xrightarrow{s} P$ with $s \blackstretcheq s'$. We need to prove that $\ell.P\xrightarrow{s'} P$
Since by hypothesis $s \blackstretcheq \ell$ then, by transitivity,  $s' \blackstretcheq \ell$.
By applying rule (Act) we get  the thesis $\ell.P \xrightarrow{s'}P$.
\item[{\bf rule (Lsum)}]
Assume $P+Q \xrightarrow{s} P'$ with $P \xrightarrow{s}P'$ and  $s \blackstretcheq s'$.  
We need to prove that $P+Q \xrightarrow{s'} P'$. By inductive hypothesis 
we have  that $P \xrightarrow{s'}P'$, and by applying rule $(Lsum)$ we get the thesis.
For the rules {\bf (LPar)}, and {\bf (Ide)} the proof is similar to this case and thus omitted.
\item[{\bf rule (Res)}]
Assume $\restrict{a}P  \xrightarrow{\restrict{a}s} \restrict{a}P'$ with $P  \xrightarrow{s}P'$ and $\restrict{a}s\blackstretcheq s'$. We want to prove that $\restrict{a}P  \xrightarrow{s'} \restrict{a}P'$.
By Lemma~\ref{lemma:res}, there exists $s''$ s.t. $s' = \restrict{a}s''$ with $s\blackstretcheq s''$. Then, by inductive hypothesis we have that $P  \xrightarrow{s''}P'$, and by applying rule $(Res)$ we obtain the thesis.
\item[{\bf rule (Ren)}] 
Assume $P[\phi]  \xrightarrow{s[\phi]} P'[\phi]$ with $P  \xrightarrow{s}P'$ and $s[\phi] \blackstretcheq s'$. We want to prove that $P[\phi]  \xrightarrow{s'} P'[\phi]$. By Lemma~\ref{lemma:ren}, there exists $s''$ such that $s' = s''[\phi]$ and $s \blackstretcheq s''$. Then, by inductive hypothesis we have that $P  \xrightarrow{s''}P'$, and by applying rule 
$(Ren)$ we get the thesis.
\item[{\bf rule (Com)}]
Assume $P|Q \xrightarrow{s} P'|Q'$ and $s \blackstretcheq s'$. We want to prove that $P|Q \xrightarrow{s'} P'|Q'$.
By hypothesis, there exist $s_1$ and $s_2$ such that $P \xrightarrow{s_1} P'$ and 
$Q \xrightarrow{s_2} Q'$, with $s = \merges{s_1}{s_2}$.
By Lemma~\ref{lemma:eqlc}, there exist $s_1'$ and $s_2'$ such that $s_1' \blackstretcheq s_1$, 
$s_2' \blackstretcheq s_2$, and  $\merges{s_1'}{s_2'} = s'$. By inductive hypothesis, 
 $P \xrightarrow{s_1'} P'$ and  $Q \xrightarrow{s_2'} Q'$. Finally, by applying 
 rule $(Com)$ we get the thesis.
\end{description}
\end{proof}

\begin{example}[Forwarders]
We give a few examples to show how flexible is \CNA\ for defining ``forwarding'' policies.
We have already seen a one-shot and one-hop forwarder from $a$ to $b$ that can be written as $\link{a}{b}.\nil$.
Its persistent version is just written as $R(a,b) \defeq \link{a}{b}.R(a,b)$.
Moreover, the process $R(a,b) \ | R(b,a)$ provides a sort of name fusion, making $a$ and $b$ interchangeable.

An alternating forwarder $A(a,b,c)$ from $a$ to $b$ first and then to $c$ can be defined as 
\[A(a,b,c) \defeq \link{a}{b}.\link{a}{c}.A(a,b,c).\]

A persistent non-deterministic forwarder $C(a,\widetilde{c})$ (the $C$ stands for \emph{choice}), from $a$ to $c_{1},...,c_{n}$ can be \mbox{written as} 
\[C(a,\widetilde{c}) \defeq \link{a}{c_{1}}.C(a,\widetilde{c})+ \cdots + \link{a}{c_{n}}.C(a,\widetilde{c}).\]

Similarly, $J(\widetilde{b},a)$ defined as (the $J$ stands for \emph{join}\footnote{With the term ``join'' we refer to the fact that messages from different sources are forwarded to the same channel. It has no relation with join patterns in join calculus.})
\[J(\widetilde{b},a)\defeq \link{b_{1}}{a}.J(\widetilde{b},a) + \cdots + \link{b_{m}}{a}.J(\widetilde{b},a)\] 
is a persistent non-deterministic forwarder, from $b_{1},...,b_{m}$ to $a$.

By combining the two processes above as in
\[F(\widetilde{b},\widetilde{c}) \defeq \restrict{a}(J(\widetilde{b},a)\ |\ C(a,\widetilde{c}))\]
we obtain a persistent forwarder from any of  the $b_{i}$'s  to any of the $c_{j}$'s.
The only admissible transitions have indeed the form
$
F(\widetilde{b},\widetilde{c})  \xrightarrow{s} F(\widetilde{b},\widetilde{c})
$
with $s \blackstretcheq \startchain{b_i}\chainedlink{\silent}{\silent}\chainend{c_j}$ for some $i\in[1,m]$ and $j\in[1,n]$ (because interaction on $a$ is restricted and $\restrict{a}(\startchain{b_i}\chainedlink{a}{a}\chainend{c_j}) = \startchain{b_i}\chainedlink{\silent}{\silent}\chainend{c_j}$).
\end{example}

\begin{example}[Blind routing in \CNA]
We have already observed that CCS processes can be transformed to \CNA\ processes just by transforming action prefixes in link prefixes. Correspondingly, the blind routing example from Section~\ref{sec:CCS} (Example~\ref{blind}) can be encoded in \CNA\ just as follows, where for readability we have omitted the parameters from definitions.
\begin{eqnarray*}
A_i & \defeq & \link{\tau}{\mathit{req}_i}.\link{\tau}{\mathit{think}}.A_i\quad \mbox{ for $i\in[1,2]$} \\
S_j & \defeq & \link{\mathit{srv}_j}{\tau}.\link{\tau}{\mathit{exec}}.S_j + \link{\tau}{\mathit{busy}}.\link{\tau}{\tau}.S_j\quad \mbox{ for $j\in[1,2]$}\\
R & \defeq & \link{\mathit{req}_1}{\tau}.(\link{\tau}{\mathit{srv}_1}.R + \link{\tau}{\mathit{srv}_2}.R) + \link{\mathit{req}_2}{\tau}.\link{\tau}{\mathit{srv}_2}.R \\
N & \defeq &
\restrict{\widetilde{\mathit{req}}} 
\restrict{\widetilde{\mathit{srv}}}
(A_1\ |\ A_2\ |\ R\ |\ S_1\ |\ S_2)
\end{eqnarray*}

For example, the following transitions can be derived from the SOS rules accounting for the case where a request from $A_1$ is assigned to $S_2$:
\begin{eqnarray*}
N
& \xrightarrow{ \startchain{\tau}\chainedlink{\tau}{\tau}\chainend{\tau} } &
\restrict{\widetilde{\mathit{req}}} 
\restrict{\widetilde{\mathit{srv}}}
(\link{\tau}{\mathit{think}}.A_1\ |\ A_2\ |\ (\link{\tau}{\mathit{srv}_1}.R + \link{\tau}{\mathit{srv}_2}.R)\ |\ S_1\ |\ S_2)
\\
& \xrightarrow{ \startchain{\tau}\chainedlink{\tau}{\tau}\chainend{\tau} } &
\restrict{\widetilde{\mathit{req}}} 
\restrict{\widetilde{\mathit{srv}}}
(\link{\tau}{\mathit{think}}.A_1\ |\ A_2\ |\ R\ |\ S_1\ |\ \link{\tau}{\mathit{exec}}.S_2)
\\
& \xrightarrow{ \link{\tau}{\mathit{think}} } &
\restrict{\widetilde{\mathit{req}}} 
\restrict{\widetilde{\mathit{srv}}}
(A_1\ |\ A_2\ |\ R\ |\ S_1\ |\ \link{\tau}{\mathit{exec}}.S_2)
\\
& \xrightarrow{ \link{\tau}{\mathit{exec}} } &
N
\end{eqnarray*}

Note that, in the first two steps, interactions on channels $\mathit{req}_1$ and $\mathit{srv}_2$ are restricted and thus not observable, in fact $\restrict{\mathit{req}_1}(\startchain{\tau}\chainedlink{\mathit{req}_1}{\mathit{req}_1}\chainend{\tau}) = \startchain{\tau}\chainedlink{\tau}{\tau}\chainend{\tau}$ and $\restrict{\mathit{srv}_2}(\startchain{\tau}\chainedlink{\mathit{srv}_2}{\mathit{srv}_2}\chainend{\tau}) = \startchain{\tau}\chainedlink{\tau}{\tau}\chainend{\tau}$.

\end{example}

\begin{example}[Acknowledged routing in \CNA]
The features of \CNA\ become evident in our running example when we come to modelling acknowledged routing  (Example~\ref{ack}). This is because the explicit acknowledgment is no longer necessary as it can be implicitly accounted for by the ability to construct a chain of links.\footnote{Of course, one could also just encode the CCS processes for acknowledged routing just by translating their prefixes as we have done for blind routing.}
Correspondingly, we set
\begin{eqnarray*}
A_i & \defeq & \link{\tau}{\mathit{req}_i}.\link{\tau}{\mathit{think}}.A_i\quad \mbox{ for $i\in[1,2]$} \\
S_j & \defeq & \link{\mathit{srv}_j}{\tau}.\link{\tau}{\mathit{exec}}.S_j + \link{\tau}{\mathit{busy}}.\link{\tau}{\tau}.S_j\quad \mbox{ for $j\in[1,2]$}\\
R & \defeq & \link{\mathit{req}_1}{\mathit{srv}_1}.R + \link{\mathit{req}_1}{\mathit{srv}_2}.R + \link{\mathit{req}_2}{\mathit{srv}_2}.R \\
M & \defeq &
\restrict{\widetilde{\mathit{req}}} 
\restrict{\widetilde{\mathit{srv}}}
(A_1\ |\ A_2\ |\ R\ |\ S_1\ |\ S_2)
\end{eqnarray*}
Note that channels $\mathit{ack}_i$ are not needed and \CNA\ processes $A_i$ and $S_j$ are defined as in the case of blind routing; only the syntax of $R$ has been changed to link, in one single step, the requests from agents with the availability of servers.

For example, the following transitions can be derived from the SOS rules accounting for the case where a request from $A_1$ is assigned to $S_2$:
\begin{eqnarray*}
M
& \xrightarrow{ \startchain{\tau}\chainedlink{\tau}{\tau}\chainedlink{\tau}{\tau}\chainend{\tau} } &
\restrict{\widetilde{\mathit{req}}} 
\restrict{\widetilde{\mathit{srv}}}
(\link{\tau}{\mathit{think}}.A_1\ |\ A_2\ |\ R\ |\ S_1\ |\ \link{\tau}{\mathit{exec}}.S_2)
\\
& \xrightarrow{ \link{\tau}{\mathit{think}} } &
\restrict{\widetilde{\mathit{req}}} 
\restrict{\widetilde{\mathit{srv}}}
(A_1\ |\ A_2\ |\ R\ |\ S_1\ |\ \link{\tau}{\mathit{exec}}.S_2)
\\
& \xrightarrow{ \link{\tau}{\mathit{exec}} } &
M
\end{eqnarray*}

Note that, in the first step, the interaction on channels $\mathit{req}_1$ and $\mathit{srv}_2$ is restricted and thus not observable, in fact $\restrict{\mathit{req}_1}\restrict{\mathit{srv}_2}(\startchain{\tau}\chainedlink{\mathit{req}_1}{\mathit{req}_1}\chainedlink{\mathit{srv}_2}{\mathit{srv}_2}\chainend{\tau}) = \startchain{\tau}\chainedlink{\tau}{\tau}\chainedlink{\tau}{\tau}\chainend{\tau}$.

\end{example}

\begin{example}[Composite, acknowledged routing in \CNA]
Notably, the infrastructure presented in the previous example is already designed in a modular way: infrastructures can be composed without requiring any change and no dead path detection or backtracking mechanisms have to be put in place, because they are taken care by the  \CNA\ ``middleware''.

For instance, take the \CNA\ versions of the infrastructures presented in Example~\ref{composite}:
\begin{eqnarray*}
R' & \defeq & \link{\mathit{req}_1}{s_1}.R' + \link{\mathit{req}_1}{s_2}.R' + \link{\mathit{req}_2}{s_2}.R' \\
R'' & \defeq & \link{s_1}{s'_1}.R'' + \link{s_2}{s'_2}.R' \\
R''' & \defeq & \link{s'_2}{\mathit{srv}_2}.R''' \\
R & \defeq & \restrict{\widetilde{s'}}\restrict{\widetilde{s}} (R'\ |\ R''\ |\ R''')
\end{eqnarray*}

Besides looking considerably more concise than their CCS versions, they make evident that the delivery of any request to a server is atomic as any process executes one (link) action and reduces (recursively) to itself.

At a closer inspection, one may notice that the only possible transitions for $R$ are the following ones 
(up to $\blackstretcheq$, as explained by the Accordion Lemma~\ref{lem:stretchlts}):
\[
R \xrightarrow{\startchain{\mathit{req}_1}\chainedlink{\tau}{\tau}\chainedlink{\tau}{\tau}\chainend{\mathit{srv}_2}} R
\quad\mbox{ and }\quad
R \xrightarrow{\startchain{\mathit{req}_2}\chainedlink{\tau}{\tau}\chainedlink{\tau}{\tau}\chainend{\mathit{srv}_2}} R.
\]

As in the previous examples, note that, e.g.\ in the first step, the interaction on channels $s_2$ and $s'_2$ is restricted and thus not observable, in fact
$\restrict{s_2}\restrict{s'_2} \startchain{\mathit{req}_1}\chainedlink{s_2}{s_2}\chainedlink{s'_2}{s'_2}\chainend{\mathit{srv}_2}= 
\startchain{\mathit{req}_1}\chainedlink{\tau}{\tau}\chainedlink{\tau}{\tau}\chainend{\mathit{srv}_2}$.

Therefore, at a suitable level of abstraction, in which routing details are omitted, we would like to relate the composite infrastructure $R$ with the monolithic infrastructure $R_m$ defined as follows:
\begin{eqnarray*}
R_m & \defeq & \link{\mathit{req}_1}{\mathit{srv}_2}.R_m + \link{\mathit{req}_2}{\mathit{srv}_2}.R_m
\end{eqnarray*}

In the next section we show how this can be formalised.
\end{example}


\section{Abstract semantics}\label{abs-sem}


As usual, we can use the LTS semantics to define suitable behavioural equivalences over processes.
We focus on bisimulation relations.
The accordion lemma (Lemma~\ref{lem:stretchlts}) implies that it makes no sense to distinguish between two labels $s$ and $s'$ such that $s \blackstretcheq s'$, because if one process $p$ can do $s$ and reach $p'$, then it can also do $s'$ and still reach $p'$. 
However, when comparing two labels we would like to abstract away also from the number of hops performed and from the size (not just the length) of the chains, as the following example suggests.

\begin{example}\label{ex:motivatewhitetie}
Take the one-hop forwarder $R(a,b) \defeq \link{a}{b}.R(a,b)$. 
From the operational semantics it is immediate to check that its transitions are all and only
$R(a,b) \xrightarrow{s} R(a,b)$ such that $s \blackstretcheq \link{a}{b}$.

Now connect together two one-hop forwarders in a sequence to form the routing infrastructure 
$T(a,b) \defeq \restrict{c}(R(a,c)\ |\ R(c,b))$. Again it is immediate to check that its transitions are all and only
$T(a,b) \xrightarrow{s} T(a,b)$ such that $s \blackstretcheq \startchain{a}\chainedlink{\silent}{\silent}\chainend{b}$.

Intuitively, we would like $R(a,b)$ and $T(a,b)$ to be interchangeable, as they offer the same routing service.
If we were to compare $R(a,b)$ and $T(a,b)$ using plain bisimilarity, where labels must be matched syntactically, then they would not be equivalent. 
Also if we relax bisimulation by matching transition labels up-to $\blackstretcheq$, the two terms are not equivalent, as it is not true that $\link{a}{b} \blackstretcheq \startchain{a}\chainedlink{\silent}{\silent}\chainend{b}$ (they have different sizes,
i.e.\  they have a different number of solid links,
 and size is preserved by $\blackstretcheq$).
\end{example}

To define a bisimilarity equivalence that relates processes such as $R(a,b)$ and $T(a,b)$ above, we introduce an equivalence coarser than $\blackstretcheq$, written $\stretcheq$, that we use to match labels in the bisimulation game, according to which e.g.\
$\link{a}{b} \stretcheq \startchain{a}\chainedlink{\silent}{\silent}\chainend{b}$.

\begin{definition}[Equivalence $\stretcheq$]\label{def:accordion}
We let $\stretcheq$ be the least equivalence relation 
over link chains closed under the following inference rules:
\[
\frac{s \blackstretcheq s'}{s \stretcheq s'}
\qquad\qquad
s_{1}\startchain{\alpha}\chainedlink{\silent}{\silent}\chainend{\beta}s_{2} \stretcheq s_{1}\link{\alpha}{\beta}s_{2} 
\]
\end{definition}


The only difference between $\stretcheq$ and $\blackstretcheq$ is that the additional axiom of $\stretcheq$ abstracts away from intermediate matched actions that are silent.
The intuition is that such matched actions cannot be split and used to compose longer chains, because they are silent and therefore the matching was made on a restricted channel.

\begin{remark}
We invite the reader to check that the Accordion Lemma~\ref{lem:stretchlts} is only concerned with $\blackstretcheq$ and not with $\stretcheq$.
\end{remark}

We now consider the  equivalences classes given by $\stretcheq$ and its representatives.

\begin{definition}[Essential chain]
A link chain is \emph{essential} if it is composed by alternating solid and virtual links, with solid links at its extremes.
\end{definition}

For example, the chain $\startchain{a}\chainedlink{\silent}{\silent}\chainedlink{b}{b}\chainend{c}$ is not essential, while the chain $\startchain{a}\chainedlink{b}{\noact}\chainedlink{\noact}{b}\chainend{c}$ is essential and we have 
$\startchain{a}\chainedlink{\silent}{\silent}\chainedlink{b}{b}\chainend{c}
\stretcheq
\startchain{a}\chainedlink{b}{\noact}\chainedlink{\noact}{b}\chainend{c}$.

The following lemma shows that 
each $\stretcheq$-equivalence class has a unique essential representative.

\begin{restatable}[]{lemma}{lemmatrentaquattro}\label{lemma:stretcheqone} 
All of the following properties hold for any link chain $s$.
\begin{enumerate}
\item[(i)]
There exists an essential link chain $s'$ such that $s \stretcheq s'$.
\item[(ii)]
If $s$ is essential, for any essential $s'$ such that $s \stretcheq s'$, then $s=s'$.
\end{enumerate}
\end{restatable}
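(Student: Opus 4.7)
My plan is to introduce a projection $\pi$ from link chains to sequences of solid links that will serve both as a constructor for essential normal forms (for part (i)) and as an invariant establishing uniqueness (for part (ii)). Given a chain $s$, I would define $\pi(s)$ by first discarding all virtual links and then iteratively rewriting any adjacent pair $\link{\alpha}{\silent}\link{\silent}{\beta} \rightarrow \link{\alpha}{\beta}$ until no such $\silent$-matched pair remains. Termination is immediate since each rewrite shortens the sequence, and local confluence --- hence well-definedness of $\pi(s)$ --- follows by a direct inspection of overlapping redexes.

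For part (ii), I would first verify that $\pi$ is preserved by each defining axiom of $\stretcheq$. The four $\blackstretcheq$-axioms only manipulate virtual links or insert a virtual between $a$-matched solids (for $a \in \mathcal{C}$), so the underlying solid sequence and its $\silent$-collapsed form are unchanged. The extra $\stretcheq$-axiom $s_1 \startchain{\alpha}\chainedlink{\silent}{\silent}\chainend{\beta} s_2 \stretcheq s_1 \link{\alpha}{\beta} s_2$ yields $\link{\alpha}{\silent}\link{\silent}{\beta}$ on the left, which immediately collapses to $\link{\alpha}{\beta}$, matching the right. Next, for an essential chain $s$, the validity conditions around each inner virtual link $\link{\noact}{\noact}$ force the adjacent solid endpoints to lie in $\mathcal{C}$ rather than being $\silent$, so no $\silent$-collapsing occurs in $\pi(s)$ and it is simply the ordered list of solid links of $s$. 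Since an essential chain is uniquely reconstructed from its solid sequence by interleaving single virtual links, $s \stretcheq s'$ with both $s,s'$ essential forces $\pi(s) = \pi(s')$ and hence $s = s'$.

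For part (i), I would proceed by induction on the size $||s||$. The base case $||s|| = 1$ follows by stripping all virtual links using the first three $\blackstretcheq$-axioms, leaving a single solid link. For the inductive step, if $s$ contains two adjacent solids matched on $\silent$, I apply the $\stretcheq$-axiom to obtain a chain of strictly smaller size and invoke the inductive hypothesis. Otherwise, I use an auxiliary induction on a ``defect count'' --- namely the number of boundary virtuals, plus the number of virtuals beyond the first in each interior virtual run, plus the number of adjacent solid pairs in $s$: each defect is repairable by a single $\blackstretcheq$-step (removing a boundary virtual, compacting two adjacent virtuals, or inserting a separating virtual between an $a$-matched solid pair), strictly decreasing the count, so when it reaches zero, $s$ is already essential.

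The main obstacle lies in the $\pi$-invariance check for the $\stretcheq$-axiom at its boundaries: when $s_1$ ends with $\link{\gamma}{\silent}$ (which, by chain validity, forces $\alpha=\silent$) or when $s_2$ begins with $\link{\silent}{\delta}$, one must verify that the various possible $\silent$-collapse orders yield identical final links on both sides of the axiom. This is a standard but tedious local-confluence analysis that I would carry out by case split on whether $\alpha$ and $\beta$ are themselves $\silent$.
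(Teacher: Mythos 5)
Your proposal is correct, and for part (i) it is essentially the paper's own argument: the paper runs a single induction on the measure $v(s)=n+m+k$ (number of adjacent solid links, of adjacent virtual links, and of virtual links at the extremes), removing one defect per axiom application exactly as in your inner induction, with the $\silent$-collapse handled inside that same induction rather than by your outer induction on $||s||$; this repackaging is immaterial. For part (ii), however, you take a genuinely different route. The paper argues by minimal counterexample: given essential chains $s \stretcheq s'$ with $s \neq s'$ and $s$ of minimal length, it reduces to the case $\ell_1 \neq \ell'_1$ and then appeals to two local invariants of the axioms --- the leftmost non-virtual symbol of a chain is preserved, and any non-virtual symbol adjacent to a virtual link is preserved --- to force $\ell_1 = \ell'_1$, a contradiction. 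You instead build a global invariant: the projection $\pi$ onto $\silent$-collapsed solid sequences, proved stable under each axiom, combined with the observation that chain validity forbids $\silent$ on any site facing a virtual link, so that on essential chains $\pi$ performs no collapse and essential chains are in bijection with their $\pi$-images. The trade-off is clear: your route must establish well-definedness of $\pi$, i.e.\ termination plus local confluence (a single critical pair, $\link{\alpha}{\silent}\link{\silent}{\silent}\link{\silent}{\beta}$, which is joinable, so Newman's lemma applies), a check the paper avoids entirely; in exchange it is more self-contained --- the paper's two preservation invariants are asserted without proof, and its minimality step tacitly uses that stripping equal leading links preserves $\stretcheq$ --- and it yields the canonical representative $\reduce{s}$ and Corollary~\ref{cor:essential} as immediate by-products rather than as a separate remark.
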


It is immediate to check that by orienting the axioms in Definitions~\ref{def:black} and~\ref{def:accordion} from left to right we have a procedure for transforming any link chain $s$ to a unique essential link chain $s'$ such that $s \stretcheq s'$. We write $\reduce{s}$ to denote such a unique representative.

\begin{corollary}\label{cor:essential}
For any link chains $s,s'$ we have $s \stretcheq s'$ iff $\reduce{s} = \reduce{s'}$.
\end{corollary}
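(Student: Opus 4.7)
The plan is to derive the corollary directly from Lemma~\ref{lemma:stretcheqone} together with the definition of $\reduce{\cdot}$ as the unique essential representative of a $\stretcheq$-class. The real work has already been done in Lemma~\ref{lemma:stretcheqone}: part (i) guarantees existence of an essential representative (so $\reduce{s}$ is well defined for every $s$), and part (ii) guarantees uniqueness within an equivalence class. What remains is to package these two facts to match the biconditional.

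For the forward direction, I would assume $s \stretcheq s'$. By Lemma~\ref{lemma:stretcheqone}(i) there exist essential chains $\reduce{s}$ and $\reduce{s'}$ with $s \stretcheq \reduce{s}$ and $s' \stretcheq \reduce{s'}$. Since $\stretcheq$ is an equivalence relation (by Definition~\ref{def:accordion}), chaining these together with the assumption $s \stretcheq s'$ yields $\reduce{s} \stretcheq \reduce{s'}$. Both sides are essential, so Lemma~\ref{lemma:stretcheqone}(ii) applied to $\reduce{s}$ gives $\reduce{s} = \reduce{s'}$.

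For the backward direction, assume $\reduce{s} = \reduce{s'}$. From the defining property of $\reduce{\cdot}$ we have $s \stretcheq \reduce{s}$ and $s' \stretcheq \reduce{s'}$. Substituting the assumed equality and using symmetry and transitivity of $\stretcheq$, we obtain $s \stretcheq \reduce{s} = \reduce{s'} \stretcheq s'$, hence $s \stretcheq s'$.

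No step here poses a real obstacle: the only thing to be careful about is the tacit appeal to the procedural description of $\reduce{\cdot}$ that precedes the corollary, which needs to be consistent with the abstract specification used here (namely, that $\reduce{s}$ is essential and $\stretcheq$-related to $s$). Both properties are immediate from the fact that the rewriting orients the axioms of $\blackstretcheq$ and of $\stretcheq$ from left to right, producing an essential chain in the same $\stretcheq$-class. Once this is noted, the corollary is a two-line consequence of Lemma~\ref{lemma:stretcheqone}.
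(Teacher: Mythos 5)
Your proof is correct and matches the paper's intent exactly: the paper states this corollary without proof, treating it as an immediate consequence of Lemma~\ref{lemma:stretcheqone} (existence and uniqueness of essential representatives) together with the defining property of $\reduce{\cdot}$, which is precisely the two-direction argument you spell out. Nothing is missing; your explicit use of symmetry and transitivity of $\stretcheq$ is the only content the paper leaves tacit.
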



The following properties are useful in the proof of the main result,  i.e.\  the congruence property of our notion of bisimilarity (Theorem~\ref{theo:CNAcong}). 

The first lemma says that the restriction operator respects the relation $\stretcheq$, in the sense that if one link chain $s$ can be restricted in $a$ then any chain $s' \stretcheq s$ can be extended to some $s'' \blackstretcheq s'$ where $a$ is matched and can be restricted.

\begin{restatable}[]{lemma}{lemmatrentasei}\label{lem:reswhite} 
If $s \stretcheq s'$, then for any $a$ such that $\restrict{a}s\neq \bot$ there exists $s'' \blackstretcheq s'$ such that $\restrict{a}s'' \neq \bot$ and $\restrict{a}s \stretcheq \restrict{a}s''$.
\end{restatable}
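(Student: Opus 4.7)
The plan is to prove the lemma by induction on the length $n$ of an elementary rewrite sequence witnessing $s \stretcheq s'$. Concretely, I define a single-step relation $s \to^1 s'$ to mean either $s \blackstretcheq s'$, or the pair $(s,s')$ is related by one application (in either direction) of the contraction axiom $s_1 \startchain{\alpha}\chainedlink{\silent}{\silent}\chainend{\beta} s_2 \stretcheq s_1 \link{\alpha}{\beta} s_2$ from Definition~\ref{def:accordion}. Then $s \stretcheq s'$ holds iff there is a finite sequence $s = t_0 \to^1 t_1 \to^1 \cdots \to^1 t_n = s'$, and the induction is on $n$.

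The base case $n=0$ is immediate by taking $s''=s'$. For the inductive step, apply the IH to the length-$k$ prefix from $s$ to $t_k$, obtaining $u$ with $u \blackstretcheq t_k$, $a$ matched in $u$, and $\restrict{a}s \stretcheq \restrict{a}u$. Then I split on the kind of final step. If $t_k \blackstretcheq s'$, transitivity of $\blackstretcheq$ yields $u \blackstretcheq s'$ and so $s''=u$ works. Otherwise the final step uses the contraction axiom: say, up to symmetry, $t_k = v_1 \link{\alpha}{\silent}\link{\silent}{\silent}\link{\silent}{\beta} v_2$ and $s' = v_1 \link{\alpha}{\beta} v_2$. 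The key move is to apply the same contraction inside $u$: since $u \blackstretcheq t_k$ preserves the ordered sequence of solid links, the three-link block $\link{\alpha}{\silent}\link{\silent}{\silent}\link{\silent}{\beta}$ appears intact in $u$, so $u = \hat v_1 \link{\alpha}{\silent}\link{\silent}{\silent}\link{\silent}{\beta} \hat v_2$ with $\hat v_i \blackstretcheq v_i$. Setting $s^* = \hat v_1 \link{\alpha}{\beta} \hat v_2$, it is routine to check that $s^* \blackstretcheq s'$, that $a$ remains matched in $s^*$ (since $a$ is a channel and the contraction only manipulates $\silent$-ended links), and that $\restrict{a}s^* \stretcheq \restrict{a}u$ by re-applying the same axiom on the restricted chains; combined with $\restrict{a}s \stretcheq \restrict{a}u$ from the IH, this gives $\restrict{a}s \stretcheq \restrict{a}s^*$, so $s'' = s^*$ fulfils the claim. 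The expansion direction of the contraction axiom is entirely symmetric.

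The main obstacle is justifying that the $\silent$-internal block of $t_k$ survives intact inside $u$. The critical observation is that the $\blackstretcheq$-axiom splitting a channel-matched junction, namely $\startchain{\alpha}\chainedlink{a}{a}\chainend{\beta} \blackstretcheq \startchain{\alpha}\chainedlink{a}{\noact}\chainedlink{\noact}{a}\chainend{\beta}$, only applies when $a \in \mathcal{C}$: were one to attempt it with $a = \silent$, the resulting junctions $\silent/\noact$ and $\noact/\silent$ would violate the link-chain validity condition ``$\beta_i = \silent$ iff $\alpha_{i+1} = \silent$''. Hence any virtual padding introduced by $\blackstretcheq$ must sit at channel-matched junctions, so it can only appear at the outer boundaries of the $\silent$-block, which are absorbed into $\hat v_1$ and $\hat v_2$. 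This rigidity of $\silent$-internal junctions under $\blackstretcheq$ is exactly what permits lifting the contraction from $t_k$ to $u$, and is the non-routine observation that makes the inductive step for the contraction case go through.
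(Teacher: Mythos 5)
Your proof is correct, and it reaches the result by the same basic route as the paper---decompose $\stretcheq$ into elementary axiom applications and analyse restriction case by case---but it is organised in a way that is actually more careful than the paper's own argument at exactly one point. The paper verifies the property for a \emph{single} application of each axiom (taking $s''=s'$ in the contraction case, and $s''=s$ or vacuity, since $\restrict{a}s=\bot$, in the two orientations of the $s_1\startchain{\alpha}\chainedlink{a}{a}\chainend{\beta}s_2 \blackstretcheq s_1\startchain{\alpha}\chainedlink{a}{\noact}\chainedlink{\noact}{a}\chainend{\beta}s_2$ axiom) and then declares closure under the equivalence rules ``immediate''. That closure is in fact the delicate point: the property does not compose naively under transitivity, because the witness $u \blackstretcheq t$ obtained for a pair $(s,t)$ need not satisfy $\restrict{a}t \neq \bot$, so the property for the pair $(t,s')$ cannot be invoked on the nose (think of $t$ leaving $a$ pending while $u$ matches it). Your induction on the length of the rewrite sequence, carrying the witness $u \blackstretcheq t_k$ along and \emph{lifting} the final axiom step from $t_k$ to $u$, is exactly what repairs this; the lifting is legitimate thanks to your rigidity observation, i.e.\ that $\blackstretcheq$ can insert virtual links only at channel-matched junctions or at the extremes, never inside a $\silent$-matched junction, so the contracted block survives intact in $u$. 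Your Case A also absorbs all $\blackstretcheq$-steps in one stroke, since $\blackstretcheq$-witnesses compose by transitivity of $\blackstretcheq$, sparing you the paper's separate cases for the $a$-$a$ splitting axiom. Two cosmetic remarks: the contraction axiom of Definition~\ref{def:accordion} collapses the \emph{two}-link block $\link{\alpha}{\silent}\link{\silent}{\beta}$ to $\link{\alpha}{\beta}$, not the three-link block $\link{\alpha}{\silent}\link{\silent}{\silent}\link{\silent}{\beta}$ you wrote (your argument applies verbatim to the correct block); and the decomposition $u = \hat v_1 \link{\alpha}{\silent}\link{\silent}{\beta}\hat v_2$ with $\hat v_i \blackstretcheq v_i$ must be read modulo the fact that $v_i$ may be empty or all-virtual, hence not itself a link chain---a formality the paper glosses over as well.
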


The second lemma says that taken two link chains $s_{1}$ and $s'_{1}$ in the same equivalence class, and given any link chain $s_2$ that can be merged with $s_{1}$, then it is possible to find a link chain $s'_2$ (which is a stretched version of $s_2$) such that it can be merged with another link chain $s''_1$ (which is a stretched version of $s'_1$) with the result being equivalent to $\merges{s_2}{s_{1}}$. This is graphically rendered in Figure~\ref{fig:lem_mergewhite}.

\begin{figure}
\[
\begin{array}{cc}
\begin{array}{ccc}
s_1 & \stretcheq \ s'_1  \blackstretcheq  & s''_1\\
\bullet && \bullet\\
s_2 & \blackstretcheq & s'_2
\end{array}
& \hspace{1cm}
\merges{s_1}{s_2} \stretcheq  \merges{s''_1}{s'_2}
\end{array}
\]
\caption{Graphic representation of Lemma~\ref{lem:mergewhite}.}
\label{fig:lem_mergewhite}
\end{figure}


\begin{restatable}[]{lemma}{lemmatrentasette}\label{lem:mergewhite} 
If $s_1  \stretcheq s'_1$, then for any $s_2$ such that $\merges{s_2}{s_{1}}\neq \bot$ there exist two link chains $s'_2 \blackstretcheq s_2$ and $s''_1\blackstretcheq s'_1$ such that
$\merges{s'_2}{s''_{1}}\neq \bot$ and $\merges{s_2}{s_1}  \stretcheq \merges{s'_2}{s''_1}$.
\end{restatable}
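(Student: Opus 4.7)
My plan is to proceed by induction on the derivation of $s_1 \stretcheq s'_1$. The key insight is that any single primitive rewrite relating $s_1$ and $s'_1$ acts only at positions where $s_2$ is forced to carry virtual links (because $s_1$ is solid or silent-matched there), so the corresponding insertion or deletion of virtual links in $s_2$ never clashes with any solid link of $s_2$; the merge with $s_2$ can then be repaired by a symmetric adjustment on the other side, producing at worst a pair of intermediate silent-matched hops that the silent-collapse axiom of $\stretcheq$ absorbs.

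For the base case $s_1 \blackstretcheq s'_1$ it suffices to take $s''_1 := s_1$ and $s'_2 := s_2$, using the symmetry of $\blackstretcheq$ to verify $s''_1 \blackstretcheq s'_1$. For the silent-collapse axiom in its expanding direction, with $s_1 = u\,\link{\alpha}{\beta}\,v$ and $s'_1 = u\,\startchain{\alpha}\chainedlink{\silent}{\silent}\chainend{\beta}\,v$, the solidity of $\link{\alpha}{\beta}$ forces $s_2 = t\,\link{\noact}{\noact}\,w$ with $|t|=|u|$ and $|w|=|v|$. I choose $s''_1 := s'_1$ and $s'_2 := t\,\startchain{\noact}\chainedlink{\noact}{\noact}\chainend{\noact}\,w$, so that $s'_2 \blackstretcheq s_2$ by the $\blackstretcheq$-axiom identifying two consecutive virtual links with one, and
\[ \merges{s'_2}{s''_1} \;=\; (\merges{t}{u})\,\startchain{\alpha}\chainedlink{\silent}{\silent}\chainend{\beta}\,(\merges{w}{v}) \]
is well defined, since the adjacency constraints contributed by $\alpha$ and $\beta$ at the boundaries with $\merges{t}{u}$ and $\merges{w}{v}$ are inherited from $\merges{s_2}{s_1} = (\merges{t}{u})\,\link{\alpha}{\beta}\,(\merges{w}{v})$. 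The silent-collapse axiom of $\stretcheq$ then yields $\merges{s_2}{s_1} \stretcheq \merges{s'_2}{s''_1}$. The collapsing direction is handled by the dual construction: the solid hops $\link{\alpha}{\silent}$ and $\link{\silent}{\beta}$ in $s_1$ force $s_2$ to carry two consecutive virtual links at those positions, one of which is removed to form $s'_2$.

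Reflexivity is trivial and symmetry is absorbed by treating both directions of each primitive axiom. For transitivity, given $s_1 \stretcheq s'_1 \stretcheq s''_1$ and $s_2$ mergeable with $s_1$, a first application of the inductive hypothesis yields intermediate $s_2^* \blackstretcheq s_2$ and $s_1^* \blackstretcheq s'_1$ with $\merges{s_2^*}{s_1^*}$ defined and $\merges{s_2}{s_1} \stretcheq \merges{s_2^*}{s_1^*}$; since $s_1^* \blackstretcheq s'_1 \stretcheq s''_1$ gives $s_1^* \stretcheq s''_1$, a second application of the hypothesis to $s_1^* \stretcheq s''_1$ with $s_2^*$ produces the required witnesses, combining the two relations by transitivity of $\blackstretcheq$ and of $\stretcheq$. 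The main subtlety I expect will be verifying that the modified chains remain valid link chains in boundary cases (empty $u$ or $v$, or the pair $\link{\alpha}{\silent}\link{\silent}{\beta}$ sitting at an extreme of the chain), so that the adjacency conditions and the requirement that at least one link be non-virtual are preserved after inserting or removing consecutive virtual links.
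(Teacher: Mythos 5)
Your axiom-by-axiom analysis is correct and essentially coincides with the paper's proof: the paper's first case is literally your collapsing case, and your uniform treatment of all of $\blackstretcheq$ via the reflexive witnesses $s'_2 := s_2$, $s''_1 := s_1$ is even cleaner than the paper's separate case analysis for the $\blackstretcheq$-axioms. The genuine gap is in your transitivity step. You apply the ``inductive hypothesis'' to $s_1^* \stretcheq s''_1$, but that judgment is not a subderivation of the derivation of $s_1 \stretcheq s''_1$ you are inducting on: its derivation is obtained by composing $s_1^* \blackstretcheq s'_1$ with the second subderivation, so it is not smaller under any of the obvious measures. The circularity is not a bookkeeping matter. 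If the first subderivation is purely a $\blackstretcheq$-step, your first application returns $s_2^* = s_2$ and $s_1^* = s_1$, and the second application is then to $s_1 \stretcheq s''_1$ --- exactly the goal. Nor can you instead invoke the hypothesis on the genuine subderivation $s'_1 \stretcheq s''_1$, because the chain you hold, $s_2^*$, merges with $s_1^*$ and not with $s'_1$, and in general no $\blackstretcheq$-variant of $s_2$ merges with $s'_1$ at all. Concretely, take $s'_1 = \link{a}{\silent}\link{\silent}{b}$, $s_1 = s'_1\link{\noact}{\noact}$ (so $s_1 \blackstretcheq s'_1$), $s''_1 = \link{a}{b}$ (one silent-collapse step from $s'_1$), and $s_2 = \link{\noact}{\noact}\link{\noact}{\noact}\link{b}{c}$: then $\merges{s_2}{s_1} = \link{a}{\silent}\link{\silent}{b}\link{b}{c}$ is defined, but nothing of length two merges with $s'_1$ (it would have to consist of virtual links only, hence not be a valid chain), so the hypothesis on the subderivation is vacuous while your two-step scheme loops.

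The standard repair is to strengthen the statement before inducting: prove that if $s_1 \stretcheq s'_1$, then for \emph{every} $\hat{s}_1 \blackstretcheq s_1$ and every $s_2$ with $\merges{s_2}{\hat{s}_1}\neq\bot$ there exist $s'_2 \blackstretcheq s_2$ and $s''_1 \blackstretcheq s'_1$ with $\merges{s'_2}{s''_1}\neq\bot$ and $\merges{s_2}{\hat{s}_1} \stretcheq \merges{s'_2}{s''_1}$. With the source quantified over its whole $\blackstretcheq$-class, $\blackstretcheq$-steps are absorbed for free, transitivity composes exactly as you describe (the pair $s_2^*$, $s_1^*$ is now a legal input for the second application), and your two silent-collapse cases go through essentially unchanged; the only new ingredient is the observation that no $\blackstretcheq$-move can insert virtual links between a matched $\silent$-pair (the splitting axiom applies only to channel actions), so the redex $\startchain{\alpha}\chainedlink{\silent}{\silent}\chainend{\beta}$ survives contiguously inside $\hat{s}_1$. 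For what it is worth, the paper does not do this either: it checks the axioms and declares closure under the equivalence rules ``immediate''; your attempt to spell that closure out shows it is not, and that the strengthening (or an equivalent induction on the number of silent-collapse steps, absorbing maximal $\blackstretcheq$-blocks into the source) is really needed.
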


Also the next lemma is introduced to prove the main Theorem~\ref{theo:CNAcong}.

\begin{restatable}[]{lemma}{lemmatrentotto}\label{lem:reduce} 
Let $s$ and $s'$ be two link chains such that $s  \stretcheq s'$, then for any renaming function $s[\phi]  \stretcheq s'[\phi]$.
\end{restatable}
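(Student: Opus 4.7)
My plan is to proceed by induction on the derivation of $s \stretcheq s'$, following the inductive definition of $\stretcheq$ given in Definition~\ref{def:accordion}, which relies on two generating axioms plus the closure under reflexivity, symmetry and transitivity required to make it an equivalence relation.

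For the base case generated by $s \blackstretcheq s'$, the conclusion $s[\phi] \stretcheq s'[\phi]$ follows immediately from Lemma~\ref{lemma:rename}(v), which already ensures $s[\phi] \blackstretcheq s'[\phi]$, and hence $s[\phi] \stretcheq s'[\phi]$ by the first inference rule of Definition~\ref{def:accordion}. For the base case generated by the accordion axiom, namely $s_1 \startchain{\alpha}\chainedlink{\silent}{\silent}\chainend{\beta} s_2 \stretcheq s_1 \link{\alpha}{\beta} s_2$, the key observation is that by definition any channel renaming function $\phi$ satisfies $\phi(\silent) = \silent$ and $\phi(\noact) = \noact$. Therefore, applying $\phi$ element-wise as in Definition~\ref{def:chainren} yields
\[
\bigl(s_1 \startchain{\alpha}\chainedlink{\silent}{\silent}\chainend{\beta} s_2\bigr)[\phi] = s_1[\phi]\, \startchain{\phi(\alpha)}\chainedlink{\silent}{\silent}\chainend{\phi(\beta)}\, s_2[\phi]
\]
and
\[
\bigl(s_1 \link{\alpha}{\beta} s_2\bigr)[\phi] = s_1[\phi]\, \link{\phi(\alpha)}{\phi(\beta)}\, s_2[\phi],
\]
so a direct application of the accordion axiom to these two chains gives the required $\stretcheq$-equivalence.

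For the inductive steps, reflexivity is trivial, symmetry follows directly by swapping sides in the inductive hypothesis, and transitivity follows by composing the two hypotheses $s_1[\phi] \stretcheq s_2[\phi]$ and $s_2[\phi] \stretcheq s_3[\phi]$ through transitivity of $\stretcheq$ itself.

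I do not expect serious obstacles here: all the real work was already encapsulated in Lemma~\ref{lemma:rename}(v) for the $\blackstretcheq$-part, and the accordion axiom is preserved essentially on the nose because renaming functions fix $\silent$ and $\noact$. The only point requiring a bit of care is to make sure one inducts on the derivation of $\stretcheq$ and not just on the structure of $s$, so that the symmetric and transitive closures are handled explicitly rather than implicitly.
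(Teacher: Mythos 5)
Your proof is correct and takes essentially the same approach as the paper: the paper's (terse) proof likewise reduces everything to the argument of Lemma~\ref{lemma:rename}(v) together with the observation that $\phi$ fixes $\silent$ and ${\noact}$, so each axiom of $\stretcheq$ is preserved on the nose by renaming and the equivalence closure follows. Your version merely spells out explicitly the induction on the derivation of $\stretcheq$ (the two generating axioms plus reflexivity, symmetry and transitivity) that the paper leaves implicit.
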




\begin{definition}[Network bisimulation]
A \emph{network bisimulation} $\mathbf{R}$ is a binary relation over CNA processes such that, if $P \mathrel{\mathbf{R}} Q$ then:
\begin{itemize}
\item if $P \xrightarrow{s} P'$, then $\exists$ $s'$, $Q'$ such that $s' \stretcheq s$, $Q  \xrightarrow{s'} Q'$, and $P' \mathrel{\mathbf{R}} Q'$;
\item if $Q \xrightarrow{s} Q'$, then $\exists$ $s'$, $P'$ such that $s' \stretcheq s$, $P  \xrightarrow{s'} P'$, and $P' \mathrel{\mathbf{R}} Q'$.
\end{itemize}
\end{definition}

Note that, by Corollary~\ref{cor:essential}, the requirement $s' \stretcheq s$ amounts just to checking that $\reduce{s'} = \reduce{s}$.

\begin{definition}[Network bisimilarity $\simnet$]
We let $\simnet$ denote the largest network bisimulation and we say that $P$ is \emph{network bisimilar} to $Q$ if $P \simnet Q$. 
\end{definition}

It can be shown that network bisimulations are closed with respect to union and composition and that $\simnet$ is an equivalence relation.

\begin{example}
Take the recursively defined processes $R(a,b) \defeq \link{a}{b}.R(a,b)$  and $T(a,b) \defeq \restrict{c}(R(a,c)\ |\ R(c,a))$ from Example~\ref{ex:motivatewhitetie}. It is straightforward to check that the relation 
\[
\mathbf{R} \defeq \{ (R(a,b),T(a,b)) \}
\]
is a network bisimulation, because $\link{a}{b}\stretcheq \startchain{a}\chainedlink{\silent}{\silent}\chainend{b}$, hence $R(a,b)$ and $T(a,b)$ are network bisimilar.
\end{example}

\begin{example}
Consider the two processes $P \defeq \link{a}{b}.P$ and  $Q \defeq \restrict{c}(\link{a}{c} \ | \ \link{c}{b}.Q)$.
We have that whenever $P  \xrightarrow{s} P'$, then $P'=P$ and $\reduce{s}= \link{a}{b}$.
Similarly, whenever $Q  \xrightarrow{s} Q'$, then $Q'=\restrict{c}(\nil | Q)$ and $\reduce{s}= \link{a}{b}$.
Then we prove that $P \simnet Q$ by showing that the relation $\mathbf{R}$ below:
\[
\mathbf{R} \defeq \{ (P,R) \mid \exists n. R = C^{n}[Q] \}
\]
\noindent
is a network bisimulation, where $C^{n}[Q]$ is inductively defined by letting $C^{0}[Q] \defeq Q$ 
and $C^{n+1}[Q] \defeq C[C^{n}[Q]]$ for  $C[\cdot]$  the context $\restrict{c}(\nil | \cdot )$.
Intuitively, the context $C^{n}$ mimics the effects of $n$ internal interactions in $Q$, as any internal interaction in $Q$ leaves a zero process in parallel with $Q$. 
The thesis simply follows by noting that, for any $n$, whenever $C^n[Q]  \xrightarrow{s} R'$ then $R' = C^{n+1}[Q]$ and $\reduce{s}= \link{a}{b}$: by induction on $n$, the base case $C^0[Q] = Q$ has been already observed above, while the inductive case, where we consider 
$C^{n+1}[Q] = \restrict{c}(\nil | C^{n}[Q] )$, follows immediately from the inductive hypothesis.
\end{example}

We are now ready to prove the first main result, i.e.\ that network bisimilarity is preserved by all the operators of \CNA.

\begin{restatable}[]{theorem}{theoremquarantatre}\label{theo:CNAcong}
Network bisimilarity is a congruence.
\end{restatable}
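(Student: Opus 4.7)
The plan is to exhibit, for each operator of \CNA, a candidate relation that contains the pairs generated by that operator applied to network-bisimilar components and show it is a network bisimulation, so by maximality it is contained in $\simnet$. Concretely, it suffices to show that
\[
\mathbf{R} \defeq \setof{ (P,Q) \mid P \simnet Q } \cup \bigcup_{\mathrm{op}} \setofmid{(\mathrm{op}(\widetilde{P}),\mathrm{op}(\widetilde{Q}))}{\widetilde{P}\mathrel{\mathbf{R}}\widetilde{Q}}
\]
is a network bisimulation. The cases of prefix $\ell.P$, choice $P+Q$, and constant definitions are straightforward: a transition of the composite term either comes from the SOS rule for that operator applied to a sub-process (in which case the inductive/bisimilarity step directly yields the matching transition), or, for (\textit{Act}), already has a uniquely determined shape.

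The interesting case is parallel composition. Assume $P_1 \simnet Q_1$, $P_2 \simnet Q_2$ and take a transition $P_1|P_2 \xrightarrow{s} R$. The (\textit{Lpar})/(\textit{Rpar}) cases are routine. In the (\textit{Com}) case we have $P_1 \xrightarrow{s_1} P'_1$, $P_2 \xrightarrow{s_2} P'_2$ and $s = \merges{s_1}{s_2}$. Bisimilarity yields $Q_i \xrightarrow{t_i} Q'_i$ with $t_i \stretcheq s_i$ and $P'_i \simnet Q'_i$, but one cannot apply (\textit{Com}) directly to $t_1$ and $t_2$ because $\merges{t_1}{t_2}$ may be undefined, since $\stretcheq$ does not preserve chain length. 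Here Lemma~\ref{lem:mergewhite} is crucial: applied to the pair $s_1 \stretcheq t_1$ together with $\merges{s_1}{s_2}\neq \bot$ it returns $s^*_2 \blackstretcheq s_2$ and $t'_1 \blackstretcheq t_1$ whose merge is defined and $\stretcheq$-equivalent to $s$; a second application to $s_2 \stretcheq t_2$ (hence $s^*_2 \stretcheq t_2$) and the merge $\merges{t'_1}{s^*_2}$ produces $t''_1 \blackstretcheq t'_1$ and $t'_2 \blackstretcheq t_2$ whose merge is again $\stretcheq$-equivalent to $s$. The Accordion Lemma~\ref{lem:stretchlts} lifts the original $Q_1 \xrightarrow{t_1} Q'_1$ and $Q_2 \xrightarrow{t_2} Q'_2$ to $Q_1 \xrightarrow{t''_1} Q'_1$ and $Q_2 \xrightarrow{t'_2} Q'_2$, so (\textit{Com}) delivers the required matching transition $Q_1|Q_2 \xrightarrow{\merges{t''_1}{t'_2}} Q'_1|Q'_2$ with $\merges{t''_1}{t'_2}\stretcheq s$ and $(P'_1|P'_2, Q'_1|Q'_2)\in\mathbf{R}$.

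The restriction case $\restrict{a}P \simnet \restrict{a}Q$ follows an analogous scheme: a transition $\restrict{a}P \xrightarrow{\restrict{a}s} \restrict{a}P'$ is matched via $Q \xrightarrow{t} Q'$ with $t \stretcheq s$, but we must ensure that $a$ can be restricted on the matched label. Lemma~\ref{lem:reswhite} provides $t' \blackstretcheq t$ on which $\restrict{a}$ is defined and such that $\restrict{a}s \stretcheq \restrict{a}t'$; the Accordion Lemma promotes $Q \xrightarrow{t} Q'$ to $Q \xrightarrow{t'} Q'$, and (\textit{Res}) concludes. The relabelling case $P[\phi] \simnet Q[\phi]$ is handled similarly, invoking Lemma~\ref{lem:reduce} to transport the $\stretcheq$-matching through the channel renaming function $\phi$.

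The main obstacle throughout is the inherent mismatch between the two equivalences at play: the SOS rules (\textit{Com}), (\textit{Res}) and (\textit{Ren}) manipulate labels at the level of exact chain composition and are therefore sensitive to chain length (which is preserved only by the finer $\blackstretcheq$), whereas the bisimulation game matches labels only up to the coarser $\stretcheq$. Resolving this tension is exactly the role of the stretching lemmas~\ref{lem:reswhite}, \ref{lem:mergewhite} and~\ref{lem:reduce}, which allow us to pass from a $\stretcheq$-match to a $\blackstretcheq$-match compatible with the operator in question, after which the Accordion Lemma~\ref{lem:stretchlts} re-derives the required transition. The (\textit{Com}) case is the most delicate because it requires \emph{two} successive applications of Lemma~\ref{lem:mergewhite}, one for each operand, to realign the labels before the merge can be performed.
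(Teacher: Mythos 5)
Your treatment of the static operators is correct and follows essentially the paper's own route: Lemmas~\ref{lem:reswhite}, \ref{lem:mergewhite} and~\ref{lem:reduce} are used in exactly the roles the paper assigns them, each time followed by the Accordion Lemma~\ref{lem:stretchlts} to re-derive the transition on the $\blackstretcheq$-adjusted label. The one real difference is parallel composition: you prove the two-sided statement ($P_1|P_2$ versus $Q_1|Q_2$) directly, which forces two successive applications of Lemma~\ref{lem:mergewhite}, whereas the paper proves the one-sided statement $P|R \simnet Q|R$ (same $R$ on both sides), so the transition $R \xrightarrow{s_2} R'$ is reused verbatim and a single application of Lemma~\ref{lem:mergewhite} suffices; the two-sided version then follows by transitivity. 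Both variants work. A smaller point: your relation $\mathbf{R}$ is closed under operators applied to $\mathbf{R}$-related (not $\simnet$-related) components, yet your case analysis invokes the bisimulation property of $\simnet$ on the components; either restrict the closure to $\simnet$-related components (as the paper does, which suffices for congruence), or add an induction on the derivation of membership in $\mathbf{R}$.

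The genuine gap is recursion. You dismiss ``constant definitions'' as straightforward, and indeed the case where the \emph{same} constant $A$ appears on both sides is trivial by rule (\textit{Ide}). But the congruence property proved in the paper also covers recursive definitions: if $E\{P/X\} \simnet F\{P/X\}$ for every process $P$, then the constants $A \defeq E\{A/X\}$ and $B \defeq F\{B/X\}$ satisfy $A \simnet B$. This does not follow from your operator-closure relation, because relating $A$ and $B$ requires unfolding the bodies arbitrarily often, and after one unfolding one must relate $E\{A/X\}$ with $F\{B/X\}$: the hypothesis only gives $E\{A/X\} \simnet F\{A/X\}$, and passing from $F\{A/X\}$ to $F\{B/X\}$ presupposes exactly the fact $A \simnet B$ being proved, so a direct argument is circular. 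The paper resolves this with an up-to technique: it takes $\mathbf{R}_{ctx} \defeq \{(G\{A/X\},G\{B/X\}) \mid G \mbox{ invokes } X\}$, shows that $\simnet \circ\, \mathbf{R}_{ctx}\, \circ \simnet$ is a network bisimulation by induction on the depth of the transition derivation, and inside that induction it again needs Lemmas~\ref{lem:reswhite} and~\ref{lem:mergewhite} together with the Accordion Lemma. Without this (or an equivalent argument) your proof establishes preservation under the static operators only, not the full congruence claimed by the theorem.
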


\begin{proof}
We show that network bisimilarity is preserved by all the operators.
The proof uses standard arguments. The interesting cases are that of restriction, renaming and parallel composition, the others are just suitable rephrasing of the corresponding proofs in the CCS literature.
\begin{description}
\item[Prefix]
We want to prove that if $P \simnet Q$ then for any $\ell$ we have $\ell.P \simnet \ell.Q$.
We define the relation 
$\mathbf{R}_{pre}\defeq \{(\ell.P,\ell.Q) \mid P \simnet Q\} \ \cup \simnet$  and show that $\mathbf{R}_{pre}$ is a network bisimulation.
The case when $(P,Q) \in \ \simnet$ is obvious.
Take $(\ell.P,\ell.Q) \in \{(\ell.P,\ell.Q) \mid P \simnet Q\}$.
If $\ell.P \xrightarrow{s} P$ with $s\blackstretcheq \ell$ then also $\ell.Q \xrightarrow{s} Q$ and 
$P \mathrel{\mathbf{R}_{pre}} Q$ as $P \simnet Q$.
Vice versa, if $\ell.Q \xrightarrow{s} Q$ with $s\blackstretcheq \ell$ then also $\ell.P \xrightarrow{s} P$ and 
$P \mathrel{\mathbf{R}_{pre}} Q$ as $P \simnet Q$.

\item[Restriction]
We want to prove that if $P \simnet Q$ then for any $a$ we have $\restrict{a}P \simnet \restrict{a}Q$.
We let $\mathbf{R}_{res}\defeq \{(\restrict{a}P,\restrict{a}Q)|  P \simnet Q\}$
and show that $\mathbf{R}_{res}$ is a network bisimulation.
Suppose $P \simnet Q$ and $\restrict{a}P \xrightarrow{\restrict{a}s} \restrict{a}P'$, for some $s$ and $P'$
such that $P \xrightarrow{s} P'$.
By assumption, we know that $P \simnet Q$ and therefore there exist $s'$, $Q'$ such that
$Q \xrightarrow{s'}Q'$ with $\reduce{s'}=\reduce{s}$ and $P' \simnet Q'$.
By Lemma~\ref{lem:reswhite}, there exists $s'' \blackstretcheq s'$ such that
 $\restrict{a}s''\neq \bot$ and $\restrict{a}s'' \stretcheq \restrict{a}s$.
Hence $\restrict{a}Q \xrightarrow{\restrict{a} s''}\restrict{a}Q'$ and,
by definition of $\mathbf{R}_{res}$, we obtain $\restrict{a}P' \mathrel{\mathbf{R}_{res}} \restrict{a}Q'$.

\item[Renaming]
We want to prove that if $P \simnet Q$ then for any renaming function $\phi$ we have $P[\phi] \simnet Q[\phi]$. Let $\mathbf{R}_{ren} \defeq \{(P[\phi],Q[\phi]) \mid P \simnet Q\}$ and show that  $\mathbf{R}_{ren}$ is a network bisimulation.
Suppose $P \simnet Q$ and $P[\phi] \xrightarrow{s}P'[\phi]$, for some $s$, $P'$.
By rule $(Ren)$, it must exist $s'$ such that $s= s'[\phi]$, and $P \xrightarrow{s'} P'$. 
By assumption we know that  $P \simnet Q$ and therefore there exist $s''$, and $Q'$ s.t. $Q \xrightarrow{s''}Q'$ with    $\reduce{s'}=\reduce{s''}$.
By Lemma~\ref{lem:reduce}, we have $\reduce{s'[\phi]}= \reduce{s''[\phi]}$. Hence
$Q[\phi] \xrightarrow{s''[\phi]} Q'[\phi]$ and, by definition of $\mathbf{R}_{ren}$, we obtain 
$P[\phi] \ \mathbf{R}_{ren} \ Q[\phi]$.


\item[Choice]
We want to prove that if $P \simnet Q$ then for any $R$ we have $P+R \simnet Q+R$.
We define the relation 
$\mathbf{R}_{sum} \defeq \{(P+R,Q+R) \mid P \simnet Q\}\cup \simnet$ 
and show that $\mathbf{R}_{sum}$ is a network bisimulation.
The case when $(P,Q) \in \ \simnet$ is immediate.
Take $(P+R,Q+R) \in \{(P+R,Q+R) \mid P \simnet Q\}$.
Suppose $P+R \xrightarrow{s} T$. 
We want to prove that 
$Q+R \xrightarrow{s'} T'$ with $\reduce{s}=\reduce{s'}$, and $T\mathrel{\mathbf{R}_{sum}} T'$.
There are two cases to be considered, depending on the last SOS rule used, to prove 
$P+R \xrightarrow{s} T$. If the last used rule is
\begin{itemize}
\item \textit{(Rsum)}, then it means that  $R \xrightarrow{s} R'$ for some $R'$ with $T=R'$. 
But then, by using $(Rsum)$, we have $Q+R \xrightarrow{s} T'$, with $T' = R'$ and $T =R' \mathrel{\mathbf{R}_{sum}} R' = T'$ by reflexivity of network bisimilarity $\simnet \subseteq \mathbf{R}_{sum}$.
\item \textit{(Lsum)}, then it means that  $P \xrightarrow{s} P'$ for some $P'$ with $T=P'$.
By assumption, we know that $P \simnet Q$ and therefore there exist $s'$ and $Q'$ such that $Q \xrightarrow{s'}Q'$ with $\reduce{s}=\reduce{s'}$ and $P' \simnet Q'$. 
By applying the rule
$(Lsum)$, we obtain that $Q+R \xrightarrow{s'}Q'$ and we are done.
\end{itemize}

\item[Parallel]
We want to prove that if $P \simnet Q$ then for any $R$ we have $P|R \simnet Q|R$.
We define the relation 
$\mathbf{R}_{par} \defeq \{
(P|R,Q|R) \mid P \simnet Q
\}$ and show that $\mathbf{R}_{par}$ is a network bisimulation.
Suppose $P \simnet Q$ and $P|R \xrightarrow{s} T$. 
We want to prove that 
$Q|R \xrightarrow{s} T'$ with $T\mathrel{\mathbf{R}_{par}} T'$.
There are three cases to be considered, depending on the last SOS rule used to prove $P|R \xrightarrow{s} T$. 
If the last used rule is
\begin{itemize}
\item \textit{(Rpar)}, then it means that $R \xrightarrow{s} R'$ for some $R'$ with $T=P|R'$. 
But then, by using  \textit{(Rpar)}, we have $Q|R \xrightarrow{s} Q|R'$ and $P|R' \mathrel{\mathbf{R}_{par}} Q|R'$, by definition of $\mathbf{R}_{par}$.
\item \textit{(Lpar)}, then it means that $P \xrightarrow{s} P'$ for some $P'$ with $T=P'|R$. 
By assumption, we know that $P \simnet Q$ and therefore there exist $s'$, $Q'$ such that $Q \xrightarrow{s'} Q'$ with $\reduce{s}=\reduce{s'}$ and $P' \simnet Q'$. 
By applying the rule \textit{(Lpar)}, we have that $Q|R \xrightarrow{s'} Q'|R$ and we are done because $P'|R \mathrel{\mathbf{R}_{par}} Q'|R$, 
by definition of $\mathbf{R}_{par}$.
\item \textit{(Com)}, then it means that $P \xrightarrow{s_{1}} P'$, $R \xrightarrow{s_{2}} R'$,  for some $s_{1}, s_{2}, P', R'$ with $s = \merges{s_{1}}{s_{2}}$ and $T = P'|R'$. 
By assumption, we know that $P \simnet Q$ and therefore there exist $s'_{1}$, $Q'$ such that $Q \xrightarrow{s'_{1}} Q'$ with $\reduce{s_{1}}=\reduce{s'_{1}}$ and $P' \simnet Q'$. Now it may be the case that $\merges{s'_{1}}{s_{2}}$ is not defined, but by Lemma~\ref{lem:mergewhite}, we know that $s'_{1}$ and $s_{2}$ can be stretched respectively to $s''_{1}\blackstretcheq s'_{1}$ and $s'_{2}\blackstretcheq s_2$ by inserting or removing enough virtual links to have that $\merges{s''_{1}}{s'_{2}}$ is defined and
$\reduce{\merges{s_{1}}{s_{2}}} = \reduce{\merges{s''_{1}}{s'_{2}}}$.
By the Accordion Lemma~\ref{lem:stretchlts}, $Q \xrightarrow{s''_{1}} Q'$ and $R \xrightarrow{s'_{2}} R'$. 
We conclude by applying rule \textit{(Com)}: $Q|R \xrightarrow{s'} Q'|R'$ with $s' = \merges{s''_{1}}{s'_{2}} \stretcheq s$ and  $P'|R' \mathrel{\mathbf{R}_{par}} Q'|R'$, by definition of $\mathbf{R}_{par}$.
\end{itemize}
\item[Recursion]
Let $E$ and $F$ be two processes that invoke the process identifier $X$.
Let us denote with $E\{P \slash X\}$ the process obtained by replacing in $E$ every occurrence of the identifier $X$ with the process $P$.  
Assume that for any process $P$ we have  $E\{P \slash X\} \simnet F\{P \slash X\}$.
We want to prove that, given  the process definitions
$A\defeq E\{A \slash X\}$, $B\defeq F\{B \slash X\}$, then $A \simnet B$.  
The proof proceeds by showing that:
\begin{enumerate}
\item If $A \defeq Q$ is a process definition, then $A  \simnet Q$.
\item Given the process definitions $A\defeq E\{A \slash X\}$ and $B\defeq F\{B \slash X\}$, for any process $G$ that invokes $X$ we have 
$G\{A \slash X\} \simnet G\{B \slash X\}$.
\end{enumerate}
Then, we have $A \simnet E\{A \slash X\} \simnet E\{B \slash X\}  \simnet F\{B \slash X\} \simnet B$.
The proof of~(1) is immediate by rule \textit{(Ide)}, as $A$ and $Q$ have exactly the same transitions, while the proof of~(2) proceeds in the standard way exploiting induction on derivations, as detailed in the appendix.

\end{description}
\end{proof}

\begin{remark}
As for CCS, it can be proved that several useful axioms over processes hold up to network bisimilarity,
like the commutative monoidal laws for $|$ and $+$, the idempotence of $+$ and the usual laws about restriction.
\end{remark}

\subsection{Semantics Closure with Respect to Substitutions}

One relevant difference w.r.t.\ strong and weak bisimilarity in CCS is that network bisimilarity is also closed with respect to substitutions.

At the level of chains, name substitution is defined as the renaming (see Definition~\ref{def:chainren}).  Not to overload the notation, we denote substitution with $\{-/-\}$.

Given $s=\ell_1 \dots \ell_n$, with $\ell_i = \link{\alpha_1}{\beta_i}$ for $i \in [1,n]$,
we define the substitution of channel $a$ with channel $b$ in a link chain $s$, $s\{\subst{b}{a}\}$ as follows
\[
\begin{array}{c}
\begin{array}{lcl@{\hskip 1cm}lcl}
s\{\subst{b}{a}\}&=& \ell_1\{\subst{b}{a}\} \dots \ell_n\{\subst{b}{a}\} \\
\ell_i\{\subst{b}{a}\} &=& \link{\alpha_i\{\subst{b}{a}\}}{\beta_i\{\subst{b}{a}\}} 
&
\alpha\{\subst{b}{a}\} &=& \left\{\begin{array}{ll} 
b & \mbox{ if } \alpha =a \\
\alpha & \mbox{otherwise}
\end{array}\right.
\end{array}\\
\end{array}
\]

The first observation is that equivalence $\stretcheq$ (but also $\blackstretcheq$) is closed with respect to substitution, as stated by the next lemma.

\begin{restatable}[]{lemma}{lemmaquarantacinque}\label{lem:linksubst} 
For any $a,b,s,s'$
\begin{enumerate}
\item[(i)]
If $s \blackstretcheq s'$ then $s \{\subst{b}{a}\} \blackstretcheq s'\{\subst{b}{a}\}$.
\item[(ii)]
If $s \stretcheq s'$ then $s \{\subst{b}{a}\} \stretcheq s'\{\subst{b}{a}\}$.
\end{enumerate}
\end{restatable}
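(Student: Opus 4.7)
My plan is to prove both parts by induction on the derivation witnessing the equivalence, reducing the statement to checking that substitution is compatible with each generating axiom. The reflexivity, symmetry, and transitivity cases go through immediately by the inductive hypothesis, and the congruence-in-context cases amount to applying the substitution componentwise.

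For part (i), the relation $\blackstretcheq$ is generated by the four axioms of Definition~\ref{def:black}. The first three axioms involve only virtual links $\link{\noact}{\noact}$, whose sites are $\noacts$ and therefore fixed by the channel substitution; hence both sides of each such axiom map to the same axiom instance in the substituted contexts $s_i\{\subst{b}{a}\}$. The fourth axiom is
\[
s_1 \startchain{\alpha}\chainedlink{a'}{a'}\chainend{\beta} s_2 \;\blackstretcheq\; s_1 \startchain{\alpha}\chainedlink{a'}{\noact}\chainedlink{\noact}{a'}\chainend{\beta} s_2,
\]
for some channel $a'$ (which may or may not coincide with $a$). Applying $\{\subst{b}{a}\}$ to both sides replaces the marked sites by $a'\{\subst{b}{a}\}$ and the border actions by $\alpha\{\subst{b}{a}\}$ and $\beta\{\subst{b}{a}\}$, producing another instance of the very same axiom at channel $a'\{\subst{b}{a}\}$.

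For part (ii), the induction is identical, but with one extra base case, namely the axiom $s_1 \startchain{\alpha}\chainedlink{\silent}{\silent}\chainend{\beta} s_2 \stretcheq s_1 \link{\alpha}{\beta} s_2$. Since substitution acts only on channels and fixes $\silent$, the two substituted chains are connected by the same axiom with $\alpha\{\subst{b}{a}\}$ and $\beta\{\subst{b}{a}\}$ in place of $\alpha$ and $\beta$. The only other rule, $\frac{s \blackstretcheq s'}{s \stretcheq s'}$, is handled by part (i).

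The main thing to be careful about—and the only place where the argument is not purely mechanical—is verifying that substitution preserves link-chain well-formedness, so that each substituted axiom relates two honest link chains (and not undefined expressions). This is straightforward: the adjacency condition $\beta_i = \alpha_{i+1}$ on channel sites is preserved because equal channels substitute to equal channels; the $\silent$-matching condition is preserved because $\silent$ is fixed by $\{\subst{b}{a}\}$; and the requirement that at least one link be non-virtual is preserved because substitution sends channels to channels and hence never produces a virtual link from a solid one. With these observations in hand, both inductions close without further difficulty.
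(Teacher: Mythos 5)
Your proposal is correct and takes essentially the same route as the paper: a case analysis on the generating axioms of $\blackstretcheq$ and $\stretcheq$, observing that the substitution $\{\subst{b}{a}\}$ maps each axiom instance componentwise to another instance of the same axiom (the paper spells out only the $\startchain{\alpha}\chainedlink{\silent}{\silent}\chainend{\beta}$ case and declares the rest similar). Your explicit check that substitution preserves link-chain well-formedness is a point the paper leaves implicit, and it is a worthwhile addition since the axioms of Definition~\ref{def:black} apply only when both sides are well-defined link chains.
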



Next, we prove that transitions are respected by substitutions.
On processes, name substitution differs from renaming and a different notation is used.
Let us denote by $P\{\subst{b}{a}\}$ the simultaneous, capture-avoiding substitution of all the (free) occurrences of $a$ with $b$ in $P$.
Substitution is defined inductively as follows.
\begin{eqnarray*}
\nil \{b/a\} & \defeq & \nil  \\
\ell.P\{b/a\} & \defeq & \ell\{b/a\}.P\{b/a\}\\
(P+Q)\{b/a\} & \defeq & (P\{b/a\}) + (Q\{b/a\})\\
(P|Q)\{b/a\} & \defeq & (P\{b/a\}) | (Q\{b/a\})\\
((\nu c)P)\{b/a\} & \defeq & (\nu d)(P\{d/c\}\{b/a\}) \mbox{ with $d$ fresh} \\
P[\phi]\{b/a\} & \defeq & P\{\phi^{-1}(b) / \phi^{-1}(a)\}[\phi] \\
A(\tilde{c})\{b/a\} & \defeq &  A(\tilde{c}\{b/a\})
\end{eqnarray*}

Substitution enjoys properties similar to that of {renaming}.
In particular, in the proof of Lemma~\ref{lem:redacsubst} we exploit the following property (see Lemma~\ref{lemma:rename}(ii)),
whose proof follows immediately by definition of $\merges{}{}$ and substitution.

\begin{restatable}[]{lemma}{lemmadiciannovebis}\label{lemma:renamebis} 
For any $a,b,s_1,s_2$, if $\merges{s_1}{s_2}$ is defined, then $(\merges{s_1}{s_2})\{\subst{b}{a}\} = \merges{s_1\{\subst{b}{a}\}}{(s_2\{\subst{b}{a}\})}$.
\end{restatable}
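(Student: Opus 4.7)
The plan is to unfold the definitions of merge and substitution element-wise and reduce the statement to the atomic case of individual actions. Since the hypothesis $\merges{s_1}{s_2}\neq\bot$ guarantees that $s_1$ and $s_2$ have the same length $n$, I can write $s_1 = \ell_1\dots\ell_n$ and $s_2 = \ell'_1\dots\ell'_n$, and by the definition of merge on chains we have $\merges{s_1}{s_2} = (\merges{\ell_1}{\ell'_1})\dots(\merges{\ell_n}{\ell'_n})$. Since substitution on a chain is also defined pointwise, it suffices to prove the link-level identity $(\merges{\ell_i}{\ell'_i})\{\subst{b}{a}\} = \merges{\ell_i\{\subst{b}{a}\}}{(\ell'_i\{\subst{b}{a}\})}$ for each $i\in[1,n]$.

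Next I would reduce the link-level identity to the action-level identity $(\merges{\alpha}{\gamma})\{\subst{b}{a}\} = \merges{\alpha\{\subst{b}{a}\}}{(\gamma\{\subst{b}{a}\})}$, applied separately to source and target sites, exploiting that both merge and substitution act componentwise on $\link{\cdot}{\cdot}$. At the action level, a trivial case analysis suffices: by the definition of $\merges{\alpha}{\gamma}$, at least one of $\alpha$ and $\gamma$ must be $\noacts$ for the result to be defined, and the outcome is the other action. Since $a,b\in\mathcal{C}$ while $\noacts\notin\mathcal{C}$, substitution leaves $\noacts$ unchanged, so both sides collapse to $\beta\{\subst{b}{a}\}$ where $\beta$ is the non-$\noacts$ action.

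Finally, I would verify that the resulting sequence is still a well-formed link chain, so that the substituted expressions on both sides are defined. This amounts to checking that the adjacency condition is preserved by substitution: if $\beta_i = \alpha_{i+1}$ in $\merges{s_1}{s_2}$, then clearly $\beta_i\{\subst{b}{a}\} = \alpha_{i+1}\{\subst{b}{a}\}$ since $\{\subst{b}{a}\}$ is a function; and the condition on $\silent$ is preserved because $\silent\{\subst{b}{a}\} = \silent$. The second well-formedness clause (existence of a non-virtual link) is preserved for the same reason.

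I do not expect any serious obstacle here, which matches the paper's claim that the proof is ``immediate from the definitions''. The only mildly delicate point is making sure the case analysis at the action level covers the subcase $\alpha = \gamma = \noacts$ (which yields $\noacts$ on both sides, again preserved by substitution), so that the atomic identity holds uniformly without requiring either $\alpha$ or $\gamma$ to be strictly in $\mathcal{C}\cup\{\silent\}$.
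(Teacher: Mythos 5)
Your proof is correct and takes essentially the same route as the paper: the paper treats this lemma as ``immediate by definition of $\merges{}{}$ and substitution'', in analogy with Lemma~\ref{lemma:rename}(ii), whose appendix proof is exactly your pointwise decomposition of the chains into links and then into actions, using that substitution fixes $\noacts$ and $\silent$. Your closing check that chain validity (adjacency, the $\silent$ condition, existence of a solid link) survives substitution is the only step the paper leaves implicit, and it is a worthwhile addition since non-injectivity of $\{\subst{b}{a}\}$ is precisely what breaks the converse implication.
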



\begin{lemma}\label{lem:redacsubst}
For any process $P$ the following holds:
\begin{enumerate}
\item if $P \xrightarrow{s} P'$ then $P\{\subst{b}{a}\}\xrightarrow{s\{\subst{b}{a}\}}P'\{\subst{b}{a}\}$.
\item if $P\{\subst{b}{a}\}\xrightarrow{s}P'$ then there exists $s'$ and $P''$ such that $P\xrightarrow{s'}P''$ with $P' = P''\{\subst{b}{a}\}$ and $s \blackstretcheq s'\{\subst{b}{a}\}$.
\end{enumerate}
\end{lemma}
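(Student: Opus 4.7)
The plan is to prove both parts by rule induction on the SOS derivation. For part~(1), each rule transfers through substitution in a direct way: in the \textit{(Act)} base case, Lemma~\ref{lem:linksubst}(i) pushes $s\blackstretcheq \ell$ through substitution; for \textit{(Lsum)}, \textit{(Rsum)}, \textit{(Lpar)}, \textit{(Rpar)}, \textit{(Ide)} the IH is invoked and the corresponding rule reapplied; for \textit{(Res)} I would alpha-convert the bound channel $c$ fresh for $\{a,b\}$ so that $\restrict{c}$ and $\{\subst{b}{a}\}$ commute on both processes and chains, and so that the well-definedness of $\restrict{c}s$ is preserved; for \textit{(Ren)} I would exploit the commutation identity $(Q[\phi])\{\subst{b}{a}\} = Q\{\subst{\phi^{-1}(b)}{\phi^{-1}(a)}\}[\phi]$ and its chain-level analogue to reduce to the IH at the shifted substitution; and for \textit{(Com)} Lemma~\ref{lemma:renamebis} propagates substitution through the merge.

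For part~(2) the induction is again on the derivation of $P\{\subst{b}{a}\}\xrightarrow{s}P'$, with case analysis driven by the syntactic form of $P$ (which substitution preserves at the outermost level). The \textit{(Act)} base case for $P=\ell.Q$ is immediate by choosing $s':=\ell$ and $P'':=Q$, since $s\blackstretcheq \ell\{\subst{b}{a}\}=s'\{\subst{b}{a}\}$. The cases for choice, \textit{(Lpar)}, \textit{(Rpar)}, and \textit{(Ide)} are direct inversions using the IH. The \textit{(Ren)} case applies the IH at the shifted substitution and uses Lemma~\ref{lemma:rename}(v) to transport the $\blackstretcheq$ obligation back through renaming. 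The \textit{(Res)} case, after alpha-converting the bound channel fresh for $\{a,b\}$, applies the IH to the inner derivation and reapplies \textit{(Res)}; the preservation of $\blackstretcheq$ under $\restrict{c}$ required here is a direct consequence of the axioms of $\blackstretcheq$ (and, if needed, can be recovered via Lemma~\ref{lemma:res}).

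The main obstacle is the \textit{(Com)} case. Applying the IH to each premise yields $P_i\xrightarrow{s'_i}P''_i$ with $P'_i=P''_i\{\subst{b}{a}\}$ and $s_i\blackstretcheq s'_i\{\subst{b}{a}\}$, but taking $\merges{s'_1}{s'_2}$ directly as the witness is unsound: the substitution may collapse two distinct channels $a,b$ into $b$, so that an adjacency that violates chain validity in the unsubstituted chains becomes harmless after substitution. A concrete instance of this pathology is that $\link{\tau}{a}$ and $\link{b}{\tau}$ do not merge in a two-link chain, whereas $\link{\tau}{b}$ and $\link{b}{\tau}$ do. My plan is to repair this by stretching $s'_1,s'_2$ to $\tilde{s}'_1\blackstretcheq s'_1$ and $\tilde{s}'_2\blackstretcheq s'_2$ in two passes: first align their virtual-link distributions with those of $s_1,s_2$ respectively after substitution, and then insert an additional virtual link $\link{\noact}{\noact}$ between every pair of solid links that would otherwise produce an invalid adjacency across the merge. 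By the Accordion Lemma~\ref{lem:stretchlts}, the transitions $P_i\xrightarrow{\tilde{s}'_i}P''_i$ persist; by construction $\merges{\tilde{s}'_1}{\tilde{s}'_2}$ is a well-defined valid chain; and Lemma~\ref{lemma:renamebis} together with Lemma~\ref{lem:mergestretch} and Lemma~\ref{lem:linksubst}(i) yields $\merges{s_1}{s_2}\ \blackstretcheq\ (\merges{\tilde{s}'_1}{\tilde{s}'_2})\{\subst{b}{a}\}$, closing the case with $s':=\merges{\tilde{s}'_1}{\tilde{s}'_2}$ and $P'':=P''_1|P''_2$.
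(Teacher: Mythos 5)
Your part~(1) and the easy cases of part~(2) are fine and match the paper (which indeed dismisses part~(1) as straightforward rule induction). The divergence---and the gap---is in how part~(2) reaches the \textit{(Com)} case. The paper does not attack \textit{(Com)} carrying the weak inductive invariant $s \blackstretcheq s'\{\subst{b}{a}\}$ as you do. It first normalises the label \emph{once, globally}: by the Accordion Lemma~\ref{lem:stretchlts}, the transition $P\{\subst{b}{a}\}\xrightarrow{s}P'$ is replaced by $P\{\subst{b}{a}\}\xrightarrow{s''}P'$, where $s''\blackstretcheq s$ is obtained by splitting every matched pair of $b$'s with a virtual link (fourth axiom of Definition~\ref{def:black}), so that $s''$ contains \emph{no matched occurrence of $b$}. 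It then proves by rule induction the \emph{stronger} statement that, under this hypothesis, one finds $P\xrightarrow{s'}P''$ with the \emph{exact} equality $s''=s'\{\subst{b}{a}\}$, not merely $\blackstretcheq$. With this invariant both delicate cases collapse: in \textit{(Com)}, the only way $\merges{s'_1}{s'_2}$ could be undefined while $\merges{s'_1\{\subst{b}{a}\}}{(s'_2\{\subst{b}{a}\})}$ is defined is an $a$--$b$ adjacency legitimised by the substitution, which would be a matched $b$ in $s''$, a contradiction; in \textit{(Res)}, matchedness of the bound channel $c\neq a,b$ transfers verbatim across an exact substitution. This strengthening of the induction hypothesis is the key idea your proposal lacks.

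Because you keep the $\blackstretcheq$-slack in the IH, your local repair at \textit{(Com)} carries obligations that your cited lemmas do not discharge. First, your ``first pass'' is not always executable: if $s_i$ contains an adjacent matched pair $\startchain{\alpha}\chainedlink{b}{b}\chainend{\beta}$ whose pre-image in $s'_i$ is $\startchain{\alpha}\chainedlink{a}{\noact}\chainedlink{\noact}{a}\chainend{\beta}$ with the two sites being $a$ and $b$, i.e.\ $\link{\alpha}{a}\link{\noact}{\noact}\link{b}{\beta}$, then that virtual link cannot be removed from $s'_i$ (the result is not a valid chain), so $\tilde{s}'_i\{\subst{b}{a}\}$ cannot be made to have the virtual-link distribution of $s_i$---this is exactly your own pathology resurfacing inside the repair. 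Second, and more seriously, the concluding inference is unsupported: merging $\blackstretcheq$-equivalent chains does not in general produce $\blackstretcheq$-equivalent results. For instance $\merges{(\link{d}{c}\link{\noact}{\noact})}{(\link{\noact}{\noact}\link{c}{d})} = \link{d}{c}\link{c}{d}$, whereas merging the equivalent chains $\link{\noact}{\noact}\link{d}{c}$ and $\link{c}{d}\link{\noact}{\noact}$ yields $\link{c}{d}\link{d}{c}$, whose solid-link sequence is different, so the two results are not $\blackstretcheq$-related. Moreover Lemma~\ref{lem:mergestretch} goes in the converse direction (it decomposes a stretching of an already-merged chain), so Lemmas~\ref{lemma:renamebis}, \ref{lem:mergestretch} and~\ref{lem:linksubst}(i) cannot by themselves yield $\merges{s_1}{s_2}\blackstretcheq(\merges{\tilde{s}'_1}{\tilde{s}'_2})\{\subst{b}{a}\}$; what is needed is a positional argument, specific to your construction, that the interleaving of the two components' solid links is preserved---and that argument, the real content of the case, is missing. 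The same slack also affects your \textit{(Res)} case: from $s_1\blackstretcheq s'_1\{\subst{b}{a}\}$ alone, matchedness of $c$ in $s_1$ does not transfer to $s'_1$ (its $c$-pairs may there be split by virtual links), so $\restrict{c}s'_1$ may be undefined and rule \textit{(Res)} cannot be re-applied without a further stretching step; Lemma~\ref{lemma:res}, which you invoke, lifts stretchings of an already-restricted chain and is the wrong tool for this. Your plan is most likely completable, but only by supplying exactly these re-alignment arguments; the paper's one-shot normalisation makes all of them unnecessary.
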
 
\begin{proof}
We prove the two items separately.
\begin{enumerate}
\item The proof is straightforward by rule induction and thus omitted.
\item 
The proof is in two steps.
First we observe that whenever $P\{\subst{b}{a}\}\xrightarrow{s}P'$ then, by the Accordion Lemma~\ref{lem:stretchlts}, $P\{\subst{b}{a}\}\xrightarrow{s''}P'$ with $s \blackstretcheq s''$ where there is no matched occurrences of $b$ in $s''$ (the chain $s''$ is obtained from $s$ by applying the fourth axiom in Definition~\ref{def:black} as many times as needed). Then we prove that if $P\{\subst{b}{a}\}\xrightarrow{s''}P'$ where there is no matched occurrences of $b$ in $s''$, then there exists $s'$ and $P''$ such that $P\xrightarrow{s'}P''$ with $P' = P''\{\subst{b}{a}\}$ and $s'' = s'\{\subst{b}{a}\}$.
The rule proceeds by rule induction as detailed below.
\begin{description}
\item [Rule (Act)]
By hypothesis, $P = \ell.P''$, and we get 
$P\{\subst{b}{a}\} = (\ell.P'')\{\subst{b}{a}\} = \ell\{\subst{b}{a}\}.P''\{\subst{b}{a}\} $
with $P\{\subst{b}{a}\} \xrightarrow{s''}P''\{\subst{b}{a}\}$ and $s'' \blackstretcheq \ell \{\subst{b}{a}\}$.
Moreover, by rule $(Act)$, $P=\ell.P'' \xrightarrow{s'}P'' $ for any $s' \blackstretcheq \ell$.
In particular, we can choose $s'$ to have the same virtual links (and in the same positions) as $s''$, so that 
$s' \{\subst{b}{a}\}= s''$.
By putting $P' = P''\{\subst{b}{a}\}$, we are done.

\item [Rule (Res)]
 By hypothesis, $P = \restrict{c}Q$ and without loss of generality assume that $c \neq a,b$. We get $P\{\subst{b}{a}\} = (\restrict{c}Q)\{\subst{b}{a}\} = \restrict{c}(Q\{\subst{b}{a}\})$. 
 Hence, there exist $Q'$ and $s''_1$  such that 
 \[P\{\subst{b}{a}\}=\restrict{c}(Q\{\subst{b}{a}\}) \xrightarrow{\restrict{c}s''_1 }\restrict{c}Q'=P'
 \]
 with $s''=\restrict{c}s''_1$ and $Q\{\subst{b}{a}\}\xrightarrow{s''_1}Q'$.
 Since there is no matched occurrence of $b$ in $s''$, there is none in $s''_1$.
 By inductive hypothesis, there exist $Q''$ and $s'_1$ such that
 $Q \xrightarrow {s'_1} Q''$, with $s'_1\{\subst{b}{a}\} = s''_1$, and $Q' = Q''\{\subst{b}{a}\} $.
Since $\restrict{c}s''_1 = \restrict{c}(s'_1\{\subst{b}{a}\})$ is defined then also $\restrict{c}s'_1$ is defined, because $c\neq a,b$.
Thus, we can apply rule $(Res)$ to get
 $\restrict{c}Q \xrightarrow{\restrict{c}s'_1}\restrict{c}Q'' $, and we 
take $P''=\restrict{c}Q''$ and $s'=\restrict{c}s'_1$.
Then we get  
$P' = \restrict{c}Q' = \restrict{c}(Q''\{\subst{b}{a}\})= P''\{\subst{b}{a}\}$ and 
$s''= \restrict{c}s''_1= \restrict{c}(s'_1\{\subst{b}{a}\})= (\restrict{c}s'_1)\{\subst{b}{a}\}=s'\{\subst{b}{a}\}$.

\item [Rule (Com)]
 By hypothesis, $P= R|Q$ and, by rule (Com), we have
 \[P\{\subst{b}{a}\} =(R | Q)\{\subst{b}{a}\} = (R\{\subst{b}{a}\} | Q\{\subst{b}{a}\})\xrightarrow{\merges{s''_1}{s''_2}} (R'|Q') = P'
 \]
 with $s''=\merges{s''_1}{s''_2}$, $R\{\subst{b}{a}\} \xrightarrow{s''_1}R'$ and $Q\{\subst{b}{a}\} \xrightarrow{s''_2}Q'$.
Since all there is no matched occurrence of $b$ in $s''$, there is none in both $s''_1$ and $s''_2$. 
By inductive hypothesis, there exist $R''$, $s'_1$, $Q''$, $s'_2$ such that
 $R \xrightarrow{s'_1}R''$ and  $Q \xrightarrow{s'_2}Q''$, with 
 $s''_1 =  s'_1\{\subst{b}{a}\}$,  $R' = R''\{\subst{b}{a}\}$, $s''_2 =  s'_2\{\subst{b}{a}\}$,  $Q' = Q''\{\subst{b}{a}\}$. 
Now we observe that $\merges{s'_1}{s'_2}$ is defined. In fact the only reason for which $\merges{s'_1}{s'_2}$ is undefined when $\merges{s'_1\{\subst{b}{a}\}}{(s'_2\{\subst{b}{a}\})}$ is defined would be that an action $a$ should be paired with an action $b$ before the substitution takes place, but this is ruled out by the assumption that there is no matched occurrence of $b$ in $s''$.
 By rule $(Com)$, we have $R|Q \xrightarrow{\merges{s'_1}{s'_2}} R''|Q''$.
 Now, we take $P''=R''|Q''$, $s'=\merges{s'_1}{s'_2}$ and we get $P' = R'|Q' = R''\{\subst{b}{a}\}|Q''\{\subst{b}{a}\}=(R''|Q'')\{\subst{b}{a}\}=P''\{\subst{b}{a}\}$,
  $s''=\merges{s''_1}{s''_2} = \merges{s'_1\{\subst{b}{a}\}}{(s'_2\{\subst{b}{a}\})} = (\merges{s'_1}{s'_2})\{\subst{b}{a}\}= s'\{\subst{b}{a}\}$ and we are done. 

\item [Rule (Ren)]
By hypothesis, $P = Q[\phi]$. 
Let $a' = \phi^{-1}(a)$ and $b' = \phi^{-1}(b)$.
We get that
$P\{\subst{b}{a}\} = Q[\phi]\{\subst{b}{a}\}
=Q\{\subst{b'}{a'}\}[\phi]$
 and, by rule $(Ren)$ there exist $Q'$ and $s''_1$ such that
\[P\{\subst{b}{a}\} = Q[\phi]\{\subst{b}{a}\} = Q\{\subst{b'}{a'}\}[\phi]
 \xrightarrow{s''_1[\phi]}  Q'[\phi] = P'
\]
with $s'' =s''_1[\phi]$ and
 $Q\{\subst{b'}{a'}\} \xrightarrow {s_1} Q'$. 
 Since there is no matched occurrence of $b$ in $s''$ it must be the case that there is no matched occurrence of $b'$ in $s''_1$.
 By inductive hypothesis, there exists $Q''$ and $s'_1$ 
 s.t.~$Q \xrightarrow {s'_1} Q''$, with $s''_1 = s'_1\{\subst{b'}{a'}\}$, and $Q' = Q''\{\subst{b'}{a'}\}$.
By rule $(Ren)$, $Q[\phi] \xrightarrow{s'_1[\phi]} Q''[\phi] $, and we 
take $P''=Q''[\phi]$ and $s'=s'_1[\phi]$.
Then we get  
$P' =Q'[\phi] = Q''\{\subst{b'}{a'}\}[\phi]= Q''[\phi]\{\subst{b}{a}\} =
P''\{\subst{b}{a}\}$ and
$s'' = s''_1[\phi] = s'_1\{\subst{b'}{a'}\}[\phi] = s'_1[\phi]\{\subst{b}{a}\} = s'\{\subst{b}{a}\}$.
\end{description}
For the remaining rules the proofs are simpler and thus omitted.
\end{enumerate}
\end{proof}

\begin{proposition}\label{prop:subs}
For any processes $P,Q$, if $P \simnet Q$ then for any channel $a,b$ we have $P\{\subst{b}{a}\} \simnet Q\{\subst{b}{a}\}$.
\end{proposition}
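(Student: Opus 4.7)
The plan is to define the candidate relation
\[
\mathbf{R} \defeq \{ (P\{\subst{b}{a}\}, Q\{\subst{b}{a}\}) \mid P \simnet Q \}
\]
and to show that $\mathbf{R}$ is a network bisimulation. Once this is established, $P\{\subst{b}{a}\}\mathrel{\mathbf{R}}Q\{\subst{b}{a}\}$ follows from the hypothesis $P \simnet Q$, so $P\{\subst{b}{a}\} \simnet Q\{\subst{b}{a}\}$ because $\simnet$ is the largest network bisimulation.

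For the bisimulation game, suppose $P\{\subst{b}{a}\} \xrightarrow{s} T$ for some label $s$ and process $T$. By Lemma~\ref{lem:redacsubst}(2), there exist $s_0$ and $P'$ such that $P \xrightarrow{s_0} P'$ with $T = P'\{\subst{b}{a}\}$ and $s \blackstretcheq s_0\{\subst{b}{a}\}$. Since $P \simnet Q$, there exist $s_0'$ and $Q'$ with $Q \xrightarrow{s_0'} Q'$, $s_0 \stretcheq s_0'$ and $P' \simnet Q'$. By Lemma~\ref{lem:redacsubst}(1), $Q\{\subst{b}{a}\}\xrightarrow{s_0'\{\subst{b}{a}\}} Q'\{\subst{b}{a}\}$. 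It remains to verify that this matching label is $\stretcheq$-equivalent to $s$: by Lemma~\ref{lem:linksubst}(ii) (substitution preserves $\stretcheq$) we get $s_0\{\subst{b}{a}\} \stretcheq s_0'\{\subst{b}{a}\}$, and since $\blackstretcheq \subseteq \stretcheq$ we also have $s \stretcheq s_0\{\subst{b}{a}\}$, so by transitivity $s \stretcheq s_0'\{\subst{b}{a}\}$. Finally, $P' \simnet Q'$ yields $(P'\{\subst{b}{a}\}, Q'\{\subst{b}{a}\}) \in \mathbf{R}$, closing the diagram. The symmetric case (a move of $Q\{\subst{b}{a}\}$) is entirely analogous, using the symmetry of $\simnet$.

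The main obstacle, and the reason Lemma~\ref{lem:redacsubst} is needed in both directions, is the asymmetry between items~(1) and~(2) of that lemma: a transition of the substituted process cannot be lifted to a transition of the original process with an exactly matching label, only up to $\blackstretcheq$, because a pending occurrence of $b$ in a pre-substitution chain might have been paired with a (substituted) $a$ in the post-substitution chain. However, the weakening to $\blackstretcheq$ is harmless for us, since bisimilarity already matches labels up to the coarser equivalence $\stretcheq$, and $\blackstretcheq$ is included in it. Thus all the technical work has already been discharged in the preceding lemmata, and the proof reduces to a short diagram chase that combines Lemma~\ref{lem:redacsubst} (transport of transitions across substitution) with Lemma~\ref{lem:linksubst}(ii) (preservation of $\stretcheq$ by substitution).
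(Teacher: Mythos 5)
Your proposal is correct and follows essentially the same route as the paper's own proof: the same candidate relation, the same use of Lemma~\ref{lem:redacsubst} (point~2 to pull the transition back, point~1 to push the matching transition forward), and the same appeal to Lemma~\ref{lem:linksubst}(ii) together with $\blackstretcheq\ \subseteq\ \stretcheq$ and transitivity to match the labels. If anything, you spell out the transitivity step and the symmetric case slightly more explicitly than the paper does.
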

\begin{proof}
We define the relation $\mathbf{R}_{sub} = \{ (P\{\subst{b}{a}\},Q\{\subst{b}{a}\})\;|\; P \simnet Q \}$ and prove that it is a network bisimulation.\\
We want to show that if $P\{\subst{b}{a}\} \xrightarrow{s} P'$ then there exist $Q'$ and $s'$ such that $Q\{\subst{b}{a}\} \xrightarrow{s'}Q'$, with $s \stretcheq s'$ and 
$P' \ \mathbf{R}_{sub}\ Q'$.\\
By Lemma~\ref{lem:redacsubst} (point 2), there exist $P''$ and $s''$ such that 
$P \xrightarrow{s''}P''$ with $P' = P''\{\subst{b}{a}\}$ and $s  \blackstretcheq s''\{\subst{b}{a}\}$.\\
By hypothesis, $P \simnet Q$, then $\exists$ $s'''$, $Q''$ such that $Q\xrightarrow{s'''} Q''$, with $s'' \stretcheq s'''$ and $P'' \simnet Q'''$.\\
By Lemma~\ref{lem:redacsubst} (point 1),  $Q\{\subst{b}{a}\}\xrightarrow{s'''\{\subst{b}{a}\}} Q''\{\subst{b}{a}\}$, and we take $s' = s'''\{\subst{b}{a}\}$ and $Q' = Q''\{\subst{b}{a}\}$. Now we are done, since  $P' = P''\{\subst{b}{a}\} \ \mathbf{R}_{sub} \ Q''\{\subst{b}{a}\}=Q'$ and, 
by Lemma~\ref{lem:linksubst}(ii),  $ s \blackstretcheq s''\{\subst{b}{a}\} \stretcheq s'''\{\subst{b}{a}\}=s'$.
\end{proof}

\begin{example}
It is illustrative to revisit the classical CCS counterexample that shows that strong bisimilarity is not a congruence w.r.t. \ substitution, already mentioned in Section~\ref{sec:CCS}. The translations of the two CCS processes $a.\nil \ |\ \overline{b}.\nil \sim a.\overline{b}.\nil + \overline{b}.a.\nil$ in \CNA\ are respectively
$\link{\silent}{a}.\nil\ |\ \link{b}{\silent}.\nil$ and $\link{\silent}{a}.\link{b}{\silent}.\nil + \link{b}{\silent}.\link{\silent}{a}.\nil$. But now the transition 
$\link{\silent}{a}.\nil\ |\ \link{b}{\silent}.\nil 
\xrightarrow{\startchain{\silent}\chainedlink{a}{\noact}\chainedlink{\noact}{b}\chainend{\silent}}
\nil\ |\ \nil$ cannot be simulated by $\link{\silent}{a}.\link{b}{\silent}.\nil + \link{b}{\silent}.\link{\silent}{a}.\nil$ and
the two processes are not network bisimilar.
The reason the transition cannot be simulated is that in one case you have concurrency and in the other sequentiality, while in our semantics rule Com applies in the concurrent case.
\end{example}

\subsection{Composite and Dynamic Routing in \CNA}

By using \CNA\ as a modelling framework, we can now revisit the example of composite routing  and prove some interesting properties. We start by introducing the simplest possible algebra for building complex routing infrastructures starting from basic building blocks. As done in Section~\ref{sec:CCS}, we can imagine a routing infrastructure as a box with a left and right interface and with some connections from (some of) the left channels to (some of) the right channels.

\begin{definition}[Basic infrastructure]
Let $\widetilde{a}= a_1,...,a_n$ and $\widetilde{b} = b_1,...,b_m$ be two lists of channels. A \emph{basic routing infrastructure} $R$ from $\widetilde{a}$ to $\widetilde{b}$, written $R(\widetilde{a},\widetilde{b})$ is a \CNA\ process of the form
\[
R(\widetilde{a},\widetilde{b}) \defeq \ell_1.R(\widetilde{a},\widetilde{b}) + ... + \ell_k.R(\widetilde{a},\widetilde{b})
\]
with $\ell_h = \link{a_{i_h}}{b_{j_h}}$  where $i_h \in [1,n]$ and $j_h \in [1,m]$ for any $h\in [1,k]$.
\end{definition}

\begin{definition}[Composite infrastructure]
A \emph{composite infrastructure} $R(\widetilde{a},\widetilde{b})$ is either a basic infrastructure or the composition 
\[
R(\widetilde{a},\widetilde{b}) = \restrict{\widetilde{c}}(Q(\widetilde{a},\widetilde{c})\ |\ S(\widetilde{c},\widetilde{b}))
\]
of two (composite) infrastructures $Q(\widetilde{a},\widetilde{c})$ and $S(\widetilde{c},\widetilde{b})$.
\end{definition}

To each (composite) infrastructure $R(\widetilde{a},\widetilde{b})$ we can associate a graph $\mathcal{G}(R(\widetilde{a},\widetilde{b}))$, whose nodes are the channels appearing in the definition\footnote{Without loss of generality, we can exploit alpha-conversion to assume that all restricted channels are named in a different way.} of the process $R(\widetilde{a},\widetilde{b})$, and whose arcs are induced by the links appearing in the process,  i.e.\  there is an arc $x\rightarrow y$ if the link $\link{x}{y}$ appears as a prefix in the definition of $R(\widetilde{a},\widetilde{b})$. We then have the following characterisation of the transitions admitted by $R(\widetilde{a},\widetilde{b})$.

We recall that with $||s||$ we denote the \emph{size} of $s$, i.e.\ the number of solid links in the link chain $s$. Note that size is preserved by the equivalence $\blackstretcheq$, but not by $\stretcheq$.

\begin{restatable}[]{lemma}{lemmacinquantuno}\label{lem:esempio} 
Let $R(\widetilde{a},\widetilde{b})$ be a composite infrastructure. 
\begin{enumerate}
\item If 
$R(\widetilde{a},\widetilde{b}) \xrightarrow{s} R'$ then $R' = R(\widetilde{a},\widetilde{b})$ and there exist two nodes $a_i\in \widetilde{a}$ and $b_j\in \widetilde{b}$ of the graph $\mathcal{G}(R(\widetilde{a},\widetilde{b}))$ such that $s \stretcheq \link{a_i}{b_j}$ and there is a path from $a_i$ to $b_j$ whose length is $||s||$ in the graph $\mathcal{G}(R(\widetilde{a},\widetilde{b}))$.
\item If there is a path of length $n$ from $a_i$ to $b_j$ in the graph $\mathcal{G}(R(\widetilde{a},\widetilde{b}))$ then $R(\widetilde{a},\widetilde{b}) \xrightarrow{s}R(\widetilde{a},\widetilde{b})$  with $s \stretcheq \link{a_i}{b_j}$ and $||s|| = n$.
\end{enumerate}
\end{restatable}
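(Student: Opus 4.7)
The plan is to prove both items simultaneously by structural induction on the composite infrastructure $R(\widetilde{a},\widetilde{b})$. For the base case $R$ is a basic infrastructure $\sum_h \ell_h.R$ with $\ell_h = \link{a_{i_h}}{b_{j_h}}$, and every transition derivation must be a (possibly empty) chain of (\textit{Lsum})/(\textit{Rsum}) steps above a single application of (\textit{Act}), yielding $R \xrightarrow{s} R$ iff $s \blackstretcheq \ell_h$ for some $h$. Since $||\ell_h|| = 1$ and each $\ell_h$ corresponds to an edge $a_{i_h} \to b_{j_h}$ of $\mathcal{G}(R)$ of length $1$, both items follow directly; (\textit{Ide}) on the recursive definition explains $R' = R$.

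For the inductive step, write $R = \restrict{\widetilde{c}}(Q(\widetilde{a},\widetilde{c}) \mid S(\widetilde{c},\widetilde{b}))$. For item~1, the derivation of $R \xrightarrow{s} R'$ must be (\textit{Res}) over a transition of $Q \mid S$; the (\textit{Lpar}) and (\textit{Rpar}) cases are ruled out because, by inductive hypothesis, an isolated $Q$-move (resp.\ $S$-move) yields a chain whose essential form has an unmatched extremum in $\widetilde{c}$, making the subsequent restriction undefined. So the derivation uses (\textit{Com}): $Q \xrightarrow{s_1} Q'$ and $S \xrightarrow{s_2} S'$ with $s = \restrict{\widetilde{c}}\merges{s_1}{s_2}$. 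By IH, $Q' = Q$, $S' = S$, and $s_1 \stretcheq \link{a_i}{c_k}$, $s_2 \stretcheq \link{c_l}{b_j}$ for witnessing paths of lengths $||s_1||$ and $||s_2||$ in $\mathcal{G}(Q)$ and $\mathcal{G}(S)$. Merge compatibility at the two ``interface positions'' forces $c_k = c_l \in \widetilde{c}$ (otherwise either the merged sequence fails to be a link chain or the restriction is undefined by the first condition of link chain validity). Concatenating the two subpaths through $c_k$ gives a path in $\mathcal{G}(R)$ of length $||s_1||+||s_2|| = ||\merges{s_1}{s_2}|| = ||s||$, since merge and restriction both preserve size. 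Finally, after restriction every matched occurrence of $c_k$ becomes $\silent$, and the $\stretcheq$-axiom $s_1 \startchain{\alpha}\chainedlink{\silent}{\silent}\chainend{\beta} s_2 \stretcheq s_1 \link{\alpha}{\beta} s_2$ lets us collapse the chain down to its essential representative $\link{a_i}{b_j}$.

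For item~2, given a path of length $n$ from $a_i$ to $b_j$ in $\mathcal{G}(R)$, it must cross from $\mathcal{G}(Q)$ into $\mathcal{G}(S)$ through some $c_k \in \widetilde{c}$ since the endpoints lie on the outer boundary. Split the path at $c_k$ into $\pi_1$ of length $n_1$ and $\pi_2$ of length $n_2$, with $n = n_1+n_2$. The inductive hypothesis supplies transitions $Q \xrightarrow{t_1} Q$ and $S \xrightarrow{t_2} S$ with $t_i \stretcheq \link{\cdot}{\cdot}$ and $||t_i|| = n_i$. The key obstacle is that the specific chains $t_1,t_2$ returned by the IH need not have compatible shapes for the merge to be defined. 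To remedy this, choose representatives $s_1 \blackstretcheq t_1$ and $s_2 \blackstretcheq t_2$ of common length $n_1 + n_2$, obtained by padding with $\link{\noact}{\noact}$ so that the $n_1$ solid links of $s_1$ occupy the initial positions and those of $s_2$ occupy the trailing positions (the remaining positions being virtual links on one side and solid on the other). Then $\merges{s_1}{s_2}$ is defined and produces a chain whose consecutive solid links traverse $\pi_1$ then $\pi_2$, with a matched $c_k$ at the junction. The Accordion Lemma~\ref{lem:stretchlts} promotes the IH transitions along $\blackstretcheq$ to $Q \xrightarrow{s_1} Q$ and $S \xrightarrow{s_2} S$; applying (\textit{Com}) and (\textit{Res}) gives $R \xrightarrow{\restrict{\widetilde{c}}\merges{s_1}{s_2}} R$ with size $n$, and the same collapsing argument as in item~1 shows the label is $\stretcheq \link{a_i}{b_j}$. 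The main technical subtlety throughout is keeping $\blackstretcheq$ (which preserves size) separate from $\stretcheq$ (which may change size by absorbing silent-silent junctions), a distinction that is exactly what makes the size equation $||s|| = n$ come out right.
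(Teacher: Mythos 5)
Your proposal is correct and takes essentially the same route as the paper's proof: structural induction on the composite infrastructure, with the base case resolved by rule (\textit{Act}) up to $\blackstretcheq$, the inductive case of item~1 reduced to a (\textit{Com})-plus-(\textit{Res}) derivation whose junction channel is forced by definedness of the restriction, and item~2 handled by splitting the path at a boundary channel, padding the two inductively obtained labels with virtual links so that the merge is defined, and composing via (\textit{Com}) and (\textit{Res}). If anything, your write-up is more explicit than the paper's in three spots it leaves implicit: ruling out the (\textit{Lpar})/(\textit{Rpar}) cases via the pending $\widetilde{c}$-extremum, the concrete shape of the padded representatives $s_1 \blackstretcheq t_1$, $s_2 \blackstretcheq t_2$, and the appeal to the Accordion Lemma~\ref{lem:stretchlts} needed to replace the IH labels by their padded versions.
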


From the previous lemma, it follows that any composite infrastructure $R(\widetilde{a},\widetilde{b})$ is network bisimilar to a basic infrastructure that has one link for each possible path in $\mathcal{G}(R(\widetilde{a},\widetilde{b}))$ from one of the $a_i$s to one of the $b_j$s.

\begin{definition}
Let $R(\widetilde{a},\widetilde{b})$ be a composite infrastructure and $\mathcal{G} = \mathcal{G}(R(\widetilde{a},\widetilde{b}))$ be its corresponding graph. We denote with $P_{\mathcal{G}}(\widetilde{a},\widetilde{b})$ the basic infrastructure that offers one link for any path in $\mathcal{G}$,  i.e.\ 
\[
P_{\mathcal{G}}(\widetilde{a},\widetilde{b}) \defeq \sum_{a_i \rightarrow^* b_j \in \mathcal{G}} \link{a_i}{b_j}.P_{\mathcal{G}}(\widetilde{a},\widetilde{b})  .
\]
where $a \rightarrow^* b$ denotes the presence of a path from $a$ to $b$. 
\end{definition}

\begin{corollary}
Any composite infrastructure $R(\widetilde{a},\widetilde{b})$ is network bisimilar to the basic infrastructure $P_{\mathcal{G}(R(\widetilde{a},\widetilde{b}))}(\widetilde{a},\widetilde{b})$.
\end{corollary}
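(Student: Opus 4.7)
The plan is to exhibit the network bisimulation
\[
\mathbf{R} \defeq \{(R(\widetilde{a},\widetilde{b}),\, P_{\mathcal{G}}(\widetilde{a},\widetilde{b}))\} \cup \{(P_{\mathcal{G}}(\widetilde{a},\widetilde{b}),\, R(\widetilde{a},\widetilde{b}))\},
\]
and check that the bisimulation game goes through in both directions by a direct appeal to Lemma~\ref{lem:esempio}. A preliminary observation that makes the verification painless is that both $R(\widetilde{a},\widetilde{b})$ and $P_{\mathcal{G}}(\widetilde{a},\widetilde{b})$ are ``self-looping'' processes: Lemma~\ref{lem:esempio}(1) ensures that any transition $R(\widetilde{a},\widetilde{b}) \xrightarrow{s} R'$ has $R' = R(\widetilde{a},\widetilde{b})$, while the syntactic form of $P_{\mathcal{G}}(\widetilde{a},\widetilde{b})$ together with the rules (\textit{Act}), (\textit{Lsum}), and (\textit{Rsum}) entails that any transition $P_{\mathcal{G}}(\widetilde{a},\widetilde{b}) \xrightarrow{s'} P''$ satisfies $P'' = P_{\mathcal{G}}(\widetilde{a},\widetilde{b})$. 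Hence whenever a challenge is matched, the pair of residuals is again in $\mathbf{R}$, so the inductive closure takes care of itself.

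For the forward direction, I take a transition $R(\widetilde{a},\widetilde{b}) \xrightarrow{s} R(\widetilde{a},\widetilde{b})$. By Lemma~\ref{lem:esempio}(1) there are $a_i\in\widetilde{a}$ and $b_j\in\widetilde{b}$ with $s \stretcheq \link{a_i}{b_j}$ and a path from $a_i$ to $b_j$ in $\mathcal{G}$. By construction $P_{\mathcal{G}}(\widetilde{a},\widetilde{b})$ contains $\link{a_i}{b_j}.P_{\mathcal{G}}(\widetilde{a},\widetilde{b})$ as a summand, and so via (\textit{Act}) (with the trivial $\blackstretcheq$-witness $\link{a_i}{b_j}$) and the sum rules we obtain $P_{\mathcal{G}}(\widetilde{a},\widetilde{b}) \xrightarrow{\link{a_i}{b_j}} P_{\mathcal{G}}(\widetilde{a},\widetilde{b})$, with $\link{a_i}{b_j} \stretcheq s$.

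For the backward direction, suppose $P_{\mathcal{G}}(\widetilde{a},\widetilde{b}) \xrightarrow{s'} P_{\mathcal{G}}(\widetilde{a},\widetilde{b})$. Rule (\textit{Act}) together with the sum rules forces $s' \blackstretcheq \link{a_i}{b_j}$ for some summand $\link{a_i}{b_j}$, i.e.\ for some pair witnessing a path in $\mathcal{G}$. Applying Lemma~\ref{lem:esempio}(2) to this path yields a transition $R(\widetilde{a},\widetilde{b}) \xrightarrow{s''} R(\widetilde{a},\widetilde{b})$ with $s'' \stretcheq \link{a_i}{b_j}$; since $\blackstretcheq$ is contained in $\stretcheq$, transitivity gives $s'' \stretcheq s'$, and the residuals are again related. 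This closes the bisimulation.

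The main obstacle, if any, is merely bookkeeping between the two equivalences $\blackstretcheq$ (which witnesses the exact label of the $P_{\mathcal{G}}$-transition produced by rule (\textit{Act})) and $\stretcheq$ (the coarser equivalence supplied by Lemma~\ref{lem:esempio}); Corollary~\ref{cor:essential} makes it clear that $\blackstretcheq \subseteq \stretcheq$, so no delicate link-chain manipulation is required. Everything nontrivial about the length and restriction-related collapsing of link chains has already been absorbed into Lemma~\ref{lem:esempio} itself.
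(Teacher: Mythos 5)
Your proof is correct and follows essentially the same route the paper intends: the corollary is stated there as an immediate consequence of Lemma~\ref{lem:esempio}, and the bisimulation you exhibit (the pair $(R(\widetilde{a},\widetilde{b}),P_{\mathcal{G}}(\widetilde{a},\widetilde{b}))$, with the forward direction matched via point~(1) and the backward direction via point~(2) of that lemma) is exactly the witness that makes this implication explicit. The only cosmetic omission is that transitions of $P_{\mathcal{G}}(\widetilde{a},\widetilde{b})$, being a defined constant, formally arise by first unfolding it via rule (\textit{Ide}) before (\textit{Act}) and the sum rules apply, but this does not affect the argument.
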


\begin{example}
Let us define the following basic infrastructures
\begin{eqnarray*}
R'(\widetilde{\mathit{req}},\widetilde{s}) & \defeq &
\link{\mathit{req}_1}{s_1}.R'(\widetilde{\mathit{req}},\widetilde{s}) +
\link{\mathit{req}_1}{s_2}.R'(\widetilde{\mathit{req}},\widetilde{s}) +
\link{\mathit{req}_2}{s_2}.R'(\widetilde{\mathit{req}},\widetilde{s}) \\
R''(\widetilde{s},\widetilde{s'}) & \defeq &
\link{s_1}{s'_1}.R''(\widetilde{s},\widetilde{s'}) +
\link{s_2}{s'_2}.R''(\widetilde{s},\widetilde{s'}) \\
R'''(\widetilde{s'},\widetilde{\mathit{srv}}) & \defeq &
\link{s'_2}{\mathit{srv}_2}.R'''(\widetilde{s'},\widetilde{\mathit{srv}})
\end{eqnarray*}
and combine them to form the composite infrastructures
\begin{eqnarray*}
Q(\widetilde{\mathit{req}},\widetilde{s'}) & \defeq &
\restrict{\widetilde{s}}(R'(\widetilde{\mathit{req}},\widetilde{s}) | R''(\widetilde{s},\widetilde{s'})) \\
R = R(\widetilde{\mathit{req}},\widetilde{\mathit{srv}}) & \defeq &
\restrict{s'}(Q(\widetilde{\mathit{req}},\widetilde{s'})  | R'''(\widetilde{s'},\widetilde{\mathit{srv}}))
\end{eqnarray*}
Assuming all of the tuples $\widetilde{\mathit{req}}$, $\widetilde{s}$, $\widetilde{s'}$ and $\widetilde{\mathit{srv}}$ have length $2$, the graph $\mathcal{G}(R)$ is depicted below
\[
\xymatrix{
{_{\mathit{req}_1}\bullet} \ar[r] \ar[rd] &
{_{s_1}\bullet} \ar[r] &
{\bullet_{s'_1}} &
{\bullet_{\mathit{srv}_1}} 
\\
{_{\mathit{req}_2}\bullet} \ar[r] &
{_{s_2}\bullet} \ar[r] &
{\bullet_{s'_2}} \ar[r] &
{\bullet_{\mathit{srv}_2}} 
}
\]
Then, it is immediately evident that the admissible transitions for $R$ are of the form
\[
R \xrightarrow{
\startchain{\mathit{req}_1}
\chainedlink{\silent}{\silent}
\chainedlink{\silent}{\silent}
\chainend{\mathit{srv}_2}
}
R
\qquad
R \xrightarrow{
\startchain{\mathit{req}_2}
\chainedlink{\silent}{\silent}
\chainedlink{\silent}{\silent}
\chainend{\mathit{srv}_2}
}
R
\]
where, of course, many additional virtual links can be appended to the extremes of the labels (remember the Accordion Lemma~\ref{lem:stretchlts}). 
Consequently, $R$ is network bisimilar to the basic infrastructure
$S(\widetilde{\mathit{req}},\widetilde{\mathit{srv}}) \defeq \link{\mathit{req}_1}{\mathit{srv}_2}.S(\widetilde{\mathit{req}},\widetilde{\mathit{srv}}) + \link{\mathit{req}_2}{\mathit{srv}_2}.S(\widetilde{\mathit{req}},\widetilde{\mathit{srv}})$.
\end{example}

Finally, we show that \CNA\ is particularly convenient to model programmable infrastructures, where links can be dynamically added and removed.

Given $\widetilde{a}= a_1,...,a_n$ and $b = b_1,...,b_m$ let us consider the processes
\begin{eqnarray*}
\widehat{R}_{i,j} & \defeq & \link{add_{i,j}}{\silent}.R_{i,j} \\
R_{i,j} & \defeq & \link{a_i}{b_j}.R_{i,j}\ +\ \link{rem_{i,j}}{\silent}.\widehat{R}_{i,j}\ +\ \link{add_{i,j}}{\silent}.(R_{i,j} | R_{i,j})
\end{eqnarray*}
and their parallel composition
\[
R = \prod_{i=1}^n \prod_{j=1}^{m} \widehat{R}_{i,j}
\]
where we use the shorthand $\prod_{i=1}^n P_i$ for the parallel composition $P_1\ |\ \cdots\ |\ P_n$.

The idea is that an interaction involving the link $\link{add_{i,j}}{\silent}$ allows us to add one link from $a_i$ to $b_j$ and that an interaction involving the link $\link{rem_{i,j}}{\silent}$ allows us to remove one such link.
Several links between $a_i$ and $b_j$ can be available at the same time, but no such link can be removed if it is not present.
Initially, the process $R$ makes no link available.

We believe that modelling infrastructures at this level of abstraction drastically improves the situation w.r.t~other process algebras based on dyadic interaction, such as CCS. In fact imagine the situation where a composite programmable infrastructure is modelled in CCS: it can happen that a transfer of information is started along a viable path, but during the chain of interactions one or more of the hops are removed. As a consequence it is then impossible to deliver the request as well as acknowledge the failure. The \CNA\ middleware guarantees that none of these troublesome scenarios can arise in the model.

\subsection{Alternative Definitions}\label{subsec:altdef}

The theory of \CNA\ is quite strong and stable and it can be extended in several directions without much efforts.
Here we briefly discuss only three noteworthy possible variations of the presented framework.

\paragraph{Chain as prefixes}
In the first variation, we could extend the syntax of \CNA\ to allow essential chains instead of solid links as prefixes,  i.e.\  the grammar production $P ::= \ell.P$ can be replaced by $P ::= s.P$ with $s$ essential. This change can increase the usability of the process algebra in modelling different scenarios. For example, we can write a process such as $\startchain{\silent}\chainedlink{a}{\noact}\chainedlink{\noact}{b}\chainedlink{b}{\noact}\chainedlink{\noact}{c}\chainend{\silent}.P$ that requires an interaction from $a$ to $c$ via $b$. All of the results presented in the paper would carry over such an extension. Remarkably, network bisimilarity would still be a congruence.

\paragraph{Bisimilarity $\simbnet$}
In the second variation, in the bisimilarity game, we could decide to take into account the number of traversed (solid) links so to get a finer equivalence. This amounts to changing the definition of bisimulation by requiring that the matching label $s'$ is related to $s$ by $\blackstretcheq$ instead of $\stretcheq$. We can denote the corresponding bisimilarity as $\simbnet$. Since $\blackstretcheq\ \subseteq\ \stretcheq$, it follows that $\simbnet$ is finer than $\simnet$,  i.e.\ , it distinguishes more processes. However, as in the previous case, all the results presented in the paper would carry over this change.

\paragraph{Ordinary bisimilarity $\sim = \simbnet$}
In the third and last variation, we could take ordinary strong bisimilarity $\sim$, by requiring exact matching of labels. Then, because of the Accordion Lemma~\ref{lem:stretchlts}, the resulting equivalence would coincide with the equivalence $\simbnet$ from the second point.



\section{Concluding Remarks and Related Works}
\label{sec:conclusion}
In this paper we have presented \CNA\ as a generalisation of traditional dyadic process calculi able to deal with open multiparty interactions.
These more complex forms of interactions can be represented in \CNA\, without complicating
the underlying synchronisation algebra, still quite simple and with rules similar to the ones of CCS.
We have provided the calculus with an abstract semantics, called network bisimilarity that, as the strong bisimilarity of CCS, is a congruence
w.r.t.\ all the operators of \CNA. In addition, network bisimilarity is also a congruence w.r.t.\ substitutions.
Furthermore, the theory of \CNA\ is quite stable under several variations, such as allowing (essential) link chains as prefixes or changing the notion of network bisimulation to get finer equivalences.

Formally capturing new patterns of communication seems crucial to understand today's Internet infrastructures
and their intrinsic dynamic nature.
From this point of view, we have shown that \CNA \ is particularly convenient for modelling programmable
infrastructures, where links can be added and removed in a dynamic way.

Concerning the taxonomy for multiparty  languages
proposed in~\cite{JS96}, we can say that \CNA\
is  {\em variable}, i.e.\ the number of participants is not fixed a priori, and
{\em asynchronous}, i.e.\ not all the processes in the systems are required 
to make a move at each step. 
In contrast,  \CNA\
adopts a multi-channel mechanism that does not
work as a {\em gate} forcing all the involved processes
to take part in the interaction.
We can say that our {\em  interaction command}, i.e. the command used to
establish a multiparty interaction, only allows a multiparty interaction to happen.
 Thus, following this taxonomy we can classify \CNA\
neither as {\em conjunctive} nor as  {\em disjunctive} calculus.

As stated in Section~\ref{subsec:altdef}, we intend to take \CNA\ as a starting point for investigating more general forms of interaction and more advanced forms of equivalence. Several interesting directions are possible.

Some alternatives to network bisimilarity have been discussed in Section~\ref{subsec:altdef}. 
Weak variants of them can be readily defined by considering solid link chains as internal (silent) actions
(as they represent completed interactions).
As usual, the corresponding equivalences will not be congruences w.r.t.\ choice.
However, we think that the multi-party interaction available in \CNA\ offers already a more abstract mechanism than
dyadic communication, so that weak equivalences are not needed for many applications.

Another possibility, frequently used in process calculi literature, is to define the operational semantics in terms of reductions and then derive (context-closed) observational equivalences on the basis of some well-chosen observables. 
While the obvious choice for the observables would be link prefixes, it is difficult to set up the same methodology for \CNA\ because open multi-party interactions can involve an unbounded number of participants and are difficult to model as reduction rules.
Nevertheless, this would be an interesting research direction to explore in the future.

The name handling variant of \CNA, called \emph{link-calculus}, has been already considered in~\cite{BodeiBB12}
and exploited in~\cite{BodeiBBC14} to model biological interactions.

%
Due to space limitation, we decided to focus here on presenting the communication layer in full details and devote a companion paper to the name handling extension, which is currently under scrutiny.
In particular, on the more applicative side, we think that the generalisation of link prefixes to {\em link-chain} prefixes can be very useful to encode some simple patterns of interaction directly in the action prefixes, thus enhancing the modelling power.

We plan to extend the theory to take into account some weights associated with each link, along the lines of~\cite{DBLP:journals/corr/abs-1011-6308}. For example, if weights are seen as costs, then processes can be compared on the basis of the overall cost of an interaction they offer, and the abstract equivalence can be refined to a preorder to reflect the fact that when two processes offer the same interactions, one is cheaper than the other.
If costs are replaced with some logical information, e.g.\ representing the knowledge associated with the link, then an interaction can be paired with deduction and thus compared with others on the basis of the amount of information it provides.
Other quantitative extensions could exploit probabilities and stochastic rates.

Another direction for future work is concerned with the cross-fertilisation between computational sciences and biology.
In~\cite{BodeiBBC14} we have shown that membrane interactions are intrinsically multi-party, by providing a faithful encoding of Brane calculi~\cite{C05} in terms of \emph{link-calculus}. Brane calculi are compartment-based calculi, introduced to model the behaviour of nested membranes in complex biological systems. We plan to include causality in the picture, so to study dependencies among interactions and track down sources of unwanted behaviours and consequences of biological reactions. 
Causality enriched models have already been used to study metabolic networks~\cite{BodeiBC08,BarbutiGM17,BarbutiGLM18}, e.g.\ for detecting incorrect behaviour that may depend on a particular
ordering of certain events, sometimes difficult to predict.
The idea is to define a causal semantics for \CNA \ and exploit static analysis techniques for approximating the causal relationships among the interactions performed by a complex system, along the lines
of the abstract causal semantics proposed in~\cite{BGL13,BGL15,BBGHL15,BBGLBH17} for the Brane calculus and of the context dependent analysis presented in~\cite{newcfaBioAmb} for BioAmbients~\cite{Bio_Amb}, another calculus for describing biological systems.

A further extension of our approach consists in the possibility of expressing 
non linear communication patterns in the prefixes, as allowing links of arity greater than 2 and combine them in trees, matrices or graphs.


\paragraph{Related Work}
Among the recently presented network-aware extensions of classical calculi such as~\cite{FrancalanzaH08} (to handle explicit distribution, remote operations and process mobility), 
and~\cite{NicolaGP07} (to deal with permanent nodes crashing and links breaking), the
closest proposal to ours is in~\cite{NCPibis}, an extension of  
$\pi$-calculus, where links are named and are distinct from usual input/output actions, and there is one sender and one receiver (the output includes the final receiver name). 
In the name-passing variant of \CNA~\cite{BodeiBB12}, links can carry message tuples, and each participant can play both the sender and the receiver r\^{o}le.
This extended semantics recalls the concurrent semantics in~\cite{NCPibis}, where concurrent transmissions can be observed in the form of a multi-set of routing paths. In our case the collected links are organised in a link chain.

In~\cite{BL08}, the authors present a general framework to extend synchronisation algebras~\cite{W84} with name mobility that 
could be easily adapted to many other high-level kinds of synchronisation, like the one we need, but with a more complex machinery.
%
More sophisticated forms of synchronisations, with a fixed number of processes, are introduced in $\pi$-calculus 
in~\cite{N98} (joint input) and in~\cite{CM03} (polyadic synchronisation).
%
%
The focus of~\cite{LV10} is instead on the expressiveness of an asynchronous CCS equipped with joint inputs allowing the interactions of $n$ processes, proving that there is no truly distributed implementation of operators synchronising more than three processes. 
As in the Join-calculus~\cite{FG96}, and differently from our approach, participants can act either as senders or as receivers.
%

In~\cite{GorrieriV11}, a conservative extension of CCS, called A$^2$CCS, is studied that is 
able to model multi-party synchronisation. 
The mechanism 
is realised as an atomic sequence of dyadic synchronisations of arbitrary lengths, but imposes some constraints 
that make the parallel operator non associative and therefore more difficult to use as a model.

Finally, in~\cite{BodeiDP01}, a distributed version of the $\pi$-calculus 
for handling names, considered as localised to their owners, in concurrent and distributed systems made of mobile processes.
Each process is indeed equipped with a local name environment.
When a name is exported, it is equipped with the information needed to point back to its local environment, thus keeping track of the origin of mobile agents in multi-hop travel on the network.
Communications are not open, but are instead controlled by a distributed name manager that keeps distinct the names generated by different environments.

As a last remark, it is worth noting that the operational semantics of \CNA \ allows a link prefix to participate 
in infinitely many transitions that account for the presence of the link within chains of any length.
Thus, a direct implementation of the \CNA \ semantics that can be used for verification is not immediate.
A possible solution to overcome this problem is the definition of a symbolic semantics. The one in~\cite{BrodoO17} collapses in a single transition all the transitions labelled with link chains composed with the same set of solid links. Its implementation can be found in~\cite{tool}, where an online tool is available for the simulation for CNA-encoded examples.

\paragraph{Acknowledgements}
We would like to thank the reviewers for their careful comments and suggestions that helped us to improve both the presentation and the technical contents of the paper.


\bibliography{biblio_CNA}
\bibliographystyle{plain}

\appendix


\section{Proofs of Technical Results}\label{app}
In this appendix we restate the lemmata presented earlier in the paper and gives the proofs of their correctness.

\subsection{Proofs of Section~\ref{sec:chains}}
We recall Lemma~\ref{lemma:solid}
(the original appears on p.\,\pageref{lemma:solid}).
\lemmaundici*

\begin{proof} 
We prove the three items separately.
\begin{enumerate}
\item[(i)]
Trivial, by the fact that the underlying operation on actions $\merges{\alpha}{\beta}$ is commutative and associative.
\item[(ii)]
The thesis follows by applying the definition of merge.
Let $\ell = \link{\alpha}{\beta}$ and $\ell' = \link{\alpha'}{\beta'}$ then 
$\merges{\ell}{\ell'} = \link{(\merges{\alpha}{\alpha'})}{(\merges{\beta}{\beta'})}$,
where $(\merges{\alpha}{\alpha'}) = \ \noact$ iff $\alpha = \alpha' = \ \noact$. 
Similarly, $(\merges{\beta}{\beta'}) = \ \noact$ iff $\beta = \beta' = \ \noact$.
\item[(iii)]
Since $s$ is solid, its length $n=|s|$ is greater than zero.
If $|s'| \neq n$ then  $\merges{s}{s'}=\bot$.
Otherwise, let $s=\ell_{1}...\ell_{n}$ and $s'=\ell'_{1}...\ell'_{n}$.
Since link chains cannot be made of virtual links only, there is at least a position $i$ in $s'$ such that $\ell'_{i}$ is solid.
Then, $\merges{\ell_{i}}{\ell'_{i}} = \bot$ because, since $s$ is solid, also $\ell_{i}$ is solid.
As a consequence, $\merges{s}{s'}=\bot$.
\end{enumerate}
\end{proof}

We recall Lemma~\ref{lem:mergestretch}
(the original appears on p.\,\pageref{lem:mergestretch}).
\lemmadodici*

\begin{proof}
We prove that the axioms in Definition~\ref{def:black}, when applied in either direction, satisfy the property. Then, by transitivity, the thesis holds for all the elements in each equivalent class of $\blackstretcheq$.
The proof proceeds by cases on the axioms of  $\blackstretcheq$.

\begin{description}
\item[{\bf case $[s_0 \blackstretcheq s_0 \link{\noact}{\noact}]$}]
Let $\merges{s'}{s''}=s_0$ and $s = s_0 \link{\noact}{\noact}$. 
Now we have to find $s_1\blackstretcheq s'$ and $s_2\blackstretcheq s''$ such that $\merges{s_1}{s_2} = s$. 
To this aim, we
set\footnote{Note that $s_1$ and $s_2$ are link chains since otherwise $(\merges{s'}{s''})\link{\noact}{\noact}= s$ would not be defined.}  $s_1 = s'\link{\noact}{\noact}$ and $s_2 = s''\link{\noact}{\noact}$. 
By definition of the merge operator, $\merges{}{}$,   
we get that 
$\merges{s_1}{s_2} 
= \merges{s'\link{\noact}{\noact}}{s''\link{\noact}{\noact}} 
= (\merges{s'}{s''}) \link{\noact}{\noact} 
= s_0 \link{\noact}{\noact}
= s$.

\item[{\bf case $[s_0 \link{\noact}{\noact} \blackstretcheq s_0]$}]
By hypothesis, $\merges{s'}{s''}=s_0\link{\noact}{\noact}$ and $s = s_0$. By definition of  the merge operator, $\merges{}{}$, we get $s'=s_1\link{\noact}{\noact}$ and $s'' = s_2 \link{\noact}{\noact}$, for suitable $s_1$ and $s_2$.
Then, we have $s=\merges{s_1}{s_2}$.

\item[{\bf case $[s_0 \startchain{\noact}\chainedlink{\noact}{\noact}\chainend{\noact}s'_0 \blackstretcheq  s_0 \link{\noact}{\noact} s'_0]$}] 
Let 
$\merges{s'}{s''} = s_0 \startchain{\noact}\chainedlink{\noact}{\noact}\chainend{\noact}s'_0$ and 
$s=s_0 \link{\noact}{\noact} s'_0$. 
Therefore it must be the case that 
$s' = s'_1 \startchain{\noact}\chainedlink{\noact}{\noact}\chainend{\noact}s''_1$ and 
$s'' = s'_2 \startchain{\noact}\chainedlink{\noact}{\noact}\chainend{\noact}s''_2$ 
with $\merges{s'_1}{s'_2} = s_0$ and $\merges{s''_1}{s''_2} = s'_0$. Then we let 
$s_1 = s'_1 \link{\noact}{\noact}s''_1$ and 
$s_2 = s'_2 \link{\noact}{\noact}s''_2$ and we get
$\merges{s_1}{s_2} = 
(\merges{s'_1}{s'_2})\link{\noact}{\noact}(\merges{s''_1}{s''_2}) =
s_0\link{\noact}{\noact}s'_0 = s$.

\item[{\bf case $[s_0 \link{\noact}{\noact} s'_0 \blackstretcheq s_0 \startchain{\noact}\chainedlink{\noact}{\noact}\chainend{\noact}s'_0]$}] 
Let 
$\merges{s'}{s''} = s_0 \link{\noact}{\noact} s'_0$ and 
$s = s_0 \startchain{\noact}\chainedlink{\noact}{\noact}\chainend{\noact}s'_0$. 
Therefore it must be the case that 
$s' = s'_1 \link{\noact}{\noact} s''_1$ and 
$s'' = s'_2 \link{\noact}{\noact} s''_2$ 
with $\merges{s'_1}{s'_2} = s_0$ and $\merges{s''_1}{s''_2} = s'_0$. Then we let 
$s_1 = s'_1 \startchain{\noact}\chainedlink{\noact}{\noact}\chainend{\noact}s''_1$ and 
$s_2 = s'_2 \startchain{\noact}\chainedlink{\noact}{\noact}\chainend{\noact}s''_2$ and we get
$\merges{s_1}{s_2} = 
(\merges{s'_1}{s'_2})\startchain{\noact}\chainedlink{\noact}{\noact}\chainend{\noact}(\merges{s''_1}{s''_2}) =
s_0\startchain{\noact}\chainedlink{\noact}{\noact}\chainend{\noact}s'_0 = s$.
\item[{\bf case $[s_0 \startchain{\alpha}\chainedlink{a}{\noact}\chainedlink{\noact}{a}\chainend{\beta}s'_0 \blackstretcheq  s_0 \startchain{\alpha}\chainedlink{a}{a}\chainend{\beta} s'_0]$}]
Let $\merges{s'}{s''}=s_0\startchain{\alpha}\chainedlink{a}{\noact}\chainedlink{\noact}{a}\chainend{\beta}s'_0 $ and $s = s_0 \startchain{\alpha}\chainedlink{a}{a}\chainend{\beta} s'_0$.
Then, there are four possible cases:
\begin{description}
\item $s' = s'_1\startchain{\noact}\chainedlink{\noact}{\noact}\chainedlink{\noact}{\noact}\chainend{\noact}s''_1$ and $s'' =s'_2\startchain{\alpha}\chainedlink{a}{\noact}\chainedlink{\noact}{a}\chainend{\beta}s''_2$
\item $s' = s'_1\startchain{\alpha}\chainedlink{a}{\noact}\chainedlink{\noact}{\noact}\chainend{\noact}s''_1$ and $s'' = s'_2\startchain{\noact}\chainedlink{\noact}{\noact}\chainedlink{\noact}{a}\chainend{\beta}s''_2$
\item $s' = s'_1\startchain{\noact}\chainedlink{\noact}{\noact}\chainedlink{\noact}{a}\chainend{\beta}s''_1$ and $s'' = s'_2\startchain{\alpha}\chainedlink{a}{\noact}\chainedlink{\noact}{\noact}\chainend{\noact}s''_2$
\item  $s' = s'_1\startchain{\alpha}\chainedlink{a}{\noact}\chainedlink{\noact}{a}\chainend{\beta}s''_1$ and $s'' = s'_2\startchain{\noact}\chainedlink{\noact}{\noact}\chainedlink{\noact}{\noact}\chainend{\noact}s''_2$
\end{description}
with $\merges{s'_1}{s'_2}=s_0$ and $\merges{s''_1}{s''_2}=s'_0$.\\
We only show the first case, as the other ones are similar.\\
Now we let $s_1 = s'_1 \startchain{\noact}\chainedlink{\noact}{\noact}\chainend{\noact}s''_1 \blackstretcheq s'$ and 
$s_2 = s'_2 \startchain{\alpha}\chainedlink{a}{a}\chainend{\beta}s'_2 = s$ and we are done 
since
$\merges{s_1}{s_2}=(\merges{s_1}{s'_2}) \startchain{\alpha}\chainedlink{a}{a}\chainend{\beta}(\merges{s''_1}{s''_2})$.
\end{description}

\noindent
The remaining cases, i.e.\ $s_0 \blackstretcheq \link{\noact}{\noact}s_0$ , $\link{\noact}{\noact}s_0  \blackstretcheq s_0$  and  
$s_0 \startchain{\alpha}\chainedlink{a}{a}\chainend{\beta} s'_0   \blackstretcheq
s_0 \startchain{\alpha}\chainedlink{a}{\noact}\chainedlink{\noact}{a}\chainend{\beta}s'_0$, 
have  similar proofs and are omitted.
\end{proof}

We recall Lemma~\ref{lemma:res_noact}
(the original appears on p.\,\pageref{lemma:res_noact}).
\lemmaquindici*

\begin{proof}
The proof derives from the definitions of the restriction $\restrict{}$ operator  and of the merge operator $\merges{}{}$, both defined  on link chains.
\begin{enumerate}
\item[(i)]
Obvious, as $\restrict{a}\alpha \neq \ \noact$ if $\alpha \neq \ \noact$.
\item[(ii)]
If $\restrict{a}s' = \bot$ it means that $a$ is unmatched in $s'$ and since $a$ does not appear in $s$ it remains unmatched in $\merges{s}{s'}$.
Otherwise, $a$ is matched in $s'$ and since solid links are preserved by $\merges{}{}$ it remains matched in $\merges{s}{s'}$, then renaming $a$ to $\silent$ before or after the merge does not change the result.
\item[(iii)]
Obvious as $\restrict{a}\restrict{b}\alpha = \restrict{b}\restrict{a}\alpha$ for any $\alpha$.
\end{enumerate}
\end{proof}

We recall Lemma~\ref{lemma:res}
(the original appears on p.\,\pageref{lemma:res}).
\lemmadiciassette*

\begin{proof}
We prove that the axioms, when applied in either direction, satisfy the property. Then, by transitivity, the thesis holds for all the elements in each equivalent class of $\blackstretcheq$.
The proof proceeds by cases on axioms of  $\blackstretcheq$.
\begin{description}
\item[{\bf case $[s_0  \blackstretcheq s_0 \link{\noact}{\noact}]$}]
Let $\restrict{a}s = s_0$ and  $s' = s_0\link{\noact}{\noact}$. 
Then, we set $s'' = s \link{\noact}{\noact}$, and it is immediate to verify that $a$ is matched in $s''$, as it is in $s$, thus we can write $s' = \restrict{a} s''$ (with $\restrict{a} s \blackstretcheq \restrict{a} s'')$ and we get that $s \blackstretcheq s''$.
\item[{\bf case $[s_0 \link{\noact}{\noact} \blackstretcheq s_0]$}]
Let $\restrict{a}s = s_0 \link{\noact}{\noact}$ and $s' = s_0$. Then it must exists $s''$ s.t.  $s =  s''\link{\noact}{\noact}$ with $s_0 = \restrict{a}s''$. The thesis follows as $s'=s_0 = \restrict{a}s''$ and clearly $s \blackstretcheq s''$.
\item[{\bf case $[s_0 \link{\noact}{\noact} s'_0 \blackstretcheq s_0 \startchain{\noact}\chainedlink{\noact}{\noact}\chainend{\noact}s'_0]$}] 
Let $\restrict{a}s = s_0 \link{\noact}{\noact} s'_0$ and 
$s' = s_0 \startchain{\noact}\chainedlink{\noact}{\noact}\chainend{\noact}s'_0$.
Then it must be 
$s = s_1 \link{\noact}{\noact} s_2$ for some $s_1$ and $s_2$ with 
$\restrict{a}s_1 = s_0$ and $\restrict{a}s_2 = s'_0$.
We set $s'' = s_1 \startchain{\noact}\chainedlink{\noact}{\noact}\chainend{\noact} s_2$, from which the thesis immediately follows.
\item[{\bf case $[s_0 \startchain{\alpha}\chainedlink{b}{\noact}\chainedlink{\noact}{b}\chainend{\beta} s'_0 \blackstretcheq s_0 \startchain{\alpha}\chainedlink{b}{b}\chainend{\beta}s'_0]$}] 
Let $\restrict{a}s = s_0 \startchain{\alpha}\chainedlink{b}{\noact}\chainedlink{\noact}{b}\chainend{\beta} s'_0$ and 
$s' = s_0 \startchain{\alpha}\chainedlink{b}{b}\chainend{\beta}s'_0$.
As $a$ cannot appear in $\restrict{a}s$, it must be the case that $a\neq b$
and $s = s_1 \link{\noact}{\noact} s_2$ for some $s_1$ and $s_2$ with 
$\restrict{a}s_1 =s_0 \link{\alpha}{b}$ and $\restrict{a}s_2 = \link{b}{\beta} s'_0$.
We set $s'' = s_1 s_2$, from which the thesis immediately follows.
\item[{\bf case $[s_0 \startchain{\alpha}\chainedlink{b}{b}\chainend{\beta}s'_0
\blackstretcheq   s_0 \startchain{\alpha}\chainedlink{b}{\noact}\chainedlink{\noact}{b}\chainend{\beta} s'_0]$}] Let $\restrict{a}s = s_0 \startchain{\alpha}\chainedlink{b}{b}\chainend{\beta}s'_0$ and $s' =s_0 \startchain{\alpha}\chainedlink{b}{\noact}\chainedlink{\noact}{b}\chainend{\beta} s'_0$.
As $a$ cannot appear in $\restrict{a}s$, it must be the case that $a\neq b$
and $s = s_1 s_2$ for some $s_1$ and $s_2$ with 
$\restrict{a}s_1 =s_0 \link{\alpha}{b}$ and $\restrict{a}s_2 = \link{b}{\beta} s'_0$.
We set $s'' = s_1 \link{\noact}{\noact} s_2$, from which the thesis immediately follows.
 
\end{description}
\noindent
We omit the remaining cases that are analogous.
\end{proof}

We recall Lemma~\ref{lemma:rename}
(the original appears on p.\,\pageref{lemma:rename}).
\lemmadiciannove*

\begin{proof} \
The proof of the points (i), (ii), (iii) derives from the definitions of the renaming function $\phi{}$, of the merge operator $\merges{}{}$, and of restriction operator $(\nu \ )$, all defined  on link chains.
\begin{itemize}
\item[(i)]
Obvious, as $\ell[\phi] \neq \ \link{\noact}{\noact}$ if $\ell \neq \ \link{\noact}{\noact}$.
\item[(ii)]
Let $s = \ell_1 ... \ell_n$ and 
$s' = \ell'_1 ... \ell'_n$, with $\ell_i = \link{\alpha_i}{\beta_i}$ and $\ell'_i = \link{\alpha'_i}{\beta'_i}$, for all $i \in [1,n]$.
Then, by definition of merge and renaming:
$$(s \bullet s') [\phi] = ((\merges{\ell_{1}}{\ell'_{1}})\cdots(\merges{\ell_{n}}{\ell'_{n}}))[\phi] = 
(\merges{\ell_{1}}{\ell'_{1}})[\phi]\cdots(\merges{\ell_{n}}{\ell'_{n}})[\phi] .$$
Since, for all $i \in [1,n]$,
$(\merges{\ell_{i}}{\ell'_{i}})[\phi] = 
\link{(\merges{\phi(\alpha_i)}{\phi(\alpha'_i)})}{(\merges{\phi(\beta_i)}{\phi(\beta'_i)})} =
\link{\phi(\alpha_i)}{\phi(\beta_i)} \bullet \link{\phi(\alpha'_i)}{\phi(\beta'_i)} =
(\merges{\ell_{i}[\phi]}{\ell'_{i}}[\phi])$, 
we can conclude that
$(s \bullet s') [\phi] = s[\phi] \bullet (s' [\phi])$.
\item[(iii)]
Let $s = \ell_1 ... \ell_n$  with $\ell_i = \link{\alpha_i}{\beta_i}$ then
$(\restrict{a}s)[\phi] = ((\restrict{a}\ell_1)\dots (\restrict{a}\ell_n))[\phi]$,
that amounts to
$((\restrict{a}\ell_1)[\phi]\dots (\restrict{a}\ell_n)[\phi])$.
Note that for any $\alpha$ we have 
$\phi(\restrict{a} \alpha) = \restrict{\phi(a)} \phi(\alpha)$.
In fact, if $\alpha = a$ then $\phi(\restrict{a} a) = \phi(\tau) = \tau$ and
$\restrict{\phi(a)} \phi(a) = \tau$.
If instead $\alpha \neq a$, then $\phi(\alpha) \neq \phi(a)$ (because $\phi$ is a bijection)
and thus
$\phi(\restrict{a} \alpha) = \phi(\alpha) = \restrict{\phi(a)} \phi(\alpha)$.
Since, for all $i \in [1,n]$, 
$(\restrict{a}\ell_i))[\phi] = (\restrict{a}\link{\alpha_i}{\beta_i})[\phi] =
\link{\restrict{a}\alpha_i}{\restrict{a}\beta_i}[\phi] = 
\link{\phi(\restrict{a}\alpha_i)}{\phi(\restrict{a}\beta_i)} = 
\link{\restrict{\phi(a)}\phi(\alpha_i)}{\restrict{\phi(a)}\phi(\beta_i)} =
\restrict{\phi(a)}(\link{\phi(\alpha_i)}{\phi(\beta_i)}) =
\restrict{\phi(a)}(\link{\alpha_i}{\beta_i}[\phi])$, 
we can conclude that 
$(\restrict{a} s)[\phi] = \restrict{\phi(a)} (s[\phi])$.
\item[(iv)]
The proof of (iv) derives from the compositionality of renaming functions, in particular,
$(\link{\alpha}{\beta})[\phi] [\psi] = \link{\phi(\alpha)}{\phi(\beta)}[\psi] = \link{\psi\circ \phi(\alpha)}{\psi\circ \phi(\beta)} = (\link{\alpha}{\beta})[\psi\circ \phi]$.
\item[(v)]
To prove (v), we prove that the axioms, when applied in either direction, satisfy the property. 
Then, by transitivity, the thesis holds for all the elements in each equivalent class of $\blackstretcheq$.
The proof proceeds by cases on axioms of  $\blackstretcheq$.
For the sake of simplicity, we show only one case.
\begin{description}
\item[{\bf case $[s_0 \blackstretcheq s_0 \link{\noact}{\noact}]$}]
Let $s=s_0$ and $s' = s_0 \link{\noact}{\noact}$. 
We have to show that $s[\phi] \blackstretcheq s'[\phi]$ i.e.~that
$s_0[\phi] \blackstretcheq s_0 \link{\noact}{\noact}[\phi]$, where
$s_0 \link{\noact}{\noact}[\phi] = s_0[\phi] \link{\noact}{\noact}$, since $\phi$ distributes over the single links.
Therefore, we obtain that 
$s_0[\phi] \blackstretcheq s_0[\phi] \link{\noact}{\noact}$.
\end{description}
\end{itemize}

\end{proof}

We recall Lemma~\ref{lemma:ren}
(the original appears on p.\,\pageref{lemma:ren}).
\lemmaventi*
\begin{proof}
Since $\phi$ is a bijection, we take its inverse $\phi^{-1}$ and let $s'' \defeq s'[\phi^{-1}]$. 
Then the thesis holds by Lemma~\ref{lemma:rename}(iv--v): 
$s' = s''[\phi]$ trivially holds, since $s''[\phi]= s'[\phi^{-1}][\phi] = s'$, and
$s \blackstretcheq s'[\phi^{-1}]$ holds because $s[\phi] \blackstretcheq s'$ implies
$s=s[\phi][\phi^{-1}] \blackstretcheq s'[\phi^{-1}]$.
\end{proof}

\subsection{Proofs of Section~\ref{abs-sem}}

We recall Lemma~\ref{lemma:stretcheqone}
(the original appears on p.\,\pageref{lemma:stretcheqone}).
\lemmatrentaquattro*

\begin{proof}
\item[(i)]
Take $s$ and let 
\begin{itemize}
\item $n$ be  the number of adjacent solid links in $s$, 
\item $m$ be the number of adjacent virtual links in $s$, 
\item $k$ be the number of virtual links at the extremes of $s$. 
\end{itemize}

For example, for $s=
\startchain{\noact}
\chainedlink{\noact}{\noact}
\chainedlink{\noact}{a}
\chainedlink{\silent}{\silent}
\chainedlink{b}{b}
\chainedlink{c}{\noact}
\chainedlink{\noact}{c}
\chainedlink{d}{\noact}
\chainedlink{\noact}{\noact}
\chainedlink{\noact}{e}
\chainend{\silent}$, we have 
$n=3$,
$m=4$
and 
$k=2$.

We prove the existence of $s'$ by induction on $v(s)=n+m+k$. 
\begin{itemize}
\item
For the base case, if $v(s)=0$ then $s$ is essential and we are done. 
\item
For the inductive case, suppose $v(s)>0$. Then at least one of $n,m,k$ is greater than $0$. 

If $n>0$,  then  there are two adjacent links in $s$ such as $\startchain{\alpha}\chainedlink{a}{a}\chainend{\beta}$ or $\startchain{\alpha}\chainedlink{\silent}{\silent}\chainend{\beta}$. In the former case, we can apply the last axiom of $\blackstretcheq$ (Definition~\ref{def:black}) 
to introduce a virtual link between the matched action $a$ and decrement by one the number of adjacent solid links.
In the latter case, we can apply the rightmost axiom of $\stretcheq$  (Definition~\ref{def:accordion}) to eliminate the matched $\silent$ actions
 and decrement by one the number of adjacent solid links.

If $m>0$, then there are two adjacent virtual links in $s$ and we can apply the top-right axiom of $\blackstretcheq$ (Definition~\ref{def:black}) to decrement by one the number of adjacent virtual links.

If $k>0$ we can apply one of the leftmost axioms of $\blackstretcheq$ (Definition~\ref{def:black}) to decrement by one the number of virtual links at the extremes. 

In all cases we get a chain $s''\stretcheq s$ with $v(s'')=v(s)-1$ and, by inductive hypothesis, there is an essential link chain $s'$ such that $s' \stretcheq s''$. Thus, by transitivity, we have $s' \stretcheq s$.
 \end{itemize}

\item[(ii)]
By contradiction, let $s = \ell_1 \link{\noact}{\noact}\ell_2 \dots \link{\noact}{\noact} \ell_n$ and $s' = \ell'_1 \link{\noact}{\noact}\ell'_2 \dots \link{\noact}{\noact} \ell'_m$ be two essential link  chains such that $s \stretcheq s'$ and $s \neq s'$. Without loss of generality, assume $s$ and $s'$ are chosen such that the length of $s$ is the minimal one for which such a counterexample exists. If $\ell_1 = \ell_1'$, then $\ell_2 \dots \link{\noact}{\noact} \ell_n$ and $\ell'_2 \dots \link{\noact}{\noact} \ell'_m$ would provide a shorter counterexample, contradicting the hypothesis of minimality for $n$. Thus it must be $\ell_1 \neq \ell_1'$. Let $\ell_1 = \link{\alpha_1}{\beta_1}$ and $\ell'_1 = \link{\alpha'_1}{\beta'_1}$. Now we can notice that the axioms for $\stretcheq$ preserves the leftmost non-virtual symbol of a link chain. Thus $\alpha_1 = \alpha'_1$, otherwise $s \stretcheq s'$ would not hold. Finally, we notice that any non-virtual symbol adjacent to a virtual link is preserved by the axioms. Thus $\beta_1 = \beta'_1$ and $\ell_1 = \ell'_1$ leading to a contradiction.

\end{proof}


We recall Lemma~\ref{lem:reswhite}
(the original appears on p.\,\pageref{lem:reswhite}).
\lemmatrentasei*

\begin{proof}
We prove that the property holds for the axioms of $\stretcheq$ when applied in each direction, then the fact that the property is preserved by the rules for equivalence is immediate. We prove only some cases; the remaining ones are similar.
\begin{description}
\item[case [$s_{1}\startchain{\alpha}\chainedlink{\silent}{\silent}\chainend{\beta}s_{2} \stretcheq s_{1}\link{\alpha}{\beta}s_{2}$]]
Let $s = s_{1}\startchain{\alpha}\chainedlink{\silent}{\silent}\chainend{\beta}s_{2}$ and 
$s' = s_{1}\link{\alpha}{\beta}s_{2}$. Since $\restrict{a}s\neq \bot$ it means that $a$ is matched in $s$ and thus it is matched in $s'$, which differs from $s$ only for the removal of $\silent$, and we put $s'' = s'$.
Then we have 
\begin{eqnarray*}
\restrict{a}s'' & = & (\restrict{a}s_1)\link{\restrict{a}\alpha}{\restrict{a}\beta}(\restrict{a}s_2)\\
& \stretcheq & (\restrict{a}s_1)\startchain{\restrict{a}\alpha}\chainedlink{\silent}{\silent}\chainend{\restrict{a}\beta}(\restrict{a}s_2) \\
& = & \restrict{a}s .
\end{eqnarray*}
\item[case [$s_1 \startchain{\alpha}\chainedlink{a}{a}\chainend{\beta} s_2 \stretcheq s_1  \startchain{\alpha}\chainedlink{a}{\noact}\chainedlink{\noact}{a}\chainend{\beta}s_2$]] Let $s= s_1 \startchain{\alpha}\chainedlink{a}{a}\chainend{\beta} s_2$ and $s' =  s_1 \startchain{\alpha}\chainedlink{a}{\noact}\chainedlink{\noact}{a}\chainend{\beta}s_2$.  We let $s'' = s \blackstretcheq s'$ and we are done, since $\restrict{a}s'' = \restrict{a}s$ is defined by hypothesis. 
\item[case [$s_1 \startchain{\alpha}\chainedlink{a}{\noact}\chainedlink{\noact}{a}\chainend{\beta}s_2 \stretcheq  s_1 \startchain{\alpha}\chainedlink{a}{a}\chainend{\beta} s_2$]] Let $s= s_1 \startchain{\alpha}\chainedlink{a}{\noact}\chainedlink{\noact}{a}\chainend{\beta}s_2$ and $s' = s_1 \startchain{\alpha}\chainedlink{a}{a}\chainend{\beta} s_2$. Since $\restrict{a}s$ is not defined, we are done.
\end{description}
\qed
\end{proof}

We recall Lemma~\ref{lem:mergewhite}
(the original appears on p.\,\pageref{lem:mergewhite}).
\lemmatrentasette*
\begin{proof}
We prove that the property holds for the axioms of $\stretcheq$, then the fact that the property is preserved by the rules for equivalence is immediate. We prove only two cases, the remaining ones are similar.
\begin{description}
\item[case [$s\startchain{\alpha}\chainedlink{\silent}{\silent}\chainend{\beta}s' \stretcheq s\link{\alpha}{\beta}s'$]]
We have $s_1 = s\startchain{\alpha}\chainedlink{\silent}{\silent}\chainend{\beta}s'$
and $s'_1 = s\link{\alpha}{\beta}s'$.
By definition of valid link, the links $\link{\alpha}{\silent}$ and $\link{\silent}{\beta}$ are solid, i.e. \
$\alpha \neq \ \noact$ and $\beta \neq \ \noact$.
Then, since $\merges{s_2}{s_{1}}\neq \bot$ we infer that  $s_2 = s''\startchain{\noact}\chainedlink{\noact}{\noact}\chainend{\noact}s'''$ for some $s'',s''$ such that 
$\merges{s''}{s}\neq \bot$, and $\merges{s'''}{s'}\neq \bot$.
Then we take $s'_2=s'' \link{\noact}{\noact}s''' \blackstretcheq s_2$ and $s''_1=s'_1$ and we are done, since  $\merges{s_2}{s_{1}} 
= 
(\merges{s''}{s})\startchain{\alpha}\chainedlink{\silent}{\silent}\chainend{\beta}(\merges{s'''}{s'}) 
\stretcheq 
(\merges{s''}{s})\link{\alpha}{\beta}(\merges{s'''}{s'})
=
\merges{s'_2}{s''_{1}}$.


\item[case [$s\startchain{\alpha}\chainedlink{a}{\noact}\chainedlink{\noact}{a}\chainend{\beta}s' \stretcheq s \startchain{\alpha}\chainedlink{a}{a}\chainend{\beta}s'$]]
We have $s_1 = s\startchain{\alpha}\chainedlink{a}{\noact}\chainedlink{\noact}{a}\chainend{\beta}s'$
and $s'_1 = s\startchain{\alpha}\chainedlink{a}{a}\chainend{\beta}s'$.
Then, there are two possibilities: either
(a) $s_2 = s''\startchain{\noact}\chainedlink{\noact}{\noact}\chainedlink{\noact}{\noact}\chainend{\noact}s'''$, or (b)
$s_2 = s''\startchain{\noact}\chainedlink{\noact}{a}\chainedlink{a}{\noact}\chainend{\noact}s'''$
with
$\merges{s''}{s}\neq \bot$, and $\merges{s'''}{s'}\neq \bot$.
In the case (a), we take  $s'_2 = s''\startchain{\noact}\chainedlink{\noact}{\noact}\chainend{\noact}s''' \blackstretcheq s_2$  and $s''_1 = s'_1$ and we are done, since
$ \merges{s_2}{s_1} =(\merges{s''}{s})\startchain{\alpha}\chainedlink{a}{\noact}\chainedlink{\noact}{a}\chainend{\beta}(\merges{s'''}{s'}) \stretcheq (\merges{s''}{s})\startchain{\alpha}\chainedlink{a}{a}\chainend{\beta}(\merges{s'}{s'''})  = \merges{s'_2}{ s''_1}$.
In the case (b) we take $s'_2 = s_2$ and $s''_1 = s\startchain{\alpha}\chainedlink{a}{\noact}\chainedlink{\noact}{a}\chainend{\beta}s' \blackstretcheq s'_1$ and we are done, since 
$ \merges{s_2}{s_1} =(\merges{s''}{s})\startchain{\alpha}\chainedlink{a}{a}\chainedlink{a}{a}\chainend{\beta}(\merges{s'''}{s'}) \stretcheq (\merges{s''}{s})\startchain{\alpha}\chainedlink{a}{a}\chainedlink{a}{a}\chainend{\beta}(\merges{s'}{s'''})  = \merges{s'_2}{ s''_1}$.
\end{description}

\end{proof}

We recall Lemma~\ref{lem:reduce}
(the original appears on p.\,\pageref{lem:reduce}).
\lemmatrentotto*

\begin{proof}
The proof is similar to the ones of Lemma~\ref{lemma:rename} (point $v$) since, 
by definition $\phi(\silent) = \silent$ and $\phi({\noacts})\ =\ \noacts$, then the equivalence relation 
$\stretcheq$ is not affected by $\phi$.
\end{proof}

We recall Theorem~\ref{theo:CNAcong}
(the original appears on p.\,\pageref{theo:CNAcong}).
\theoremquarantatre*
\begin{proof}
We complete here the proof outlined at p.\,\pageref{theo:CNAcong}, by
giving the details of the case for recursion.
\begin{description}
\item[Recursion]
Let $E$ and $F$ be two processes that invoke the process identifier $X$.
Assume that for any process $P$ we have  $E\{P \slash X\} \simnet F\{P \slash X\}$.
We want to prove that, given  the process definitions
$A\defeq E\{A \slash X\}$, $B\defeq F\{B \slash X\}$, then $A \simnet B$.  
The proof proceeds by showing that:
\begin{enumerate}
\item If $A \defeq Q$ is a process definition, then $A  \simnet Q$.
\item Given the process definitions $A\defeq E\{A \slash X\}$ and $B\defeq F\{B \slash X\}$, for any process $G$ that invokes $X$ we have $G\{A \slash X\} \simnet 
G\{B \slash X\}$.
\end{enumerate}
Then, we have $A \simnet E\{A \slash X \} \simnet E\{B \slash X\}  \simnet F\{B \slash X\} \simnet B$.
The proof of~(1) is immediate by rule \textit{(Ide)}, as $A$ and $Q$ have exactly the same transitions, while the proof of~(2) proceeds in the standard way exploiting induction on derivations as detailed below.


Let $\mathbf{R}_{ctx} \defeq \{(G\{A \slash X \} ,G\{B \slash X\} ) \mid \mbox{$G$ is a process that possibly invokes $X$}\}$. 
Let $\mathbf{R}_{upto} \defeq\; \simnet\, \circ\; \mathbf{R}_{ctx}\; \circ\, \simnet$. 
Note that  $\mathbf{R}_{ctx}$ includes the identity relation when taking $G$ with no occurrence of $X$.
Moreover,  $\mathbf{R}_{ctx} \subseteq \mathbf{R}_{upto}$, 
$\simnet \subseteq \mathbf{R}_{upto}$ and $\mathbf{R}_{upto}\; \circ\, \simnet = \mathbf{R}_{upto}$ because 
$\simnet$ is an equivalence relation (and thus transitively closed).
We prove that $\mathbf{R}_{upto}$ is a network bisimulation.
To this aim, it is enough to consider a generic pair of processes $(G\{A \slash X \} ,G\{B \slash X\})$ in $\mathbf{R}_{ctx}$ and prove that whenever $G\{A \slash X \}\xrightarrow{s} P'$ 
then there are some $s'$ and $Q'$ such that $G\{B \slash X\}\xrightarrow{s'} Q'$, $\reduce{s}=\reduce{s'}$ and $(P',Q')\in \mathbf{R}_{upto}$.
We proceed by induction on the derivation of the transition $G\{A \slash X \}\xrightarrow{s} P'$, 
by considering the possible shapes of $G$.
\begin{description}
\item[$G=X$:] 
We have $G\{A \slash X \} = A$. 
Since $A\xrightarrow{s} P'$ and $A\defeq E\{A \slash X \}$, 
it means that $E\{A \slash X \} \xrightarrow{s} P'$ with a shorter derivation than $A\xrightarrow{s} P'$.
Hence, by inductive hypothesis, there are $s''$ and $Q''$ such that 
$E\{B \slash X\}\xrightarrow{s''} Q''$, $\reduce{s}=\reduce{s''}$ and $(P',Q'')\in \mathbf{R}_{upto}$.
Since  $E\{P \slash X\} \simnet F\{P \slash X\}$ for any process $P$, we have in particular $E\{B \slash X\} \simnet F\{B \slash X\}$.
So there are $s'$ and $Q'$ such that 
$F\{B \slash X\}\xrightarrow{s'} Q'$, $\reduce{s''}=\reduce{s'}$ and $Q'' \simnet Q'$.
Since $B\defeq F\{B \slash X\}$, by applying rule \textit{(Ide)} we have $B\xrightarrow{s'} Q'$.
We conclude by noting that $G\{B \slash X\} = B$, $\reduce{s}=\reduce{s''}=\reduce{s'}$ and that $(P',Q') \in \mathbf{R}_{upto}\, \circ \simnet = \mathbf{R}_{upto}$.

\item[$G=\ell.G'$:]
We have $G\{A \slash X \}= \ell.(G'\{A \slash X \})$ and thus $P' =  G'\{A \slash X \}$.
Moreover $G\{B \slash X\} = \ell.(G'\{B \slash X\}) \xrightarrow{s} G'\{B \slash X\}$ and by definition of $\mathbf{R}_{upto}$
we have $(G'\{A \slash X \},G'\{B \slash X\})\in \mathbf{R}_{ctx} \subseteq \mathbf{R}_{upto}$.

\item[$G=G_1 + G_2$:]
We have $G\{A \slash X \} = G_1\{A \slash X \} + G_2\{A \slash X \}$.
Since we have $G\{A \slash X \}\xrightarrow{s} P'$ there are two possibilities: 
either $G_1\{A \slash X \}\xrightarrow{s} P'$ or $G_2\{A \slash X\}\xrightarrow{s} P'$ (with shorter derivations).
Without loss of generality, let us consider just the first case.
By inductive hypothesis, there are $s'$ and $Q'$ such that 
$G_1\{B \slash X\}\xrightarrow{s'} Q'$, $\reduce{s}=\reduce{s'}$ and $(P',Q') \in \mathbf{R}_{upto}$.
Then, by rule \textit{(Lsum)},  $G\{B \slash X\}= G_1\{B \slash X\} + G_2\{B \slash X\}\xrightarrow{s'} Q'$.

\item[$G=\restrict{a}G'$:]
We have $G\{A \slash X \} = \restrict{a}(G'\{A \slash X\})$. 
Thus $P'=\restrict{a}P''$ and $s=\restrict{a}s''$ for some $P''$ and $s''$ such that 
$G'\{A \slash X \}\xrightarrow{s''} P''$ (with a shorter derivation).
By inductive hypothesis, there are $s''_1,Q''$ such that 
$G'\{B \slash X\}\xrightarrow{s''_1} Q''$, $\reduce{s''}=\reduce{s''_1}$ and $(P'',Q'') \in \mathbf{R}_{upto}$.
By Lemma~\ref{lem:reswhite}, there exists $s''_2 \blackstretcheq s''_1$ such that
 $\restrict{a}s''_2\neq \bot$ and $\restrict{a}s''_2 \stretcheq \restrict{a}s''$.
By the Accordion Lemma~\ref{lem:stretchlts}, $G'\{B \slash X\}\xrightarrow{s''_2} Q''$.
Then, we take $s' = \restrict{a}s''_2$ and $Q' = \restrict{a}Q''$ and by rule \textit{(Res)}
$G\{B \slash X\}= \restrict{a}(G'\{B \slash X\})\xrightarrow{s'} Q'$.
Clearly $\reduce{s}=\reduce{s'}$.
To see that $(P',Q')\in \mathbf{R}_{upto}$ we note that by $(P'',Q'') \in \mathbf{R}_{upto}$ there is some $H$ such that 
$P'' \simnet H\{A \slash X \}$ and $H\{B \slash X\} \simnet Q''$. Then, as $\simnet$ is a congruence w.r.t.\ restriction, 
$P' = \restrict{a}P'' \simnet \restrict{a}H\{A \slash X \}$ and $\restrict{a}H\{B \slash X\} \simnet \restrict{a}Q'' = Q'$ and we are done.

\item[\protect{$G=G'[\phi]$}:]
This case in analogous to the previous one and thus omitted.

\item[$G=G_1 | G_2$:]
We have $G\{A \slash X \} = G_1\{A \slash X \} | G_2\{A \slash X\}$.
Since $G\{A \slash X \}\xrightarrow{s} P'$ we have three cases: 
(i) $G_1\{A \slash X \}\xrightarrow{s} P'_1$  and $P' = P'_1 | G_2\{A \slash X \}$, or
(ii) $G_2\{A \slash X \}\xrightarrow{s} P'_2$  and $P' = G_1\{A \slash X \} | P'_2$, or
(iii) $G_1\{A \slash X \}\xrightarrow{s_1} P'_1$, $G_2[\{A \slash X \}\xrightarrow{s_2} P'_2$  and $P' = P'_1 | P'_2$ with $s = \merges{s_1}{s_2}$.
Without loss of generality, let us focus on the third case, which is the more involved.
By inductive hypothesis, there are $s''_i,Q'_i$ with
$G_i\{B \slash X\}\xrightarrow{s''_i} Q'_i$, $\reduce{s_i}=\reduce{s''_i}$ and $(P'_i,Q'_i) \in \mathbf{R}_{upto}$ for $i=1,2$.
By Lemma~\ref{lem:mergewhite}, we know that $s''_{1}$ and $s''_{2}$ can be stretched respectively to $s'_{1}\blackstretcheq s''_{1}$ and $s'_{2}\blackstretcheq s''_2$ so that $\merges{s'_{1}}{s'_{2}}$ is defined and
$\reduce{\merges{s_{1}}{s_{2}}} = \reduce{\merges{s'_{1}}{s'_{2}}}$.
By the Accordion Lemma~\ref{lem:stretchlts},
$G_i\{B \slash X\}\xrightarrow{s'_i} Q'_i$ for $i=1,2$ and 
by rule \textit{(Par)}, we get
$G\{B \slash X\} = G_1\{B \slash X\} | G_2\{B \slash X\}\xrightarrow{s'} Q'$ with $s' =\merges{s'_1}{s'_2}$ and $Q'= Q'_1|Q'_2$.
To see that $(P',Q')\in \mathbf{R}_{upto}$ we note that, for $i=1,2$, by $(P'_i,Q'_i) \in \mathbf{R}_{upto}$ there is some $H_i$ such that 
$P'_i \simnet H_i\{A \slash X \}$ and $H_i\{B \slash X\} \simnet Q'_i$. Then, as $\simnet$ is a congruence w.r.t. parallel composition, 
$P' = P'_1|P'_2 \simnet H_1\{A \slash X \}|H_2\{A \slash X \}$ and $H_1\{B \slash X\}|H_2\{B \slash X\} \simnet Q'_1|Q'_2 = Q'$ and we are done.

\item[$G=C$:]
The simplest case is when $G$ is a constant $C$ associated with a definition $C\defeq R$.
In fact, we have $G\{A \slash X \} = C = G\{B \slash X\}$ 
and we conclude by taking $s'=s$ and $Q'=P'$.
\end{description}
\end{description}
\end{proof}

We recall Lemma~\ref{lem:linksubst}
(the original appears on p.\,\pageref{lem:linksubst}).
\lemmaquarantacinque*
\begin{proof}
The proof proceeds by cases on the axioms of $\blackstretcheq$ and $\stretcheq$, see Definitions~\ref{def:black} and~\ref{def:accordion}. We prove only one case, the remaining ones are similar.
\begin{description}
\item[case [$s\startchain{\alpha}\chainedlink{\silent}{\silent}\chainend{\beta}s' \stretcheq s\link{\alpha}{\beta}s'$]] We have  $s_1 = s\startchain{\alpha}\chainedlink{\silent}{\silent}\chainend{\beta}s'$ and $s_2 = s\link{\alpha}{\beta}s'$. Let $a$, $b$ two channel names then, by definition of substitution, we have 
\begin{eqnarray*}
s_1\{\subst{b}{a}\} & = & (s\startchain{\alpha}\chainedlink{\silent}{\silent}\chainend{\beta}s')\{\subst{b}{a}\}\\
& = & (s[\{\subst{b}{a}\})\ (\link{\alpha}{\silent}\{\subst{b}{a}\})\ (\link{\silent}{\beta}\{\subst{b}{a}\})\ (s'\{\subst{b}{a}\})\\
& = & (s\{\subst{b}{a}\})\ \startchain{\alpha \{\subst{b}{a}\}}\chainedlink{\silent}{\silent}\chainend{\beta\{\subst{b}{a}\}}\ (s'\{\subst{b}{a}\})\\ 
& \stretcheq& (s\{\subst{b}{a}\})\ \startchain{\alpha \{\subst{b}{a}\}}\chainend{\beta\{\subst{b}{a}\}}\ (s'\{\subst{b}{a}\}) \\
& = & ( s\link{\alpha}{\beta}s') \{\subst{b}{a}\} \\
& = & s_2 \{\subst{b}{a}\}.
\end{eqnarray*} 
\end{description}
\end{proof}

We recall Lemma~\ref{lem:esempio}
(the original appears on p.\,\pageref{lem:esempio}).
\lemmacinquantuno*

\begin{proof} We prove the two implications separately.
\begin{enumerate}
\item
The proof is by structural induction on the composite infrastructure $R(\widetilde{a},\widetilde{b})$.

If it is a basic infrastructure 
$R(\widetilde{a},\widetilde{b}) = \ell_1.R(\widetilde{a},\widetilde{b}) + ... + \ell_k.R(\widetilde{a},\widetilde{b})$, then it must be the case that $R' = R(\widetilde{a},\widetilde{b})$ and $s \blackstretcheq \ell_h = \link{a_{i_h}}{b_{j_h}}$ for some $h\in [1,k]$. Clearly $||s|| = 1$ and in fact there is a path of length $1$ from $a_{i_h}$ to $b_{j_h}$ in the graph $\mathcal{G}(R(\widetilde{a},\widetilde{b}))$.

If it is the composition
$$
\restrict{\widetilde{c}}(Q(\widetilde{a},\widetilde{c})\ |\ S(\widetilde{c},\widetilde{b}))
$$
of two infrastructures, then it must be the case that $s = \restrict{\widetilde{c}}(\merges{s_1}{s_2})$ for some $s_1$ and $s_2$ such that there exists $Q'$ and $S'$ with 
$Q(\widetilde{a},\widetilde{c}) \xrightarrow{s_1} Q'$,  
$S(\widetilde{c},\widetilde{b}) \xrightarrow{s_2} S'$ and
$R' = \restrict{c}(Q'\ |\ S')$.
Then, by inductive hypotheses, we know that
\begin{itemize}
\item $Q' = Q(\widetilde{a},\widetilde{c})$ and there exist two nodes $a_i\in \widetilde{a}$ and $c_h\in \widetilde{c}$ of the graph $\mathcal{G}(Q(\widetilde{a},\widetilde{c}))$ (and thus also in $\mathcal{G}(R(\widetilde{a},\widetilde{b}))$) such that $s_1 \stretcheq \link{a_i}{c_{h_1}}$ and there is a path from $a_i$ to $c_{h_1}$ whose size is $||s_1||$.
\item $S' = S(\widetilde{c},\widetilde{b})$ and there exist two nodes $c_{h_2}\in \widetilde{c}$ and $b_j\in \widetilde{b}$ of the graph $\mathcal{G}(S(\widetilde{c},\widetilde{b}))$ (and thus also in $\mathcal{G}(R(\widetilde{a},\widetilde{b}))$) such that $s_2 \stretcheq \link{c_{h_2}}{b_j}$ and there is a path from $c_{h_2}$ to $b_j$ whose length is $||s_2||$.
\end{itemize}
Since channels $\widetilde{c}$ are restricted and $\restrict{c}(\merges{s_1}{s_2})$ is well defined, it must be the case that $h_1 = h_2$ and $\merges{s_1}{s_2} \stretcheq \startchain{a_i}\chainedlink{c_{h_1}}{c_{h_1}}\chainend{b_j}$. Therefore 
$R' = \restrict{\widetilde{c}}(Q'\ |\ S') = \restrict{\widetilde{c}}(Q(\widetilde{a},\widetilde{c})|S(\widetilde{c},\widetilde{b})) = R(\widetilde{a},\widetilde{b})$,
$s = \restrict{\widetilde{c}}(\merges{s_1}{s_2}) \stretcheq \link{a_i}{b_j}$, $||s|| = ||s_1|| + ||s_2||$ and the two paths from $a_i$ to $c_{h_1}$ and from $c_{h_1}$ to $b_j$ can be composed to form a path from $a_i$ to $b_j$ whose length is exactly $||s||$. 

\item
The proof is by structural induction on the composite infrastructure $R(\widetilde{a},\widetilde{b})$.

If it is a basic infrastructure 
$R(\widetilde{a},\widetilde{b}) = \ell_1.R(\widetilde{a},\widetilde{b}) + ... + \ell_k.R(\widetilde{a},\widetilde{b})$ then the path from $a_i$ to $b_j$ in the graph must have length one and be in correspondence to one of the links offered by $R(\widetilde{a},\widetilde{b})$.

If it is the composition 
$$
\restrict{\widetilde{c}}(Q(\widetilde{a},\widetilde{c})\ |\ S(\widetilde{c},\widetilde{b}))
$$
of two infrastructures, then it must be the case that the path from $a_i$ to $b_j$ with length $n$ can be split in two parts: from $a_i$ to some $c_h$ (contained in the graph $\mathcal{G}(Q(\widetilde{a},\widetilde{c}))$) and from $c_h$ to $b_j$ (contained in the graph $\mathcal{G}(S(\widetilde{c},\widetilde{b}))$), respectively with lengths $n_1$ and $n_2$ such that $n=n_1+n_2$. Then, by the inductive hypotheses, it must be the case that
\begin{itemize}
\item $Q(\widetilde{a},\widetilde{c}) \xrightarrow{s_1} Q(\widetilde{a},\widetilde{c})$  with $s_1 \stretcheq \link{a_i}{c_h}$ and $||s_1|| = n_1$.
\item$S(\widetilde{c},\widetilde{b}) \xrightarrow{s_2} S(\widetilde{c},\widetilde{b})$  with $s_2 \stretcheq \link{c_h}{b_j}$ and $||s_2|| = n_2$.
\end{itemize}
Then we can find two suitable chains $s'_1 \blackstretcheq s_1$ and $s'_2 \blackstretcheq s_2$ such that $\merges{s'_1}{s'_2}$ is well defined and $\merges{s'_1}{s'_2} \stretcheq \startchain{a_i}\chainedlink{c_h}{c_h}\chainend{b_j}$. Therefore we take $s = \restrict{\widetilde{c}}(\merges{s'_1}{s'_2}) \stretcheq \link{a_i}{b_j}$ with $||s|| = ||s'_1|| + ||s'_2|| = ||s_1|| + ||s_2|| = n_1 + n_2 = n$ and by the rules of the operational semantics we have $R(\widetilde{a},\widetilde{b}) \xrightarrow{s} R(\widetilde{a},\widetilde{b})$.
\end{enumerate}
\end{proof}

\end{document}